\documentclass[letterpaper]{scrartcl}
\pdfoutput=1
\usepackage[total={16.2cm, 23.5cm}]{geometry}
\usepackage{microtype}
\usepackage{amsmath,amsthm,amssymb}
\usepackage{tikz}
\usetikzlibrary{patterns,decorations.pathreplacing}
\usepackage[pagebackref]{hyperref}
\hypersetup{
pdfencoding=auto, psdextra,
colorlinks=true,citecolor=green!50!black,linkcolor=red!60!black}

\usepackage[sort&compress,nameinlink]{cleveref}
\usepackage{nicefrac}
\usepackage{natbib}
\usepackage{paralist}
\usepackage{xcolor}
\usepackage{colortbl}
\usepackage{subcaption}
\usepackage{booktabs}
\usepackage[linecolor=green!70!black, backgroundcolor=green!10, bordercolor=black, textsize=small]{todonotes}
\definecolor{winered}{rgb}{0.6,0.1,0.1}

\renewcommand*{\le}{\leqslant}
\renewcommand*{\leq}{\leqslant}
\renewcommand*{\ge}{\geqslant}
\renewcommand*{\geq}{\geqslant}
\renewcommand{\epsilon}{\varepsilon}

\newcommand{\myemph}[1]{{\color{winered}\emph{#1}}}
\newcommand{\naturals}{{{\mathbb{N}}}}

\theoremstyle{definition}
\newtheorem{definition}{Definition}
\newtheorem{example}{Example}
\newtheorem{remark}{Remark}
\theoremstyle{theorem}
\newcommand{\score}{{{\mathrm{sc}}}}
\newtheorem{theorem}{Theorem}

\newtheorem{proposition}{Proposition}

\makeatletter
\newtheorem*{rep@theorem}{\rep@title}
\newcommand{\newreptheorem}[2]{\newenvironment{rep#1}[1]{\def\rep@title{#2 \ref{##1}}\begin{rep@theorem}}{\end{rep@theorem}}}
\makeatother
\newreptheorem{theorem}{Theorem}
\newreptheorem{proposition}{Proposition}
\newreptheorem{lemma}{Lemma}
\newreptheorem{corollary}{Corollary}

\sloppy

\newcommand{\argmax}{{\mathrm{argmax}}}
\newcommand{\argmin}{{\mathrm{argmin}}}

\begin{document}
	\title{Proportionality and the Limits of Welfarism}
	\author{Dominik Peters \\ {\large Harvard University} \\ {\large dpeters@seas.harvard.edu}
	 \and Piotr Skowron \\ {\large University of Warsaw} \\ {\large p.skowron@mimuw.edu.pl}}
	\date{\normalsize First Published: November 2019. Last Update: October 2022.}
	\maketitle

\begin{abstract}
We study two influential voting rules proposed in the 1890s by Phragm\'en and Thiele, which elect a committee or parliament of $k$ candidates which proportionally represents the voters. Voters provide their preferences by approving an arbitrary number of candidates. Previous work has proposed proportionality axioms satisfied by Thiele's rule (now known as Proportional Approval Voting, PAV) but not by Phragm\'en's rule. By proposing two new proportionality axioms (laminar proportionality and priceability) satisfied by Phragm\'en but not Thiele, we show that the two rules achieve two distinct forms of proportional representation. Phragm\'en's rule ensures that all voters have a similar amount of influence on the committee, and Thiele's rule ensures a fair utility distribution.

Thiele's rule is a welfarist voting rule (one that maximizes a function of voter utilities). We show that no welfarist rule can satisfy our new axioms, and we prove that no such rule can satisfy the core. Conversely, some welfarist fairness properties cannot be guaranteed by Phragm\'en-type rules. This formalizes the difference between the two types of proportionality.  We then introduce an attractive committee rule which satisfies a property intermediate between the core and extended justified representation (EJR). It satisfies laminar proportionality, priceability, and is computable in polynomial time. We show that our new rule provides a logarithmic approximation to the core. On the other hand, PAV provides a factor-2 approximation to the core, and this factor is optimal for rules that are fair in the sense of the Pigou--Dalton principle.
\end{abstract}

\section{Introduction}

In the mid-1890s, two Nordic mathematicians were engaged in a somewhat heated academic exchange about voting. Sweden was soon to introduce universal suffrage, and the conservatives expected to become a minority and were thus pushing for electoral reform: what was needed was an election system that would guarantee \myemph{proportional representation} in parliament. 
Edvard Phragm\'en proposed a voting rule in 1894, and Thorvald N. Thiele responded with an alternative rule in 1895. They each proved that their rule guarantees proportionality on well-behaved inputs. Thiele's rule is simpler to understand, and thus it was popular and soon applied. Phragm\'en thought this was premature, and in 1899 published some examples to show this his rule was better. Many years later, it is still not clear who came up with the better system.\footnote{This historical account follows \citet{Janson16arxiv}.}

Both rules are of the same type, and are based on what is now called \myemph{approval voting}: Each voter $i \in N$ gets a ballot listing all $m$ candidates, and is allowed to approve an arbitrary subset~$A_i$ of them. The voting rule then selects a winning \myemph{committee} (or parliament) of $k$ candidates. (For an overview of the literature on approval-based committee elections we refer to the recent survey by \citet{lac-sko:abc-survey}.)

After the initial papers, rigorous study of the rules was interrupted until a few years ago, when Thiele's method was rediscovered---today, it is commonly called \myemph{Proportional Approval Voting} (PAV)---and there has been intense interest in formalizing the idea of proportional representation. A one-dimensional hierarchy of axioms has been proposed (with ``extended justified representation'' (EJR) being the strongest axiom); PAV does very well on those axioms, and Phragm\'en's rule perhaps a bit less.

What has not been recognized is that Thiele's and Phragm\'en's rules achieve proportionality in two fundamentally distinct senses. These two philosophies cannot be captured by the existing hierarchy, and are formally incompatible. This paper attempts to clarify the difference. It shows that the type of proportionality that PAV implements can be viewed as a fair distribution of welfare, whereas the one implemented by Phragm\'en's rules as a fair distribution of voting power.

\subsection*{An example} 

We begin our discussion by considering an example where the two rules behave differently. We defer formal definitions until later, and only give an intuitive sense of the rules' behavior.
There are $n=6$ voters and $m = 15$ candidates running for $k = 12$ seats. The figure below indicates voters' approval sets, to be read in columns. For example, voter $v_1$ approves $\{c_1, c_2, c_3, c_4\}$.

\[
\begin{tikzpicture}
[yscale=0.43,xscale=0.78,voter/.style={anchor=south, yshift=-7pt}, select/.style={fill=blue!10}, c/.style={anchor=south, yshift=1.5pt, inner sep=0}]
	\draw[select] (0,0) rectangle (3,1);
	\draw[select] (0,1) rectangle (3,2);
	\draw[select] (0,2) rectangle (3,3);
	\draw[select] (0,3) rectangle (1,4);
	\draw[select] (1,3) rectangle (2,4);
	\draw[select] (2,3) rectangle (3,4);
	\node at (1.5,0.42) {$c_1$};
	\node at (1.5,1.42) {$c_2$};
	\node at (1.5,2.42) {$c_3$};
	\node at (0.5,3.42) {$c_4$};
	\node at (1.5,3.42) {$c_5$};
	\node at (2.5,3.42) {$c_6$};
	\foreach \x in {3,4,5}
		{
		\foreach \y in {0,1}
			{
			\draw[select] (\x,\y) rectangle (\x+1,\y+1);
			}
		\foreach \y in {2}
			{
			\draw (\x,\y) rectangle (\x+1,\y+1);
			}
		}
	\node at (3.5,0.42) {$c_{7}$};
	\node at (3.5,1.42) {$c_{8}$};
	\node at (3.5,2.42) {$c_{9}$};
	\node at (4.5,0.42) {$c_{10}$};
	\node at (4.5,1.42) {$c_{11}$};
	\node at (4.5,2.42) {$c_{12}$};
	\node at (5.5,0.42) {$c_{13}$};
	\node at (5.5,1.42) {$c_{14}$};
	\node at (5.5,2.42) {$c_{15}$};
	\foreach \i in {1,...,6}
		\node[voter] at (\i-0.5,-1) {$v_\i$};
		
	\node at (3, -2.5) {(a) Phragm\'en's rule};
\end{tikzpicture}
\qquad\quad
\begin{tikzpicture}
[yscale=0.43,xscale=0.78,voter/.style={anchor=south, yshift=-7pt}, select/.style={fill=blue!10}, c/.style={anchor=south, yshift=1.5pt, inner sep=0}]
	\draw[select] (0,0) rectangle (3,1);
	\draw[select] (0,1) rectangle (3,2);
	\draw[select] (0,2) rectangle (3,3);
	\draw (0,3) rectangle (1,4);
	\draw (1,3) rectangle (2,4);
	\draw (2,3) rectangle (3,4);
	\node at (1.5,0.42) {$c_1$};
	\node at (1.5,1.42) {$c_2$};
	\node at (1.5,2.42) {$c_3$};
	\node at (0.5,3.42) {$c_4$};
	\node at (1.5,3.42) {$c_5$};
	\node at (2.5,3.42) {$c_6$};
	\foreach \x in {3,4,5}
		{
		\foreach \y in {0,1,2}
			{
			\draw[select] (\x,\y) rectangle (\x+1,\y+1);
			}
		}
	\node at (3.5,0.42) {$c_{7}$};
	\node at (3.5,1.42) {$c_{8}$};
	\node at (3.5,2.42) {$c_{9}$};
	\node at (4.5,0.42) {$c_{10}$};
	\node at (4.5,1.42) {$c_{11}$};
	\node at (4.5,2.42) {$c_{12}$};
	\node at (5.5,0.42) {$c_{13}$};
	\node at (5.5,1.42) {$c_{14}$};
	\node at (5.5,2.42) {$c_{15}$};
	\foreach \i in {1,...,6}
		\node[voter] at (\i-0.5,-1) {$v_\i$};
		
	\node at (3, -2.5) {(b) Thiele's rule (PAV)};
\end{tikzpicture}
\]
The committees selected by the two rules are shaded in blue. Which is the better choice?

Phragm\'en's rule looks at the given profile and treats voters $v_1$, $v_2$, $v_3$ as a group, since their approved candidates are disjoint from others'. Since these voters form half the electorate, they get to decide half of the available seats, so all six candidates are elected. The rule treats the remaining voters as singleton groups, and thus assigns them each a sixth of the available seats.  We will later formalize this behavior in an axiom called \myemph{laminar proportionality}, which encodes a kind of ``procedural'' proportionality. The committee (a) can also be justified as an outcome that results if we assign each voter equal ``power'', which we will formalize later using virtual money that voters can spend to elect candidates; we will say that the committee (a) is \myemph{priceable}. Because the first three voters approve three common candidates, they can share the cost of electing those candidates, and still have a sufficient budget to afford $c_4$, $c_5$, and $c_6$.

In comparison, the committee (b) will not result if we give each voter equal power: the first three and the last three voters have the same overall budget, yet the former group gets only three candidate, and the latter nine. Thus, (b) is not priceable. From a related point of view, committee~(b) is \myemph{unstable} in a sense borrowed from cooperative game theory: the first half of the voters can be said to ``deserve'' to decide six seats, and they might deviate by proposing six candidates on their own. Thus committee (b) is not in the \myemph{core}.

Why, then, does Thiele's rule select (b)? The reason is that PAV does not aim for fair distribution of power, but rather for ``welfare proportionality''. Welfare in the committee context is based on the number of candidates in the committee that the voter approves. Looking at the welfare vectors induced by (a) and (b), Thiele's choice seems more compelling according to several criteria. First, committee (b) is better than (a) according to \citeauthor{kelly1997charging}'s \citeyearpar{kelly1997charging} \myemph{proportional fairness} criterion: if we moved from (b) to (a), we would make three voters worse off by 33\% and make three candidates better off by 25\%, which on average is a worsening. Second, committees (a) and (b) have the same utilitarian welfare (the sum of utilities is 18), but (b) spreads representation more equally. Finally, committee (a) violates the \myemph{Pigou--Dalton principle}: by removing candidate $c_6$ and adding $c_9$, we keep the welfare of all but two voters constant, but have reduced the inequality between $v_3$ and $v_4$. Since it allows such a transfer of utility, committee (a) seems anti-egalitarian.

Thus, we see that choice (a) follows the spirit of proportionality very closely, and it splits the committee seats in proportion to group sizes. Choice (b) places a greater emphasis on welfare, and aims for fairness in those terms.

\subsection*{Proportionality axioms}

\paragraph{Thiele's method.}
To decide which $k$ candidates are picked, PAV solves an optimization problem: it elects a committee $W$ of $k$ candidates maximizing the following quantity: 
\[ \sum_{i \in N} 1 + \frac12 + \frac13 + \cdots + \frac{1}{|W \cap A_i|}.  \]
Thus, the winning committee $W$ maximizes the sum of voter ``satisfaction'', where satisfaction is an increasing function in the number $|W \cap A_i|$ of committee members that the voter approves. Specifically, Thiele noticed that using the harmonic number guarantees that the winning committee is \emph{proportional}. By ``proportional'', he meant that in situations where each voter approves exactly the members of a single party, the committee is divided in proportion to party support. For example, suppose there are $k = 100$ seats, and 30\% of the voters approve candidates $\{a_1, \dots, a_{100}\}$, another 30\% approve $\{b_1, \dots, b_{100}\}$, and the remaining 40\% approve $\{c_1, \dots, c_{100}\}$. Then Thiele's method elects 30 candidates $a_j$, 30 candidates $b_j$, and 40 candidates $c_j$. In any such ``completely partisan'' preference profile, Thiele's method identifies the proportional solution, and one can show that the harmonic numbers are the unique choice in satisfaction function with this property~\citep{bri-las-sko:c:apportionment,lac-sko:t:approval-thiele}.\footnote{Thiele only considered profiles where for each party, the fraction of voters supporting it is a multiple of $1/k$, thus avoiding rounding issues. To uniquely characterize harmonic numbers, one needs to also specify what happens in cases rounding issues arise, for example by specifying that the rule should act according to the D'Hondt apportionment method.} 

Thiele's notion of proportionality is weak and only applies to cases where different votes are disjoint. We would also like to have proportionality guarantees when different approval sets may intersect. In this general case, defining proportionality is more subtle. Under one intuition, a ``cohesive'' group of voters (whose approval sets have a large intersection) should be represented in proportion to their size. Recent research has formalized this idea using notions like \emph{extended justified representation (EJR)}~\citep{justifiedRepresenattion}, which PAV satisfies. Further, PAV guarantees that cohesive groups obtain high average satisfaction~\citep{justifiedRepresenattion, AEHLSS18}.

We say that a rule is \emph{welfarist} if it decides the winning committee only based on welfare information. Formally, a welfarist rule selects the committees that maximize some function of the \emph{welfare vector} $(|W \cap A_i|)_{i \in N}$ of a committee $W$, which specifies the number of approved committee members for each voter. Clearly, PAV is a welfarist rule.  Now, it is obvious that notions like Pareto-optimality or the Pigou--Dalton principle can be captured by welfarist rules. In contrast, proportionality on first sight requires a look at the internal structure of reported preferences, if only to identify cohesive groups that need to be represented. But PAV shows that welfare information is sufficient to give strong proportionality guarantees, such as EJR. \emph{Our aim is to explore the limits of welfarism in capturing various forms of proportionality.}

\paragraph{Phragm\'en's objection.}
In one possible description of Phragm\'en's rule, imagine that each voter has a bank account, initially with no money in it. At a common and constant rate, each account is continuously topped up with money. As soon as there is a candidate $c$ whose approvers have a combined account balance summing to at least (by continuity, equal to) \$1, the candidate $c$ is elected and removed, and the balance of each approving voter is set to \$0. We then continue. Phragm\'en's rule returns the committee consisting of the first $k$ candidates elected this way. Phragm\'en called the virtual money ``voting power''. 

In an 1899 article, Phragm\'en points out some downsides of PAV. He considers an example (here slightly adapted to avoid ties) where 3000 voters approve $\{a, b_1, b_2, b_3, b_4\}$ and 1000 voters approve $\{a, c_1, c_2, c_3, c_4\}$. There are $k = 5$ seats to be filled. Intuitively, there are two parties: $B = \{b_1, b_2, b_3, b_4\}$, supported by 75\% of voters, and $C = \{c_1, c_2, c_3, c_4\}$, supported by 25\%. Voters are partisan and approve exactly one of the parties, except that there is a consensus candidate $a$ who is approved by all voters. Phragm\'en's rule elects a committee of the form $\{a, b_i, b_j, b_k, c_l\}$: one seat is filled with the consensus candidate, and the remaining 4 seats are filled by party members in proportion to the party's support. Intuitively, this is a proportional choice. However, the PAV score of $\{a, b_1, b_2, b_3, c_1\}$ is $3000 \cdot (1 + \frac12 + \frac13 + \frac14) + 1000 \cdot (1 + \frac12) = 7750$, while the PAV score of $\{a, b_1, b_2, b_3, b_4\}$ is $3000 \cdot (1 + \frac12 + \frac13 + \frac14 + \frac15) + 1000 \cdot 1 = 7850 > 7750$, so the latter is the committee chosen by PAV. 

Phragm\'en (1899) argues that PAV's behavior on the above example is undesirable,\footnote{Phragm\'en writes: ``As we can see, Thiele's method benefits the larger party at the expense of the smaller one. This obviously means that, when using Thiele's method, the smaller party could never come to an agreement wherein both parties vote for one or more candidates that are not aligned with either party. To me it has always seemed to be the most important, or in any case one of the most important, requirements to impose on a proportional election method, that it must not obstruct such agreements between parties. It therefore seems to me that the mentioned property of Thiele's method is a very serious flaw.'' In a sense, PAV removes incentives to compromise and rewards partisanship.}
 and indeed the committees returned by his rule in this scenario appear more intuitive.
On the other hand, Phragm\'en's rule fails the the EJR property mentioned above \citep{aaai/BrillFJL17-phragmen}. \emph{Our aim is to clarify the differences in the types of proportionality provided by the two rules.}

\subsection*{Summary of our main results}

\begin{table}[t]
	\centering
	\makebox[\textwidth][c]{
	\begin{tabular}{lccc}
		\toprule
		& Thiele's method (PAV) & Phragm\'en's method & Equal Shares \\
		\midrule
		laminar proportional & & $\checkmark$ & $\checkmark$ \\
		priceable & & $\checkmark$ & $\checkmark$ \\
		PJR & $\checkmark$ & $\checkmark$ & $\checkmark$ \\
		EJR & $\checkmark$ & & $\checkmark$ \\
		core with constrained deviations\!\!\! & & & $\checkmark$ \\
		core approximation & 2-approx. & ? & $O(\log k)$-approx. \\
		\midrule
		welfarist & $\checkmark$ & & \\
		Pareto-optimal & $\checkmark$ & & \\
		Pigou--Dalton & $\checkmark$ & & \\
		\midrule
		computation & NP-complete & polynomial time & polynomial time \\
		\bottomrule
	\end{tabular}
	}
	\caption{The rules we consider and properties that they satisfy.}\label{tab:properties_summary}
\end{table}

Based on Phragm\'en's arguments we identify some general classes of situations in which PAV seems to fail to provide proportionality. These failures inspire the definition of two new proportionality axioms, \emph{laminar proportionality} and \emph{priceability}, that illuminate the difference between the two rules: these axioms are satisfied by Phragm\'en's rule, but failed by Thiele's rule. We show that PAV fails these notions because they are intrinsically not welfarist: we prove that no welfarist rule can satisfy either of our two new axioms. 

This technique also helps explain why PAV fails to select committees in the \emph{core}~\citep{justifiedRepresenattion, FMS18}: A committee is said to be in the \emph{core} if no coalition of voters blocks it. Roughly speaking, a coalition of $\alpha\%$ of the voters can block a committee if they can find a selection $S$ of $\alpha\% \cdot k$ candidates such that every coalition member has more approved candidates in $S$ than in the committee. 
To date, it is unknown whether the core is always non-empty. The core implies EJR, but all known EJR rules fail the core.
We prove that no welfarist rule can satisfy the core property. Thus, in contrast to extended justified representation, the core is a non-welfarist property. While it remains an open question whether a core committee always exists, this result shows a kind of barrier to an existence proof.

On the other hand, we show that PAV provides a multiplicative approximation to the core: there does not exist a deviation in which all coalition members increase their utility by a factor of more than 2. Further, we show that every rule that provides a better approximation of the core property than 2, must fail the Pigou--Dalton principle of transfers, which formalizes a minimal kind of egalitarianism. Thus, according to our approximation measure, PAV comes closest to satisfy the core among rules that are fair according to the Pigou--Dalton requirement. 

While PAV fails laminar proportionality and priceability, it satisfies EJR. Phragm\'en's rule fails EJR. Might there be an incompatibility between these axioms? The answer is no, and we introduce an attractive committee rule, which we call the \myemph{Method of Equal Shares},\footnote{In earlier versions of this paper, the rule was called Rule X.} which satisfies all three properties. 
Further, it is computable in polynomial time, and is the first non-artificial polynomial-time rule satisfying EJR. In fact, the rule satisfies a strengthened axiom which is intermediate between the core and EJR. We also show that the rule provides a logarithmic approximation of the core property; our analysis is tight. 

The properties of the studied rules are summarized in \Cref{tab:properties_summary}. We complement these results with an additional discussion provided in the appendix. In \Cref{sec:overlapping_parties} we discuss a family of profiles that illustrates a difference between particular priceable rules, in particular between our new method and Phragm\'en's rule. In \Cref{sec:proportionality_and_disagreements} we provide an additional, and much more informal, discussion on the difference between the two types of proportionality offered by PAV and laminar proportional rules.

\section{The Model and Definitions of Rules}

For each $i \in \naturals$, we write $[i] = \{1, \ldots, i\}$. 

\subsection{Elections and preferences}

Given a set of \myemph{candidates} $C = \{c_1, \ldots, c_m\}$ and a set of \myemph{voters} $N = \{1, 2, \ldots, n\}$, an \myemph{approval preference profile} (or \myemph{profile}) is a list $P = (A_1,\dots,A_n)$, where $A_i \subseteq C$ is the approval ballot of the $i$-th voter, i.e., the set of candidates that voter $i$ finds acceptable. We write $N(c)$ for the set of voters who approve candidate $c$. A profile $P = (A_1,\dots,A_n)$ is \myemph{unanimous} if all approval ballots in $P$ are equal. If $P_1$ and $P_2$ are two profiles, then $P_1 + P_2$ is the profile obtained by concatenating the lists $P_1$ and $P_2$. 

An \myemph{election instance} (or \myemph{instance}) is a quadruple $(C, N, P, k)$, where $C$ is a set of candidates, $N$ is a set of voters, $P$ is a preference profile, and $k$ is the desired size of a committee to be elected. We will write $(P, k)$ for an instance when $C$ and $N$ are clear; they can usually be deduced from the profile: $N = \{i : A_i \in P\}$ and $C = \bigcup_{i \in N} A_i$. We will sometimes write $C(P)$ for the set of candidates on which $P$ is defined. We assume that $|C| \geq k$. A \myemph{committee} is a subset $W\subseteq C$ of the candidates with $|W| \le k$. Given a committee $W$, we say that a committee member $c \in W$ is a \myemph{representative} of voter $i$ if $c \in W \cap A_i$.

\subsection{Election rules}

An approval-based \myemph{committee rule} is a function $f$ that, given an instance $(P,k)$, returns a non-empty set of winning committees, each of size at most $k$.\footnote{It is often assumed that a committee rule can only return committees of size exactly $k$. We allow committees of size smaller than $k$ for technical reasons: it will make the definition of our new Method of Equal Shares more natural.} 
Typically, the output of $f$ will be singleton, but several committees may be tied.
Below, we recall definitions of two prominent election rules known in the literature, and introduce one additional rule which is new to this paper. 

\begin{description}
\item[Proportional Approval Voting (PAV).] Given an election instance $(C, N, P, k)$, for each committee $W \subseteq C$ we define its PAV score as:
\begin{align*}
\score_{\mathrm{PAV}}(W) = \sum_{i \in N} \score_{\mathrm{PAV}}(i, W), \quad \text{where} \quad \score_{\mathrm{PAV}}(i, W) = 1 + \frac12 + \frac13 + \cdots + \frac{1}{|W\cap A_i|} \text{.}
\end{align*}
PAV picks committees with the highest score: $\argmax_{W \subseteq C : |W| \le k}\score_{\mathrm{PAV}}(W)$.

\item[Phragm\'en's Sequential Rule.] Assume the voters continuously earn money with the speed of one dollar per time unit (the time is continuous). In the first time moment $t$ when there is a group of voters $S$ who all approve a not-yet-selected candidate $c$, and who have $\nicefrac{n}{k}$ dollars in total, the rule adds $c$ to the committee and asks the voters from $S$ to pay the total amount of $\nicefrac{n}{k}$ for $c$ (that is, the rule resets the balance of each voter from $S$); the other voters keep their so-far earned money. The rule stops when $k$ candidates are selected.\footnote{There exist equivalent definitions of this rule, for example in terms of ``voter load'', which are more in the literature~\citep{aaai/BrillFJL17-phragmen,Janson16arxiv}, but less intuitive and less convenient for our purposes.}

\begin{example}\label{ex:phragmen_definition}
Consider the following approval-based profile and let $k = 4$.
		
		\begin{center}
			\begin{tikzpicture}
			[yscale=0.47,xscale=0.75,voter/.style={anchor=south, yshift=-3pt}, select/.style={fill=green!10}, c/.style={anchor=south, yshift=2.5pt, inner sep=0}]
				\draw (0,0) rectangle (12,1);
				\draw (0,1) rectangle (12,2);
				\draw (0,2) rectangle (10,3);
				\draw (5,3) rectangle (15,4);
				\draw (5,4) rectangle (15,5);
				\node[c] at (6,0) {$c_1$};
				\node[c] at (6,1) {$c_2$};
				\node[c] at (5,2) {$c_3$};
				\node[c] at (10,3) {$c_4$};
				\node[c] at (10,4) {$c_5$};
				\node[voter] at (0.5,-1) {$v_1$};
				\node[voter] at (1.5,-1) {$v_2$};
				\node[voter] at (2.5,-1) {$v_3$};
				\node[voter] at (3.5,-1) {$v_4$};
				\node[voter] at (4.5,-1) {$v_5$};
				\node[voter] at (5.5,-1) {$v_6$};
				\node[voter] at (6.5,-1) {$v_7$};
				\node[voter] at (7.5,-1) {$v_8$};
				\node[voter] at (8.5,-1) {$v_9$};
				\node[voter] at (9.5,-1) {$v_{10}$};
				\node[voter] at (10.5,-1) {$v_{11}$};
				\node[voter] at (11.5,-1) {$v_{12}$};
				\node[voter] at (12.5,-1) {$v_{13}$};
				\node[voter] at (13.5,-1) {$v_{14}$};
				\node[voter] at (14.5,-1) {$v_{15}$};
			\end{tikzpicture}
		 \end{center}
Here, $c_1$ and $c_2$ are approved by the first 12 voters, $c_3$ is approved by the first 10 voters, and $c_4$ and $c_5$ are approved by the last 10 voters. The way Phragm\'en's Sequential Rule operates in this profile is depicted in the following figure (in the figure, the area of each polygon corresponding to a selected candidate equals $\nicefrac{15}{4}$).
		\begin{center}
			\begin{tikzpicture}
			[yscale=2.0,xscale=0.75,voter/.style={anchor=south, yshift=-3pt}, select/.style={fill=green!10}, c/.style={anchor=south, yshift=2.5pt, inner sep=0}]
				\draw (0,0) rectangle (12,0.3125);
				\draw (12,0) -- (15,0) -- (15,0.59375) -- (5,0.59375) -- (5,0.3125); 
                                 \draw (0,0.3125) -- (0,0.7890625) -- (12,0.7890625) -- (12,0.59375); 
                                 \draw (15,0.59375) -- (15,1.10546875) -- (5,1.10546875) -- (5,0.7890625); 
                                 
                                 \node[c] at (6,0.05) {$c_1$};
                                 \node[c] at (13.5,0.2) {$c_4$};
                                 \node[c] at (2.5,0.44) {$c_2$};
                                  \node[c] at (13.5,0.74) {$c_5$};
                                 
                                 \draw[thick,->] (-1, 0.0) -- (-1, 1.2);
                                 
                                 \node[voter] at (-1,  -0.25) {$t$};
                                	\node[voter] at (0.5,-0.25) {$v_1$};
				\node[voter] at (1.5,-0.25) {$v_2$};
				\node[voter] at (2.5,-0.25) {$v_3$};
				\node[voter] at (3.5,-0.25) {$v_4$};
				\node[voter] at (4.5,-0.25) {$v_5$};
				\node[voter] at (5.5,-0.25) {$v_6$};
				\node[voter] at (6.5,-0.25) {$v_7$};
				\node[voter] at (7.5,-0.25) {$v_8$};
				\node[voter] at (8.5,-0.25) {$v_9$};
				\node[voter] at (9.5,-0.25) {$v_{10}$};
				\node[voter] at (10.5,-0.25) {$v_{11}$};
				\node[voter] at (11.5,-0.25) {$v_{12}$};
				\node[voter] at (12.5,-0.25) {$v_{13}$};
				\node[voter] at (13.5,-0.25) {$v_{14}$};
				\node[voter] at (14.5,-0.25) {$v_{15}$};
				
				\draw[dashed] (-1.5, 0.3125) -- (0.0, 0.3125);
				\draw[dashed] (15, 0.3125) -- (16, 0.3125);
				\draw[dashed] (-1.5, 0.59375) -- (0.0, 0.59375);
				\draw[dashed] (15, 0.59375) -- (16, 0.59375);
				\draw[dashed] (-1.5, 0.7890625) -- (0.0, 0.7890625);
				\draw[dashed] (15, 0.7890625) -- (16, 0.7890625);
                                 \draw[dashed] (-1.5, 1.10546875) -- (5.0, 1.10546875);
                                 \draw[dashed] (15, 1.10546875) -- (16, 1.10546875);
                                 
                                 \node[voter] at (-2,0.3125 - 0.12) {$t_1$};
                                 \node[voter] at (-2,0.59375 - 0.12) {$t_2$};
                                 \node[voter] at (-2,0.7890625 - 0.12) {$t_3$};
                                 \node[voter] at (-2,1.10546875 - 0.12) {$t_4$};
			\end{tikzpicture}
		 \end{center} 
		 Candidate $c_1$ will be selected first in time $t_1 = \nicefrac{15}{4 \cdot 12}$, and all 12 voters who approve $c_1$ will be charged $\nicefrac{15}{48}$ dollars for that. The remaining 3 voters are left with the total amount of $3 \cdot  \nicefrac{15}{48}$ dollars. At time $t_2 = t_1 + \Delta$ such that:
		 $3 \cdot  \nicefrac{15}{48} + 10\Delta = \nicefrac{15}{48},$
		 the last 10 voters will be able to afford to buy candidate $c_4$. The first 12 voters would be able to afford $c_2$ only at time $t_1 + \nicefrac{15}{48}$. Since $\Delta = \nicefrac{9}{32} < \nicefrac{15}{48}$, the next candidate selected will be $c_4$. Voters $v_6 \ldots v_{12}$ will pay $\nicefrac{9}{32}$ for $c_2$. Each of the last 3 voters will pay $\nicefrac{15}{48} + \nicefrac{9}{32}$. After selecting $c_4$ each of the last 10 voters is left with no money, and each of the first 5 voters has $\nicefrac{9}{32}$ dollars left. Next, $c_2$ will be selected in time $t_3 = t_2 + \nicefrac{25}{128}$, and $c_5$ at time $t_4 = t_3 + \nicefrac{81}{256}$. The selected committee is $\{c_1, c_2, c_4, c_5\}$.  \qed
\end{example}

\item[The Method of Equal Shares.]\!\!\!\footnote{In earlier versions of this paper, the rule was called Rule X. We believe that the new name is more informative. The part ``equal shares'' corresponds to two elements in the definition of the rule. First, the budget is divided equally among the voters. Second, once a candidate is selected, its cost is shared as equally as possible among the voters who approve the candidate.} (For short, we often call this rule \emph{Equal Shares}.) This rule is new to the paper. Each voter $i \in N$ has an initial budget of one dollar, $b_i(1) = 1$; the voters spend their money during the run of the rule, buying candidates they approve of. Buying a candidate costs $p = \nicefrac{n}{k}$ dollars in total (we call $p$ the \myemph{price}). The rule starts with an empty committee $W = \emptyset$ and adds candidates to~$W$ sequentially; let $b_i(t)$ be the amount of money that voter $i$ is left with just before the start of the $t$-th iteration. In the $t$-th step, the rule selects the candidate that should be added to~$W$ as follows. For a value $q \ge 0$, we say that a candidate $c \not \in W$ is \emph{$q$-affordable} at round $t$ if
\begin{align}
\label{eq:afford}
\sum_{i \in N(c)} \min(q, b_i(t)) \ge p \text{.}
\end{align}
Thus, the voters approving $c$ can raise the amount $p$ required to elect $c$ while each paying at most $q$. If no candidate $c \not\in W$ is $q$-affordable for any $q$, the rule stops and returns $W$. Otherwise, the rule selects a candidate $c\not\in W$ which is $q$-affordable for a minimum $q$, and adds $c$ to the committee $W$. Note that by minimality of $q$, inequality \eqref{eq:afford} holds with equality. For each voter $i \in N(c)$, we set their budget to $b_i(t+1) = b_i(t) - \min(q, b_i(t))$. (So each of these voters pays either $q$ or their entire remaining budget for $c$.) For each $i \notin N(c)$ we set $b_i(t+1) = b_i(t)$.\footnote{Since for each selected committee member the voters pay $\nicefrac{n}{k}$ dollars in total, the rule picks at most $k$ candidates. The rule may select strictly fewer than $k$ candidates. If desired, the committee can then be completed using different strategies.  Most of the crucial properties of Equal Shares (e.g., EJR and laminar proportionality) do not depend on the particular strategy. For priceability, the choice of strategy matters.}

This rule is similar to Phragm\'en's Sequential Rule, except in Equal Shares we give voters money up front, and in Phragm\'en's rule we give it to them continuously.

\begin{example}
Consider the profile from \Cref{ex:phragmen_definition}. Recall that $k = 4$. The way the Method of Equal Shares proceeds given this profile is depicted below. The diagram also illustrates the minimal values~$q$ for which the respective elected candidates are $q$-affordable.  

		\begin{center}
			\begin{tikzpicture}
			[yscale=2.0,xscale=0.75,voter/.style={anchor=south, yshift=-3pt}, select/.style={fill=green!10}, c/.style={anchor=south, yshift=2.5pt, inner sep=0}]
				\draw (0,0) rectangle (12,0.3125);
				\draw (0,0.3125) rectangle (12,0.625);
				\draw (0,0.625) rectangle (10,1.0);
				\draw (0,0) rectangle (15,1.0);
			                           
                                 \node[c] at (6,0.05) {$c_1$};
                                 \node[c] at (6,0.3625) {$c_2$};
                                 \node[c] at (5,0.675) {$c_3$};
                                 \node[c] at (13.5,0.37) {$c_4$};
                                
                                	\node[voter] at (0.5,-0.25) {$v_1$};
				\node[voter] at (1.5,-0.25) {$v_2$};
				\node[voter] at (2.5,-0.25) {$v_3$};
				\node[voter] at (3.5,-0.25) {$v_4$};
				\node[voter] at (4.5,-0.25) {$v_5$};
				\node[voter] at (5.5,-0.25) {$v_6$};
				\node[voter] at (6.5,-0.25) {$v_7$};
				\node[voter] at (7.5,-0.25) {$v_8$};
				\node[voter] at (8.5,-0.25) {$v_9$};
				\node[voter] at (9.5,-0.25) {$v_{10}$};
				\node[voter] at (10.5,-0.25) {$v_{11}$};
				\node[voter] at (11.5,-0.25) {$v_{12}$};
				\node[voter] at (12.5,-0.25) {$v_{13}$};
				\node[voter] at (13.5,-0.25) {$v_{14}$};
				\node[voter] at (14.5,-0.25) {$v_{15}$};
				
				\draw[<->] (-0.5, 0.0) -- (-0.5, 0.3025);
				\draw[<->] (-0.5, 0.3225) -- (-0.5, 0.615);
				\draw[<->] (-0.5, 0.635) -- (-0.5, 1.0);
				\draw[<->] (15.5, 0.0) -- (15.5, 1.0);
				
				\node[voter] at (-1.5,0.04) {$q(c_1)$};
				\node[voter] at (-1.5,0.3525) {$q(c_2)$};
				\node[voter] at (-1.5,0.695) {$q(c_3)$};
				\node[voter] at (16.5,0.4) {$q(c_4)$};
			\end{tikzpicture}
		 \end{center}
	In the first step, $c_1$ is $\nicefrac{15}{48}$-affordable and each of the 12 voters who approve $c_1$ pays $\nicefrac{15}{48}$. In the second step, $c_2$ is $\nicefrac{15}{48}$-affordable and again each of the first 12 voters pays $\nicefrac{15}{48}$. Next, $c_3$, $c_4$, and $c_5$ are all $\nicefrac{15}{40}$-affordable. If $c_4$ is selected in the third step, then no other candidates will be affordable, and so Equal Shares selects the committee $\{c_1, c_2, c_4\}$. For the sake of this example, assume that $c_3$ is selected in the third step. Then, in the fourth step each of the first 10 voters is left with no money, voters $v_{11}$ and $v_{12}$ have $\nicefrac{18}{48}$ dollars left, and the last 3 voters have 1 dollar each. Thus, both $c_4$ and $c_5$ are 1-affordable, and one of them (say $c_4$) is selected. Then no candidate is affordable. The selected committee is $\{c_1, c_2, c_3, c_4\}$.  \qed
\end{example}

\end{description}

It is clear that Phragm\'en's rule and the Method of Equal Shares can be calculated in polynomial time.\footnote{For Equal Shares, to calculate the minimum $q$ that makes a candidate $q$-affordable, sort the agents $N(c)$ approving $c$ in increasing order of budget $b_i(t)$ remaining. Then for each $j = 0, \dots, |N(c)|$, check whether we can raise the required amount $p$ by having the poorest $j$ agents contribute their entire remaining budget, and having the richest $|N(c)|-j$ agents split the remaining amount equally.}
On the other hand, PAV is NP-hard to evaluate \citep{azi-gas-gud-mac-mat-wal:c:multiwinner-approval}.

\subsection{Welfarist rules}

We are particularly interested in a class of rules that maximize an objective function based on voters' utilities. We call such rules \emph{welfarist}, a concept from welfare economics \citep{sen1979utilitarianism}.

\begin{definition}[Welfarism]\label{def:welfarist}
Given a profile $P$, we define the welfare vector of a committee $W$ as $w_P(W) = (|A_1 \cap W|, \dots, |A_n \cap W|)$.
A committee rule $f$ is \myemph{welfarist} if for each $k$ there exists a function $g_k$ mapping welfare vectors to real values, such that for each instance $(P, k)$ we have:
\begin{align*}
f(P, k) = \argmax_{W \subseteq C : |W| = k} g_k(w_P(W)) \text{.}
\end{align*}  
\end{definition}

For example, $g_k$ could be summation, and then $f$ is the utilitarian rule. If we encode the leximin principle in $g_k$, we obtain an approval-variant of the rule of \citet{ccElection}. Clearly, PAV is a welfarist rule, and so are all variants of PAV that use scoring vectors other than harmonic numbers (see, e.g., \citealp{lac-sko:t:approval-thiele}). Serial dictatorships can also be captured.

We say that a welfarist rule is \myemph{Pareto-optimal} if the functions $g_k$ are monotonic, and it is easy to see that the corresponding committee rule is then Pareto-optimal in the usual sense.

\section{Laminar Proportionality}
	
	In this section, we define a new axiom called \myemph{laminar proportionality}. This axiom identifies a class of well-behaved profiles (\myemph{laminar profiles}) and specifies which committees are acceptable outputs on these profiles. The definition of the axiom is a bit tedious, so we start by giving examples which will illustrate the ideas.
	The first example also recalls the graphical representation of profiles that we will use throughout the paper.
	
	\subsection{Examples of laminar profiles}
	
	\begin{example}
		[Integral party-list profiles]
		\label{ex:laminar-on-party-lists}
		A profile $P=(A_1, \dots, A_n)$ is a \myemph{party-list profile} if for all $i,j \in N$, either $A_i = A_j$ or $A_i \cap A_j = \emptyset$. Thus, there are $r$ parties $C_1, \dots, C_r \subseteq C$ of pairwise disjoint sets of candidates, and each voter approves exactly the members of one party. Write $n_1, \dots, n_r$ for the number of voters supporting each party. We call an election instance $(P,k)$ an \myemph{integral party-list instance} if $P$ is a party-list profile and the values $k \cdot n_1/n,\dots,k\cdot n_r/n$ are all integral. We will also require that $|C_z| \ge k \cdot n_z/n$ for each party $z = 1, \dots, r$.
		\begin{figure}[h]
			\centering
			\begin{tikzpicture}
			[yscale=0.4,xscale=0.75,voter/.style={anchor=south, yshift=-4pt}]
				\foreach \y in {0,...,4} {
					\draw (0,\y) rectangle (3,\y+1);
					\draw (3,\y) rectangle (6,\y+1);
					\draw (6,\y) rectangle (8,\y+1);
				}
				\foreach \y in {0,...,2} {
					\draw[fill=green!10] (0,\y) rectangle (3,\y+1);
					\draw[fill=green!10] (3,\y) rectangle (6,\y+1);
				}
				\draw[fill=green!10] (6,0) rectangle (8,1);
				\draw[fill=green!10] (6,1) rectangle (8,2);
				\node[voter] at (0.5,-1) {$v_1$};
				\node[voter] at (1.5,-1) {$v_2$};
				\node[voter] at (2.5,-1) {$v_3$};
				\node[voter] at (3.5,-1) {$v_4$};
				\node[voter] at (4.5,-1) {$v_5$};
				\node[voter] at (5.5,-1) {$v_6$};
				\node[voter] at (6.5,-1) {$v_7$};
				\node[voter] at (7.5,-1) {$v_8$};
				
				\node[voter] at (1.5,0.25) {$c_1$};
			\end{tikzpicture}
			\caption{An integral party-list instance for $k = 8$. The boxes correspond to the candidates, and each candidate is approved by the voters below the corresponding box. E.g., candidate $c_1$ who is represented by the left bottom box is approved by voters $v_1$, $v_2$, and $v_3$.}
			\label{fig:party-list}
		\end{figure}
		\Cref{fig:party-list} shows an example of an integral party-list instance. Each box denotes a candidate, and each voter $v_i$ approves all candidates appearing in the column for $v_i$. In the example, there are three parties, where $v_1,v_2, v_3$ support party $C_1$, while $v_4,v_5,v_6$ support $C_2$, and $v_7, v_8$ support $C_3$. Every such instance is a laminar profile. Our axiom of laminar proportionality will require that a committee of size $k=8$ wins only if it contains three candidates from $C_1$, three candidates from $C_2$, and two candidates from $C_3$. An example of such a committee is indicated in green. 
		\qed
	\end{example}
	
	When introducing PAV, \citet{Thie95a} discussed exactly the property from \Cref{ex:laminar-on-party-lists}, that is, proportionality on integral party-list profiles. He showed that PAV satisfies it. Our axiom of laminar proportionality will strengthen this axiom by enlarging the class of profiles on which we specify the outcome, and PAV will fail this strengthening. Our larger class of laminar profiles keeps the spirit of `integral' profiles (and sidestep rounding issues), but goes beyond party-lists. \citet{bri-las-sko:c:apportionment} and \citet{lac-sko:t:approval-thiele} study axioms that apply to party-list profiles without the integrality requirement (see also \Cref{sec:price}), but as we discuss in \Cref{sec:additional_discussion} (\Cref{rem:integrality_in_laminarity_def}) the integrality must be required for our larger class of laminar profiles.
	
	\begin{example}
		[Two parties with a common leader]
		\label{ex:laminar-two-parties}
		Consider the following instance with $k = 4$.
		
		\begin{center}
			\begin{tikzpicture}
			[yscale=0.47,xscale=0.75,voter/.style={anchor=south, yshift=-3pt}, select/.style={fill=green!10}, c/.style={anchor=south, yshift=2.5pt, inner sep=0}]
				\draw[select] (0,0) rectangle (6,1);
				\draw[select] (0,1) rectangle (4,2);
				\draw[select] (0,2) rectangle (4,3);
				\draw (0,3) rectangle (4,4);
				\draw[select] (4,1) rectangle (6,2);
				\draw (4,2) rectangle (6,3);
				\draw (4,3) rectangle (6,4);
				\draw (4,4) rectangle (6,5);
				\node[c] at (3,0) {$c_1$};
				\node[c] at (2,1) {$c_2$};
				\node[c] at (2,2) {$c_3$};
				\node[c] at (2,3) {$c_4$};
				\node[c] at (5,1) {$c_5$};
				\node[c] at (5,2) {$c_6$};
				\node[c] at (5,3) {$c_7$};
				\node[c] at (5,4) {$c_8$};
				\node[voter] at (0.5,-1) {$v_1$};
				\node[voter] at (1.5,-1) {$v_2$};
				\node[voter] at (2.5,-1) {$v_3$};
				\node[voter] at (3.5,-1) {$v_4$};
				\node[voter] at (4.5,-1) {$v_5$};
				\node[voter] at (5.5,-1) {$v_6$};
			\end{tikzpicture}
		\end{center}
		In this instance, there is one candidate $c_1$ who is approved by all voters. Otherwise, the voters are partitioned into supporters of two disjoint ``parties'', with four voters supporting $\{c_2, c_3, c_4\}$ and two voters supporting $\{c_5, c_6, c_7, c_8\}$. This example is essentially the example discussed in the introduction. Laminar proportionality requires that a winning committee of size $k=4$ includes $c_1$, and that the remaining three seats are filled in proportion to the party supports: two seats for $\{c_2, c_3, c_4\}$ and one seat for $\{c_5, c_6, c_7, c_8\}$. The figure indicates in green an example of a committee that is laminar proportional.
		\qed
	\end{example}
	
	\begin{example}
		[Subdivided parties]
		\label{ex:laminar-subdivided-parties}
		Consider the following instance with $k = 12$.
		
		\begin{center}
			\begin{tikzpicture}
			[yscale=0.43,xscale=0.75,voter/.style={anchor=south, yshift=-4pt}, select/.style={fill=green!10}, c/.style={anchor=south, yshift=1.5pt, inner sep=0}]
				\draw[select] (0,0) rectangle (6,1);
				\draw[select] (0,1) rectangle (6,2);
				\draw[select] (0,2) rectangle (6,3);
				\draw[select] (0,3) rectangle (6,4);
				\draw[select] (0,4) rectangle (3,5);
				\draw[select] (0,5) rectangle (3,6);
				\draw[select] (3,4) rectangle (6,5);
				\draw[select] (3,5) rectangle (6,6);
				\draw (3,6) rectangle (6,7);
				\draw (3,7) rectangle (6,8);
				
				\draw[select] (6,0) rectangle (9,1);
				\draw[select] (6,1) rectangle (8,2);
				\draw[select] (6,2) rectangle (8,3);
				\draw (6,3) rectangle (8,4);
				\draw (6,4) rectangle (8,5);
				\draw (6,5) rectangle (8,6);
				\draw (6,6) rectangle (8,7);
				\draw[select] (8,1) rectangle (9,2);
				\draw (8,2) rectangle (9,3);
				\draw (8,3) rectangle (9,4);
				\foreach \i in {1,...,9}
					\node[voter] at (\i-0.5,-1) {$v_\i$};
			\end{tikzpicture}
		\end{center}
		In this instance, there are two major parties. The left-hand party is supported by two thirds of the voters ($v_1,\dots,v_6$) and thus deserves $\frac23 \cdot k = 8$ seats. The left-hand party has four candidates approved by all its supporters, but is otherwise evenly divided into a left wing and a right wing. Laminar proportionality will require that the four consensus candidates are elected, and that the remaining 4 seats for the left-hand party be evenly divided among the two wings. There is also a right-hand party, deserving $\frac13 \cdot k = 4$ seats, of which one will go to the consensus candidate of that party, with the remaining 3 seats divided proportionally between that party's wings. The figure indicates in green an example of a laminar proportional committee.
		\qed
	\end{example}
	
	We also give some intuitive explanation of the difference between laminar proportionality and ``welfarist'' proportionality in \Cref{sec:proportionality_and_disagreements}.
	
	\subsection{Definition and properties}
	
	\begin{definition}
		An election instance $(P,k)$ is \myemph{laminar}\footnote{The term ``laminar'' refers to a property of set systems: a family of sets is \myemph{laminar} if any two sets in the family are either disjoint or one is a subset of the other. For a laminar election instance, the set system $\{N_c\}_{c \in C}$, where $N_c$ is the set of voters approving $c$, is laminar (see \Cref{prop:laminarity_of_set_system} in the appendix).}
		if either:
		\begin{itemize}
			\item[(i)] $P$ is unanimous and $|C(P)| \ge k$.
			\item[(ii)] There is a candidate $c \in C(P)$ such that $c \in A_i$ for all $A_i \in P$, the profile $P_{-c}$ is not unanimous, and the instance $(P_{-c}, k-1)$ with $P_{-c} = (A_1\setminus\{c\}, \dots, A_n\setminus\{c\})$ is laminar.
			\item[(iii)] There are two laminar instances $(P_1, k_1)$ and $(P_2, k_2)$ with $C(P_1) \cap C(P_2) = \emptyset$ and $|P_1|/k_1 = |P_2|/k_2$ such that $P = P_1 + P_2$ and $k = k_1 + k_2$.\footnote{In follow-up work, \citet{los2022systematization} implicitly allow $k = 0$ throughout, and avoid division by zero in part (iii) by writing the condition as $|P_1|k_2 = |P_2|k_1$. This slightly enlarges the space of laminar instances, and can be useful.}
		\end{itemize}
	\end{definition}
	
	Integral party-list instances (discussed in \Cref{ex:laminar-on-party-lists}) are laminar, since they can be obtained by forming a disjoint sum (iii) of unanimous profiles (i). Party-list instances that are not integral are not laminar, since we cannot use operation (iii). \Cref{ex:laminar-two-parties} is laminar, since it can be obtained by building a party-list instance as before, and then adding the unanimously approved candidate~$c_1$ using operation (ii). \Cref{ex:laminar-subdivided-parties} is laminar by applying (iii) to two instances obtained similarly to \Cref{ex:laminar-two-parties}.
	
	\begin{figure}[t!]
		\begin{center}
			\begin{tikzpicture}
				[yscale=0.387,xscale=0.702,every node/.style={scale=0.9},voter/.style={anchor=south, yshift=-7pt}, select/.style={fill=blue!10}, c/.style={anchor=south, yshift=1.5pt, inner sep=0}]
				
				\node at (-3.0, 31) {\Cref{def:laminar_proportionality} (1)};
				\draw[select] (-3.5,29) rectangle (-2.5,30);
				\node at (-3.0,29.42) {$c_5$};
				\node[voter] at (-3.0,28) {$v_2$};
				\node[voter] at (-3.0,27) {$k=1$};
				
				\node at (5.0, 31) {\Cref{def:laminar_proportionality} (1)};
				\draw[select] (4.5,29) rectangle (5.5,30);
				\node at (5.0,29.42) {$c_6$};
				\node[voter] at (5.0,28) {$v_3$};
				\node[voter] at (5.0,27) {$k=1$};
				
				\draw[thick,->] (-2.0, 29) -- (1.0, 25);
				\draw[thick,->] (4.0, 29) -- (2.0, 25);
				\node at (1.2, 28.2) {\Cref{def:laminar_proportionality} (3)};

				\node at (-6.5, 25) {\Cref{def:laminar_proportionality} (1)};
				\draw[select] (-7,23) rectangle (-6,24);
				\node at (-6.5,23.42) {$c_4$};
				\node[voter] at (-6.5,22) {$v_1$};
				\node at (-6.5, 21) {$k = 1$};
				
				\draw[select] (0.5,23) rectangle (1.5,24);
				\draw[select] (1.5,23) rectangle (2.5,24);
				\node at (1.0,23.42) {$c_5$};
				\node at (2.0,23.42) {$c_6$};
				\node[voter] at (1.0,22) {$v_2$};
				\node[voter] at (2.0,22) {$v_3$};
				\node at (1.5, 21) {$k = 2$};
				
				\draw[thick,->] (-5.5, 23) -- (-3, 20);
				\draw[thick,->] (0.0, 23) -- (-2, 20);
				\node at (-2.5, 22.8) {\Cref{def:laminar_proportionality} (3)};

				\draw[select] (-4,18) rectangle (-3,19);
				\draw[select] (-3,18) rectangle (-2,19);
				\draw[select] (-2,18) rectangle (-1,19);
				\node at (-3.5,18.42) {$c_4$};
				\node at (-2.5,18.42) {$c_5$};
				\node at (-1.5,18.42) {$c_6$};
				\foreach \i in {1,...,3}
				\node[voter] at (\i-4.5,17) {$v_{\i}$};
				\node at (-2.5, 16) {$k = 3$};
				
				\draw[thick] (-2.5, 15.0) -- (-2.5, 14.2);
				\node at (-2.5, 13.5) {\Cref{def:laminar_proportionality} (2)};
				\draw[thick,->] (-2.5, 12.8) -- (-2.5, 11.5);

				\node at (2.5, 18) {\Cref{def:laminar_proportionality} (1)};
				\draw[select] (2.0,14) rectangle (3.0,15);
				\draw[select] (2.0,15) rectangle (3.0,16);
				\draw (2.0,16) rectangle (3.0,17);
				\node at (2.5,14.42) {$c_{7}$};
				\node at (2.5,15.42) {$c_{8}$};
				\node at (2.5,16.42) {$c_{9}$};
				\node[voter] at (2.5,13) {$v_4$};
				\node[voter] at (2.5,12) {$k=2$};

				\node at (6.5, 24) {\Cref{def:laminar_proportionality} (1)};
				\draw[select] (6.0,20) rectangle (7.0,21);
				\draw[select] (6.0,21) rectangle (7.0,22);
				\draw (6.0,22) rectangle (7.0,23);
				\node at (6.5,20.42) {$c_{10}$};
				\node at (6.5,21.42) {$c_{11}$};
				\node at (6.5,22.42) {$c_{12}$};
				\node[voter] at (6.5,19) {$v_5$};
				\node[voter] at (6.5,18) {$k=2$};
				
				\node at (12.5, 27) {\Cref{def:laminar_proportionality} (1)};
				\draw[select] (12.0,23) rectangle (13.0,24);
				\draw[select] (12.0,24) rectangle (13.0,25);
				\draw (12.0,25) rectangle (13.0,26);
				\node at (12.5,23.42) {$c_{13}$};
				\node at (12.5,24.42) {$c_{14}$};
				\node at (12.5,25.42) {$c_{15}$};
				\node[voter] at (12.5,22) {$v_6$};
				\node[voter] at (12.5,21) {$k=2$};
				
				\draw[thick,->] (12.5, 20) -- (12.5, 18);
				\draw[thick,->] (7.5, 20) -- (11.5, 18);

				\draw[select] (11.0,14) rectangle (12.0,15);
				\draw[select] (12.0,14) rectangle (13.0,15);
				\draw[select] (11.0,15) rectangle (12.0,16);
				\draw[select] (12.0,15) rectangle (13.0,16);
				\draw (11.0,16) rectangle (12.0,17);
				\draw (12.0,16) rectangle (13.0,17);
				\node at (11.5,14.42) {$c_{10}$};
				\node at (11.5,15.42) {$c_{11}$};
				\node at (11.5,16.42) {$c_{12}$};
				\node at (12.5,14.42) {$c_{13}$};
				\node at (12.5,15.42) {$c_{14}$};
				\node at (12.5,16.42) {$c_{15}$};
				\node[voter] at (11.5,13) {$v_5$};
				\node[voter] at (12.5,13) {$v_6$};
				\node[voter] at (12.0,12) {$k=4$};

				\draw[thick,->] (3.5, 14) -- (8.0, 11);
				\draw[thick,->] (10.5, 14) -- (9.0, 11);
				\node at (7.8, 13.2) {\Cref{def:laminar_proportionality} (3)};

				\draw[select] (-4,7) rectangle (-1,8);
				\draw[select] (-4,8) rectangle (-1,9);
				\draw[select] (-4,9) rectangle (-1,10);
				\draw[select] (-4,10) rectangle (-3,11);
				\draw[select] (-3,10) rectangle (-2,11);
				\draw[select] (-2,10) rectangle (-1,11);
				\node at (-2.5,7.42) {$c_1$};
				\node at (-2.5,8.42) {$c_2$};
				\node at (-2.5,9.42) {$c_3$};
				\node at (-3.5,10.42) {$c_4$};
				\node at (-2.5,10.42) {$c_5$};
				\node at (-1.5,10.42) {$c_6$};
				\foreach \i in {1,...,3}
				\node[voter] at (\i-4.5,6) {$v_{\i}$};
				\node at (-2.5, 5) {$k = 6$};
				
				\foreach \x in {3,4,5}
				{
					\foreach \y in {0,1}
					{
						\draw[select] (\x+4,\y+7) rectangle (\x+5,\y+8);
					}
					\foreach \y in {2}
					{
						\draw (\x+4,\y+7) rectangle (\x+5,\y+8);
					}
				}
				\node at (7.5,7.42) {$c_{7}$};
				\node at (7.5,8.42) {$c_{8}$};
				\node at (7.5,9.42) {$c_{9}$};
				\node at (8.5,7.42) {$c_{10}$};
				\node at (8.5,8.42) {$c_{11}$};
				\node at (8.5,9.42) {$c_{12}$};
				\node at (9.5,7.42) {$c_{13}$};
				\node at (9.5,8.42) {$c_{14}$};
				\node at (9.5,9.42) {$c_{15}$};
				\foreach \i in {4,...,6}
				\node[voter] at (\i+3.5,6) {$v_{\i}$};
				\node at (8.5, 5) {$k = 6$};
				
				\draw[thick,->] (-0.5, 7) -- (2, 5);
				\draw[thick,->] (6.5, 7) -- (4, 5);  
				\node at (3, 7.5) {\Cref{def:laminar_proportionality} (3)};

				\draw[select] (0,0) rectangle (3,1);
				\draw[select] (0,1) rectangle (3,2);
				\draw[select] (0,2) rectangle (3,3);
				\draw[select] (0,3) rectangle (1,4);
				\draw[select] (1,3) rectangle (2,4);
				\draw[select] (2,3) rectangle (3,4);
				\node at (1.5,0.42) {$c_1$};
				\node at (1.5,1.42) {$c_2$};
				\node at (1.5,2.42) {$c_3$};
				\node at (0.5,3.42) {$c_4$};
				\node at (1.5,3.42) {$c_5$};
				\node at (2.5,3.42) {$c_6$};
				\foreach \x in {3,4,5}
				{
					\foreach \y in {0,1}
					{
						\draw[select] (\x,\y) rectangle (\x+1,\y+1);
					}
					\foreach \y in {2}
					{
						\draw (\x,\y) rectangle (\x+1,\y+1);
					}
				}
				\node at (3.5,0.42) {$c_{7}$};
				\node at (3.5,1.42) {$c_{8}$};
				\node at (3.5,2.42) {$c_{9}$};
				\node at (4.5,0.42) {$c_{10}$};
				\node at (4.5,1.42) {$c_{11}$};
				\node at (4.5,2.42) {$c_{12}$};
				\node at (5.5,0.42) {$c_{13}$};
				\node at (5.5,1.42) {$c_{14}$};
				\node at (5.5,2.42) {$c_{15}$};
				\foreach \i in {1,...,6}
				\node[voter] at (\i-0.5,-1) {$v_{\i}$};
				
				\node at (3,-2.0) {$k = 12$};
			\end{tikzpicture}
		\end{center}
		\caption{Graphical representation of the recursive reasoning that the election instance from the introduction is laminar, and that the committees returned by Phragm\'en's sequential rule and by Equal Shares for this instance are laminar proportional.}\label{fig:laminar-proportionality-reasoning}
	\end{figure}
	
	Our axiom of laminar proportionality specifies which committees are allowed outcomes on a laminar instance.
	
	\begin{definition}\label{def:laminar_proportionality}
		A committee $W \subseteq C(P)$ is \myemph{laminar proportional} for a laminar instance $(P,k)$ if $|W| = k$ and
		\begin{enumerate}
			\item If $P$ is unanimous, then $W \subseteq C(P)$.
			\item If there is a unanimously approved candidate $c$ such that $(P_{-c}, k-1)$ is laminar, then $W = W' \cup \{c\}$ where $W'$ is a committee which is laminar proportional for $(P_{-c}, k-1)$.
			\item If $P$ is the sum of laminar instances $(P_1, k_1)$ and $(P_2, k_2)$, then $W = W_1 \cup W_2$ where $W_1$ is laminar proportional for $(P_1, k_1)$ and $W_2$ is laminar proportional for $(P_2, k_2)$.
		\end{enumerate}
		A committee rule $f$ is \myemph{laminar proportional} if for every laminar instance it returns only laminar proportional committees.
	\end{definition}

One can check using a straightforward induction that Phragm\'en's rule satisfies laminar proportionality. The same is true for the Method of Equal Shares. For example, the recursive reasoning showing that the two rules return laminar proportional committees for the election instance from the introduction is depicted in \Cref{fig:laminar-proportionality-reasoning}. 
On the other hand, as we have seen in the introduction, PAV fails laminar proportionality. A detailed proof is given in the appendix.
\newcommand{\phragmensatisfieslaminarproportionality}{\mbox{}
		\begin{enumerate}
			\item[(a)] Phragm\'en's rule is laminar proportional.
			\item[(b)] The Method of Equal Shares is laminar proportional.
			\item[(c)] PAV fails laminar proportionality.
		\end{enumerate}}

\begin{samepage}
	\begin{theorem}\label{thm:phragmen_satisfies_laminar_proportionality} \mbox{}
		\phragmensatisfieslaminarproportionality
	\end{theorem}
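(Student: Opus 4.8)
The plan is to prove (a) and (b) together by structural induction on the recursive definition of laminar instances, exploiting the fact that both Phragm\'en's rule and Rule~X are budget-based and treat voters symmetrically. The engine of the induction is a strengthened invariant: on any laminar instance $(P,k)$, the rule selects exactly $k$ candidates \emph{and} exhausts every voter's money --- for Phragm\'en the process finishes at time exactly $1$ with all balances back at $0$, and for Rule~X every budget $b_i$ ends at $0$. This ``exact budget exhaustion'' is what forces each nested sub-instance to receive precisely its allotted number of seats, and I verify it alongside laminar proportionality at each inductive step.

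For the base case (i), a unanimous profile, all voters share one budget and approve the same candidates, so each rule simply buys $k$ candidates out of $C(P)$ (which has at least $k$), charging each voter $1/k$ of its money per candidate; any resulting committee lies in $C(P)$ and is therefore laminar proportional, and all money is spent. For case (ii), the unanimously approved candidate $c$ is supported by all $n$ voters, hence its approvers accumulate money strictly faster (Phragm\'en), resp.\ can cover the price $n/k$ at the strictly smallest per-voter charge $q=1/k$ (Rule~X), than any candidate with support $<n$; so $c$ is bought first (ties among several unanimous candidates are harmless, since they are all bought before any non-unanimous candidate and the laminar recursion may peel them in any order). After $c$ is removed, each voter is left with a uniformly reduced budget, and the remaining process is, up to a global rescaling of money by the factor $\tfrac{k-1}{k}$, identical to running the rule afresh on $(P_{-c},k-1)$ with its natural price $n/(k-1)$. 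Here I invoke that both rules are scale-invariant --- multiplying every budget and the price by a common constant only rescales the purchase times, resp.\ the thresholds $q$, and leaves the chosen committee unchanged --- so the induction hypothesis applies to $(P_{-c},k-1)$ and yields $W=\{c\}\cup W'$ with $W'$ laminar proportional.

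Case (iii) is the crux and the main obstacle. Writing $n_1,n_2$ for the voter counts of $P_1,P_2$, the defining ratio condition $n_1/k_1=n_2/k_2$ together with $n=n_1+n_2$ and $k=k_1+k_2$ gives $n/k=n_1/k_1=n_2/k_2$, so the per-candidate price $n/k$ of the combined instance is literally equal to the natural price of each component (no rescaling needed here). Since $C(P_1)$ and $C(P_2)$ are disjoint and each voter approves candidates in only one component, the two ``economies'' are completely decoupled: the budget trajectory of each $P_1$-voter in the combined run coincides with its trajectory in the standalone run of $(P_1,k_1)$, and likewise for $P_2$. Consequently the combined run is just the merge, in order of purchase time (resp.\ threshold $q$), of the two standalone runs, so it buys exactly the standalone committees $W_1$ and $W_2$. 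The exact-exhaustion invariant is precisely what rules out one component ``overspending'': by the induction hypothesis each component empties its voters' budgets exactly upon buying its $k_1$-th (resp.\ $k_2$-th) candidate, so after $k_1+k_2=k$ purchases no candidate is affordable and the rule halts with $W=W_1\cup W_2$, which is laminar proportional by the induction hypothesis.

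Finally, part (c) is proved by exhibiting a single laminar instance on which PAV's winner violates the required split; Phragm\'en's 1899 example from the introduction ($3000$ voters approving $\{a,b_1,\dots,b_4\}$ and $1000$ approving $\{a,c_1,\dots,c_4\}$, with $k=5$) is laminar --- peel the consensus candidate $a$ via (ii), then split into the two parties via (iii), whose ratio $3000/3=1000/1$ certifies the quotas $3$ and $1$. Laminar proportionality demands a committee of the form $\{a\}\cup\{\text{three }b\text{'s}\}\cup\{\text{one }c\}$, whereas the score computation in the introduction shows PAV returns $\{a,b_1,b_2,b_3,b_4\}$, awarding the large party all four non-consensus seats. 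Since this committee contains no candidate of the smaller party, it is not laminar proportional, so PAV fails the axiom; this last step is only a direct PAV-score calculation and presents no real difficulty.
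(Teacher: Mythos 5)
Your proof is correct and follows essentially the same route as the paper's: a structural induction over the three clauses of the laminar definition, strengthened by an exact budget-exhaustion/termination-time invariant, with the currency-rescaling argument for the peel-off-a-unanimous-candidate case and the decoupled-parallel-runs argument (enabled by $n/k=n_1/k_1=n_2/k_2$) for the disjoint-sum case. The only cosmetic differences are that you run one unified induction for Phragm\'en's rule and Rule X where the paper runs two parallel ones, and that for part (c) you use Phragm\'en's 1899 instance rather than \Cref{ex:laminar-two-parties}; both witnesses are valid (and your version of the invariant --- termination at time exactly $1$ with all balances at zero --- is the arithmetically correct form of what the paper states).
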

\end{samepage}

\subsection{No welfarist rule satisfies laminar proportionality}
We have seen that PAV fails laminar proportionality. The rules that do satisfy it are not welfarist rules. We now establish that no welfarist rule can satisfy the axiom. In our proof, we construct two laminar instances for which the laminar proportional outcomes induce two different welfare vectors. However, both welfare vectors can be induced by some committee in both instances. Hence, a welfarist rule cannot choose the laminar proportional outcome in both cases.
	
	\begin{theorem}\label{thm:no_welfarist_lam_prop}
There exists no welfarist committee rule that satisfies laminar proportionality.
\end{theorem}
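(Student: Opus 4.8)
The plan is to follow the strategy announced just before the statement: exhibit two \emph{laminar} instances on a common set of six voters with the same committee size $k=12$, whose laminar proportional outcomes are pinned down to two \emph{different} welfare vectors, yet so that each of the two vectors is induced by \emph{some} (non-proportional) committee of size $k$ in both instances. Concretely I would target the welfare vectors $u=(4,4,4,2,2,2)$ and $u'=(3,3,3,3,3,3)$, which already appear as the Phragm\'en and PAV outcomes in the introductory example.

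For the first instance I would reuse the introductory profile: voters $v_1,v_2,v_3$ jointly approve three common candidates and each additionally approves one private candidate, while $v_4,v_5,v_6$ are singletons each approving three private candidates; set $k=12$. The first step is to check laminarity, by peeling off the three common candidates of the big block with rule (ii) and then splitting the instance into singletons with rule (iii), exactly as in the laminarity discussion (every part has voter-to-seat ratio $\tfrac12$). One then reads off that every laminar proportional committee gives welfare $4$ to each of $v_1,v_2,v_3$ and welfare $2$ to each of $v_4,v_5,v_6$, so the forced welfare vector is $u$. Finally, taking only the three common candidates together with all nine singleton candidates gives a size-$12$ committee with welfare $u'$, so $u'$ is achievable here but is not laminar proportional.

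For the second instance I would partition the same six voters into three pairs $\{v_1,v_2\},\{v_3,v_4\},\{v_5,v_6\}$, giving each pair two common candidates approved by both members and two private candidates for each member; again $k=12$. Laminarity is checked pair-by-pair (peel the two common candidates with (ii), split into two singletons with (iii)) and the three pairs are then combined with (iii), which is legitimate since every part again has ratio $\tfrac{2}{4}=\tfrac12$. The forced laminar proportional welfare is now $3$ for every voter (two common candidates plus one private), i.e.\ $u'$. Crucially, because each voter here approves four candidates, $u$ is achievable: spend six seats in the first pair (both commons and all four privates, giving $v_1,v_2$ welfare $4$), four seats in the second pair ($v_3$'s full approval set, giving $v_3$ welfare $4$ and $v_4$ welfare $2$), and the two commons of the third pair (giving $v_5,v_6$ welfare $2$).

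The contradiction then follows quickly. Suppose $f$ is welfarist with associated function $g:=g_{12}$ and also laminar proportional. On the first instance $f(P,12)=\argmax_{W}g(w_P(W))$ must be a nonempty subset of the laminar proportional committees, all of which have welfare $u$; hence $\max g$ is attained at $u$, while the achievable vector $u'$ is not a maximizer, forcing $g(u)>g(u')$. The identical argument on the second instance, where the forced vector is $u'$ and $u$ is the achievable non-proportional vector, forces $g(u')>g(u)$, a contradiction. The only real difficulty is the construction, not this final step: the integrality built into the laminar definition (equal voter-to-seat ratios at every application of (iii), and integral seat shares) tightly restricts admissible group sizes, so the instances must be engineered so that the laminar decomposition \emph{uniquely} forces the intended welfare vector while each voter still approves enough candidates that the \emph{other} instance's vector remains reachable by a size-$k$ committee; the choice $n=6$, $k=12$ with the above common/private multiplicities is what makes both demands hold simultaneously, and verifying them is routine once the instances are written down.
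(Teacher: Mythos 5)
Your proposal is correct and follows essentially the same approach as the paper's proof: two laminar instances over the same voter set and the same $k$ whose forced laminar proportional welfare vectors are swapped, so that any welfarist $g_k$ would have to rank them in contradictory ways. Your concrete witnesses ($n=6$, $k=12$, vectors $(4,4,4,2,2,2)$ and $(3,3,3,3,3,3)$, reusing the introductory profile) differ from the paper's ($n=8$, $k=20$, vectors $(6,\dots,6)$ and $(7,7,7,7,5,5,5,5)$) but check out in every detail.
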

\begin{proof}
Assume for a contradiction that there exists a welfarist rule $f$ which satisfies laminar proportionality. Let $g_k$ be as in \Cref{def:welfarist}.

Consider the following laminar instance with $k = 20$. The two figures show two committees; the blue one on the left is laminar proportional, while the green one on the right is not.

\begin{center}
	\begin{minipage}{0.4\linewidth}
		\begin{tikzpicture}
	[yscale=0.47,xscale=0.75,voter/.style={anchor=south, yshift=-3pt}, s/.style={fill=blue!10}, n/.style={}, c/.style={anchor=south, yshift=2.5pt, inner sep=0}]
		\foreach \i / \x / \y / \wid / \pick in 
			{1/0/0/4/s, 2/0/1/4/s, 
			3/4/0/4/s, 4/4/1/4/s,
			5/0/2/2/s, 6/0/3/2/s, 7/0/4/2/s, 8/0/5/2/s, 9/0/6/2/n,
			10/2/2/2/s, 11/2/3/2/s, 12/2/4/2/s, 13/2/5/2/s, 14/2/6/2/n,
			15/4/2/2/s, 16/4/3/2/s, 17/4/4/2/s, 18/4/5/2/s,
			19/6/2/2/s, 20/6/3/2/s, 21/6/4/2/s, 22/6/5/2/s} {
			\draw[\pick] (\x,\y) rectangle (\x + \wid,\y+1);
			\node[c] at (\x + 0.5 * \wid, \y) {$c_{\i}$}; 	
		}
		\foreach \i/\u in {1/6,2/6,3/6,4/6,5/6,6/6,7/6,8/6} {
			\node[voter] at (\i-0.5,-1) {$v_\i$};
			\node[font=\footnotesize, inner sep=2pt, fill=black!7, circle] at (\i-0.5,-1.65) {$\u$};
		}
	\end{tikzpicture}
	\end{minipage}\qquad
	\begin{minipage}{0.4\linewidth}
		\begin{tikzpicture}
	[yscale=0.47,xscale=0.75,voter/.style={anchor=south, yshift=-3pt}, s/.style={fill=green!10}, n/.style={}, c/.style={anchor=south, yshift=2.5pt, inner sep=0}]
		\foreach \i / \x / \y / \wid / \pick in 
			{1/0/0/4/s, 2/0/1/4/s, 
			3/4/0/4/s, 4/4/1/4/s,
			5/0/2/2/s, 6/0/3/2/s, 7/0/4/2/s, 8/0/5/2/s, 9/0/6/2/s,
			10/2/2/2/s, 11/2/3/2/s, 12/2/4/2/s, 13/2/5/2/s, 14/2/6/2/s,
			15/4/2/2/s, 16/4/3/2/s, 17/4/4/2/s, 18/4/5/2/n,
			19/6/2/2/s, 20/6/3/2/s, 21/6/4/2/s, 22/6/5/2/n} {
			\draw[\pick] (\x,\y) rectangle (\x + \wid,\y+1);
			\node[c] at (\x + 0.5 * \wid, \y) {$c_{\i}$}; 	
		}
		\foreach \i/\u in {1/7,2/7,3/7,4/7,5/5,6/5,7/5,8/5} {
			\node[voter] at (\i-0.5,-1) {$v_\i$};
			\node[font=\footnotesize, inner sep=2pt, fill=black!7, circle] at (\i-0.5,-1.65) {$\u$};
		}
	\end{tikzpicture}
	\end{minipage}
\end{center}
All committees that are laminar proportional (such as the blue committee) induce the utility vector $\mathbf w_1 = (6,6,6,6,6,6,6,6)$. The green committee induces $\mathbf w_2 = (7,7,7,7,5,5,5,5)$. Since $f$ satisfies laminar proportionality, $f$ selects only committees with welfare vector $\mathbf w_1$ and none with vector $\mathbf w_2$. Thus, by definition of welfarist rules, we have $g_{20}(\mathbf w_1) > g_{20}(\mathbf w_2)$.

Now consider a second laminar instance as follows; again $k = 20$.
\begin{center}
	\begin{minipage}{0.4\linewidth}
		\begin{tikzpicture}
	[yscale=0.47,xscale=0.75,voter/.style={anchor=south, yshift=-3pt}, s/.style={fill=green!10}, n/.style={}, c/.style={anchor=south, yshift=2.5pt, inner sep=0}]
		\foreach \i / \x / \y / \wid / \pick in 
			{1/0/0/4/s, 2/0/1/4/s, 3/0/2/4/s, 4/0/3/4/s, 5/0/4/4/s, 6/0/5/4/s, 
			7/4/0/2/s, 8/4/1/2/s, 9/4/2/2/s, 10/4/3/2/s, 11/4/4/2/s,
			12/6/0/2/s, 13/6/1/2/s, 14/6/2/2/s, 15/6/3/2/s, 16/6/4/2/s,
			17/0/6/1/n, 18/1/6/1/n, 19/2/6/1/n, 20/3/6/1/n,
			21/4/5/1/s, 22/5/5/1/s, 23/6/5/1/s, 24/7/5/1/s} {
			\draw[\pick] (\x,\y) rectangle (\x + \wid,\y+1);
			\node[c] at (\x + 0.5 * \wid, \y) {$c_{\i}$}; 	
		}
		\foreach \i/\u in {1/6,2/6,3/6,4/6,5/6,6/6,7/6,8/6} {
			\node[voter] at (\i-0.5,-1) {$v_\i$};
			\node[font=\footnotesize, inner sep=2pt, fill=black!7, circle] at (\i-0.5,-1.65) {$\u$};
		}
	\end{tikzpicture}
	\end{minipage}\qquad
	\begin{minipage}{0.4\linewidth}
		\begin{tikzpicture}
	[yscale=0.47,xscale=0.75,voter/.style={anchor=south, yshift=-3pt}, s/.style={fill=blue!10}, n/.style={}, c/.style={anchor=south, yshift=2.5pt, inner sep=0}]
		\foreach \i / \x / \y / \wid / \pick in 
			{1/0/0/4/s, 2/0/1/4/s, 3/0/2/4/s, 4/0/3/4/s, 5/0/4/4/s, 6/0/5/4/s, 
			7/4/0/2/s, 8/4/1/2/s, 9/4/2/2/s, 10/4/3/2/s, 11/4/4/2/s,
			12/6/0/2/s, 13/6/1/2/s, 14/6/2/2/s, 15/6/3/2/s, 16/6/4/2/s,
			17/0/6/1/s, 18/1/6/1/s, 19/2/6/1/s, 20/3/6/1/s,
			21/4/5/1/n, 22/5/5/1/n, 23/6/5/1/n, 24/7/5/1/n} {
			\draw[\pick] (\x,\y) rectangle (\x + \wid,\y+1);
			\node[c] at (\x + 0.5 * \wid, \y) {$c_{\i}$}; 	
		}
		\foreach \i/\u in {1/7,2/7,3/7,4/7,5/5,6/5,7/5,8/5} {
			\node[voter] at (\i-0.5,-1) {$v_\i$};
			\node[font=\footnotesize, inner sep=2pt, fill=black!7, circle] at (\i-0.5,-1.65) {$\u$};
		}
	\end{tikzpicture}
	\end{minipage}
\end{center}
In this instance, the blue committee on the right is the unique laminar proportional committee. It induces welfare vector $\mathbf w_2$. The green committee on the left induces $\mathbf w_1$. Hence, since $g_{20}(\mathbf w_1) > g_{20}(\mathbf w_2)$, the rule $f$ does not select the blue committee at this instance. Hence, $f$ fails laminar proportionality, a contradiction.
\end{proof}

\section{Price Systems}

In this section, we discuss a proportionality axiom we call priceability. We give each voter an equal budget of virtual money, and let voters spend it on candidates they approve. A committee is priceable if there is a price such that voter spending can be arranged in such a way that each committee member gets a total spending of exactly the price, and voters do not have enough money left to buy additional candidates. The intuition behind this condition is that it encodes that each voter has (approximately) equal influence on the committee (since each voter starts out with an equal budget), and this ensures proportionality.

\subsection{Definition and properties}
\label{sec:price}
Assume that each voter has a budget of one dollar, and that she can use this money to pay for candidates that she approves of. A \myemph{price system} is a pair $\textsf{ps} = (p, \{p_i\}_{i \in [n]})$, where $p > 0$ is a \myemph{price}, and for each voter $i \in [n]$, there is a \myemph{payment function} $p_i\colon C \to [0, 1]$ such that
\begin{enumerate}
\item If $p_i(c) > 0$, then $c \in A_i$ (a voter can only pay for candidates she approves of), and
\item $\sum_{c \in \mathbb N} p_i(c) \leq 1$ (a voter can spend at most one dollar) \text{.}
\end{enumerate}
We say that a price system $\textsf{ps} = (p, \{p_i\}_{i \in [n]})$ supports a committee $W$ if the following hold:
\begin{enumerate}
\item For each elected candidate $c \in W$, the sum of the payments to this candidate equals the price: $\sum_{i \in [n]}p_i(c) = p$.
\item No candidate outside of the committee $c \notin W$ gets any payment: $\sum_{i \in [n]}p_i(c) = 0$.
\item There exists no unelected candidate $c \notin W$ whose supporters, in total, have a remaining unspent budget of  more than $p$:\footnote{A strict inequality would seem intuitive here, but we need weak inequality to ensure existence on symmetric instances. Consider an instance with two voters, where $v_1$ approves only $c_1$ and $v_2$ only approves $c_2$. If $k = 1$, no committee is priceable when we require strict inequality.}
\begin{align*}
\sum_{i \in [n]\colon c\in A_i} \left(1 - \sum_{c' \in W}p_i(c')\right) \leq p \text{.}
\end{align*}
\end{enumerate}
We say that a committee $W$ is \myemph{priceable} if there exists a price system $\textsf{ps} = (p, \{p_i\}_{i \in [n]})$ that supports it. Note that if $W$ is supported by a price system with price $p$, then $p \le n/|W|$, since the total spending by voters is $p\cdot |W|$ which is at most the total budget $n$.
Note that the definition of priceability does not refer to a target committee size $k$.

We first show that every priceable committee $W$ satisfies \emph{proportional justified representation} (PJR), for committee size $k = |W|$, an axiom introduced by \citet{pjr17}.\footnote{For a target committee size $k$, \Cref{prop:price-implies-pjr} shows that priceable committees $W$ with $|W| = k$ satisfy PJR for the quota $n/k$. There may be priceable committees $W$ with $|W| < k$ which do not satisfy PJR for the quota $n/k$.}

\begin{proposition}
	\label{prop:price-implies-pjr}
	Suppose that $W$ is a priceable committee. Then $W$ provides proportional justified representation: For every group $S \subseteq N$ of voters such that $|\bigcap_{i\in S} A_i| \ge \ell$ and such that $|S| \ge \ell \cdot n/|W|$, the committee $W$ contains at least $\ell$ candidates from $\bigcup_{i \in S} A_i$.
\end{proposition}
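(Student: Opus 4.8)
The plan is to argue by contradiction, assuming PJR fails for $W$, and to expose a violation of the third condition in the definition of a supporting price system. So suppose $W$ is supported by a price system $\textsf{ps} = (p, \{p_i\}_{i \in [n]})$, and suppose there is a group $S \subseteq N$ with $|\bigcap_{i\in S} A_i| \ge \ell$ and $|S| \ge \ell \cdot n/|W|$ for which $W$ contains at most $\ell - 1$ candidates from $\bigcup_{i\in S} A_i$. Throughout I will use the a priori bound $p \le n/|W|$ recorded just after the definition of priceability.

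First I would locate a witness candidate. Since $\bigcap_{i\in S} A_i \subseteq \bigcup_{i\in S} A_i$ and $W$ meets the latter set in at most $\ell-1$ candidates while $|\bigcap_{i\in S} A_i| \ge \ell$, there is at least one candidate $c^\star \in \bigcap_{i\in S} A_i \setminus W$. Every voter in $S$ approves $c^\star$, so $S$ is contained in the supporter set $\{i : c^\star \in A_i\}$, and I can apply the third supporting condition to $c^\star$.

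Next I would bound the spending of $S$. By the first property of a price system a voter pays only for approved candidates, and by the second supporting condition no money reaches candidates outside $W$; hence each voter $i \in S$ spends only on $W \cap A_i \subseteq W \cap \bigcup_{i\in S} A_i$, a set of at most $\ell-1$ candidates. As each committee member receives total payment exactly $p$, the total spent by $S$ is at most $(\ell-1)p$, so the total unspent budget of $S$ is at least $|S| - (\ell-1)p$. Using $|S| \ge \ell \cdot n/|W| \ge \ell p$, this unspent budget is at least $\ell p - (\ell-1)p = p$. The third supporting condition applied to $c^\star$ asserts that the total unspent budget of $c^\star$'s supporters, which includes all of $S$, is at most $p$, so everything is pinned to the boundary.

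Extracting a genuine contradiction despite the weak inequality $\le p$ in the third condition is, I expect, the only delicate point, and I would resolve it by splitting on whether $p < n/|W|$ or $p = n/|W|$. If $p < n/|W|$, then $|S| \ge \ell \cdot n/|W| > \ell p$, so the unspent budget of $S$ strictly exceeds $p$, directly violating the third condition for $c^\star$. If instead $p = n/|W|$, then the total price paid is $p\cdot|W| = n$, which equals the voters' combined budget, so every voter must spend her entire dollar and no budget remains; but then the amount spent by $S$ equals $|S| \ge \ell \cdot n/|W| = \ell p$, contradicting the upper bound of $(\ell-1)p$ on that spending. Either way we reach a contradiction, establishing that $W$ contains at least $\ell$ candidates from $\bigcup_{i\in S} A_i$.
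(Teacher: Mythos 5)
Your proposal is correct and follows essentially the same route as the paper: locate a candidate $c^\star \in \bigcap_{i\in S} A_i \setminus W$, bound the spending of $S$ by $(\ell-1)p$, and conclude that its leftover budget is at least $p$, then sharpen to a strict violation of the third supporting condition. The only (cosmetic) difference is how strictness is extracted: the paper observes that the positive leftover budget of $S$ forces $p < n/|W|$ and hence a strict inequality, whereas you split on $p < n/|W|$ versus $p = n/|W|$ and handle the boundary case via total spending; both arguments are valid.
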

\begin{proof}
	Suppose $W$ is supported by the price system $\textsf{ps} = (p, (p_i)_{i \in N})$, and assume for a contradiction that $|W \cap \bigcup_{i \in S} A_i| < \ell$. Because voters only spend their money on approved candidates, the members of $S$ have together paid the price $p$ for at most $\ell - 1$ candidates, so $\sum_{i\in S} \sum_{c \in C} p_i(c) \le (\ell - 1)\cdot p$. Thus, the total amount of unspent budget held by members of $S$ is
	\[ |S| - \sum_{i\in S} \sum_{c \in C} p_i(c) \ge \ell \cdot \tfrac n {|W|} - (\ell - 1)\cdot p = \ell \cdot (\tfrac n {|W|} - p) + p \ge p, \]
	where the last inequality follows since $p \le \frac n {|W|}$. In fact, the last inequality must be strict, because the total spending $p\cdot |W|$ by all voters is strictly less than the total budget $n$ (because the voters in $S$ have a remaining budget of $\ge p > 0$) and hence $p < \frac n {|W|}$. 
	
	Because $|\bigcap_{i\in S} A_i| \ge \ell$, there exists a candidate $c \in \bigcap_{i\in S} A_i \setminus W$. The supporters of $c$ have strictly more than $p$ dollars left, contradicting the definition of a price system.
\end{proof}

Note that one can check whether a given committee is priceable by solving a simple linear program. In contrast, it is coNP-complete to check whether a given committee provides proportional justified representation \citep{AEHLSS18}.

We now show that Phragm\'en's rule always returns a priceable committee. (Thus, for every profile $P$ and every $k$, there always exists a priceable committee of size $k$.) The proof follows very naturally from the definition; this leads us to view priceability as a defining property of ``Phragm\'en-like'' rules, and we note that the Method of Equal Shares is one of them. A similar proof works for what is known in the literature as ``Phragm\'en's optimal rule''.

\begin{proposition}\label[proposition]{thm:phragmen_and_pricing}
Phragm\'en's rule and Equal Shares always return priceable committees.
\end{proposition}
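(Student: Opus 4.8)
The plan is to extract an explicit price system directly from the monetary process that defines each rule, and then verify the three support conditions in turn. For Rule X this is almost immediate; for Phragm\'en's rule one extra normalization step is needed, because that rule does not hand voters their budget up front.

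I would start with Rule X. Set the price to $p = n/k$, and for each voter $i$ and candidate $c$ let $p_i(c)$ be the amount that $i$ actually paid for $c$ during the run of the rule (so $p_i(c) = 0$ unless $c$ was selected and $i \in N(c)$). Because each selected candidate is bought for a total of exactly $p$ and voters only ever pay for approved candidates, conditions~(1) and~(2) of a supporting price system hold by construction, and each $p_i$ is a valid payment function since a voter never spends more than their initial dollar. The only condition needing an argument is~(3). But Rule X terminates precisely when no candidate is $q$-affordable for \emph{any} $q$; letting $q \to \infty$ in \eqref{eq:afford}, the left-hand side tends to $\sum_{i \in N(c)} b_i$, so the stopping criterion says exactly that $\sum_{i \in N(c)} b_i < p$ for every $c \notin W$, where $b_i$ is the final remaining budget of voter $i$. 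This is condition~(3) (in fact with strict inequality), so the final budgets give a supporting price system.

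For Phragm\'en's rule the same payments work after rescaling. Let $t^{*}$ be the time at which the $k$-th candidate is elected; since voters earn money at unit rate, each voter has earned exactly $t^{*}$ dollars by then. I would divide every monetary quantity by $t^{*}$: set $p = (n/k)/t^{*}$ and let $p_i(c)$ be $1/t^{*}$ times the amount $i$ paid for $c$. After rescaling each voter has budget $1$, each elected candidate receives total payment $p$, and no unelected candidate is paid, giving conditions~(1) and~(2); the rescaled payments lie in $[0,1]$ because a voter never spends more than the $t^{*}$ dollars earned. For condition~(3), the key observation is the invariant that Phragm\'en's rule maintains: an as-yet-unelected candidate's supporters never hold a combined balance exceeding $n/k$, since the first instant a candidate's supporters reach $n/k$ that candidate is bought. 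Hence at the frozen state at time $t^{*}$, every unelected candidate has supporters with combined (unscaled) balance at most $n/k$ --- otherwise it would have been elected strictly before $t^{*}$ --- and buying the $k$-th candidate only resets some balances to zero, which can only decrease these sums. After rescaling this combined balance is at most $p$, which is condition~(3).

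The step to get right is this last one: justifying the threshold invariant for Phragm\'en and checking that resetting balances when the $k$-th candidate is bought does not push any unelected candidate's supporters above the bound. The remainder is bookkeeping. Finally, since Phragm\'en's rule always outputs a committee, the construction also yields the parenthetical claim that a priceable committee always exists, and the identical monetary argument applies to Phragm\'en's optimal rule, which obeys the same threshold behavior.
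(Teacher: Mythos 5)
Your proposal is correct and follows essentially the same route as the paper: read off the payments from the run of each rule, use the stopping condition of Rule X (no candidate is $q$-affordable for any $q$) for condition~(3), and for Phragm\'en's rule rescale the time-$t^*$ budgets to $1$ and invoke the threshold invariant that an unelected candidate's supporters never accumulate more than $n/k$. The paper phrases the rescaling as scale-invariance of priceability rather than dividing by $t^*$, but the argument is the same.
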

\begin{proof}
The fact that Equal Shares returns priceable committees follows directly from its definition. In the remaining part of the proof we will focus on Phragm\'en's sequential rule.

First, observe that the definition of priceability does not depend on the initial budget of the voters. One can always rescale their initial budgets, the price, and their payment functions by multiplying them by a constant. 

Consider an election instance $(P, k)$ and let $W$ be the committee returned by Phragm\'en's sequential rule for this instance.  Assume that the rule stopped at time $t$ (recall that $t$ does not have to be an integer). Thus, during the execution of the rule each voter earned $t$ dollars in total. We will now construct a price system that supports $W$, where the initial budgets of the voters equals $t$. We set the price to $\nicefrac{n}{k}$.  For each voter $i$ we construct her payment function as follows; by default we set the payments $p_i(c)$ to 0. Whenever a candidate $c$ is added to the committee by the rule, and whenever the rule charges voter $i$ for adding $c$ to the winning committee, we set  $p_i(c) = x$, where $x$ is the amount deducted from $i$. It is clear that each voter spends at most~$t$ dollars, and that each voter pays only for the candidates that she approves of. It also follow directly from the definition of the rule, that each committee member $c \in W$ will receive a total payment of $\nicefrac{n}{k}$. Finally for each $c' \notin W$ the voters who approve $c'$ cannot have a remaining budget of strictly more than $\nicefrac{n}{k}$ dollars in total; otherwise, the rule would have added $c'$ to the committee at a time strictly earlier than $t$.
\end{proof}

Recall that Equal Shares may return committees consisting of fewer than $k$ candidates. By \Cref{thm:phragmen_and_pricing}, this committee is priceable. If desired, one can find a superset of this committee of size $k$ that is also priceable, for example by filling the remaining seats by running Phragm\'en's rule with initial budgets equal to those at the end of the execution of Equal Shares.

\citet{bri-las-sko:c:apportionment} classified different approval-based committee rules by their behavior on party-list profiles. The committee selection problem on party-list profiles has been studied as the \myemph{apportionment problem}. Both Thiele's and Phragm\'en's rule implement the D'Hondt method of apportionment. For Phragm\'en's rule, this follows from the fact that it is priceable, as we now show.

\begin{theorem}
	\label{thm:price-dhondt}
For each party-list profile $P$, a committee $W$ is priceable if and only if $W$ is selected by the D'Hondt method of apportionment with committee size $k = |W|$.
\end{theorem}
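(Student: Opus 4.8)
The plan is to prove both directions of the equivalence by relating the structure of priceable committees on party-list profiles to the divisor-based logic of the D'Hondt method. Recall that for a party-list profile the voters partition into parties $C_1,\dots,C_r$ with supporter counts $n_1,\dots,n_r$, and D'Hondt assigns each party $z$ the number of seats $k_z$ determined by the highest-averages rule: the $k$ seats go to the $k$ largest values among the quotients $n_z/1, n_z/2, n_z/3, \dots$ across all parties. Since approval sets within a party are identical and disjoint across parties, a committee $W$ is fully described (up to which specific candidates are chosen within each party, which is irrelevant since within-party candidates are interchangeable) by the seat distribution $(k_1,\dots,k_r)$ with $\sum_z k_z = k$ and $k_z \le |C_z|$. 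So the real content is to show that the priceable seat distributions are exactly the D'Hondt distributions.

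First I would analyze what priceability means on a party-list profile. Fix a price system $(p,\{p_i\})$ supporting $W$. Because parties are disjoint, each voter only ever pays members of her own party, so the budget of each party is spent internally. Within party $z$, the $n_z$ voters have total budget $n_z$ and pay the price $p$ for each of the $k_z$ elected members, so their total spending is $p\cdot k_z \le n_z$, leaving unspent budget $n_z - p\,k_z \ge 0$. The key observations are: (1) the ``no affordable outside candidate'' condition, applied to an unelected candidate in party $z$ (which exists whenever $k_z < |C_z|$), forces the party's total remaining budget to satisfy $n_z - p\,k_z \le p$, i.e.\ $n_z/(k_z+1) \le p$; and (2) since the voters of party $z$ did pay for $k_z$ candidates at price $p$ each, we need $p \le n_z/k_z$ whenever $k_z \ge 1$ (the party could afford what it bought). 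Combining, every party satisfies the two-sided bound
\begin{align*}
\frac{n_z}{k_z+1} \le p \le \frac{n_z}{k_z} \quad (\text{with the upper bound vacuous when } k_z=0).
\end{align*}
This is exactly the characterization of a D'Hondt (highest-averages) outcome: a seat vector $(k_1,\dots,k_r)$ is a D'Hondt allocation if and only if there is a threshold $p$ lying weakly between every party's ``last seat won'' quotient $n_z/k_z$ and its ``first seat missed'' quotient $n_z/(k_z+1)$.

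To finish, for the forward direction I would take the two-sided inequalities above and argue they imply the seat vector is a valid D'Hondt allocation, handling the boundary cases ($k_z=0$, or a party receiving all $|C_z|$ of its candidates so that no outside candidate exists to constrain it). For the converse, given a D'Hondt allocation with threshold $p^*$ (chosen as, say, the $k$-th largest quotient, or any value in the gap between the $k$-th and $(k+1)$-th largest quotients), I would construct an explicit supporting price system: set the price to this $p^*$, and within each party $z$ have its $n_z$ voters split the cost of its $k_z$ members equally, each contributing $p^* k_z/n_z \le 1$ dollars total; one checks each elected candidate receives exactly $p^*$ and that the remaining budget $n_z - p^* k_z$ of any party with an unelected candidate is at most $p^*$ by the defining inequality $n_z/(k_z+1)\le p^*$. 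The step I expect to need the most care is the handling of ties and boundary cases: when a party's quotient exactly equals the threshold (so both D'Hondt and priceability admit more than one seat vector), and when $k_z = |C_z|$ so that condition (3) of priceability imposes no constraint on that party --- I must make sure the constructed price system still satisfies condition (3) globally and that the two characterizations of admissible outcomes coincide including these degenerate situations. The continuity/rescaling remark (a price system can be rescaled by a constant) lets me freely treat the budget as $1$ or $t$ as convenient when constructing payments.
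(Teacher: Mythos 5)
Your proof is correct, and the direction ``priceable $\Rightarrow$ D'Hondt'' is essentially the paper's argument: both derive, for every party $z$, the two-sided bound $n_z/(k_z+1) \le p \le n_z/k_z$ (the upper bound from the party having paid for its $k_z$ seats out of budget $n_z$, the lower bound from condition (3) applied to an unelected party member) and then invoke the standard threshold characterization of highest-averages allocations --- the paper just phrases it contrapositively, picking two parties $P_x, P_y$ with $n_x/(k_x+1) > n_y/k_y$ and exhibiting the affordable extra candidate directly. Where you genuinely diverge is the converse. The paper disposes of ``D'Hondt $\Rightarrow$ priceable'' in one line by citing the known fact that D'Hondt outcomes on party-list profiles coincide with Phragm\'en's sequential rule's outputs, which are priceable by \Cref{thm:phragmen_and_pricing}; you instead build the price system explicitly, taking $p^*$ to be the $k$-th largest quotient and having each party's $n_z$ voters split the cost of each of its $k_z$ winners equally, then verifying $p^*k_z/n_z \le 1$ and $n_z - p^*k_z \le p^*$ from the defining quotient inequalities. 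Your construction is self-contained, makes the price's identity as a divisor-method threshold transparent, and handles the degenerate cases ($k_z=0$, $k_z=|C_z|$, tied quotients) on its own terms, at the cost of a little more case analysis; the paper's version is shorter but imports two external results. Both are sound.
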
 
\begin{proof}
Consider a party list election instance $(P, k)$ with $r$ parties $P_1, P_2, \ldots, P_{r}$; for $z \in [r]$ let $n_z$ denote the number of voters supporting party $P_z$, $\sum_{z \in [r]} n_z = n$.

The fact that each committee returned by the D'Hondt method of apportionment is supported by a price system follows because this committee is selected by Phragm\'en's sequential rule~\cite[Theorem~3]{bri-las-sko:c:apportionment}, whose outputs are supported by price systems (\Cref{thm:phragmen_and_pricing}).

Now, we prove the other implication. Assume that $W$ is a committee supported by a price system $\textsf{ps} = (p, \{p_i\}_{i \in [n]})$, and for the sake of contradiction, assume that $W$ is not a committee returned by the D'Hondt method of apportionment. This, in particular means that there exist two parties, $P_x$ and $P_y$, with the number of  seats assigned respectively equal $k_x$ and $k_y$, such that $\nicefrac{n_x}{k_x+1} > \nicefrac{n_y}{k_y}$ (see, e.g., \citealp{lac-sko:t:approval-thiele}). The total amount of money held by the voters from $P_y$ is equal to $n_y$, and so, the price cannot be higher than $\nicefrac{n_y}{k_y}$. The total amount of unused money, held by the voters from $P_x$ is equal to:
\begin{align*}
n_x - k_x \cdot p \geq n_x - k_x \cdot \frac{n_y}{k_y} > n_x - k_x \cdot \frac{n_x}{k_x + 1} = \frac{n_x}{k_x + 1} > \frac{n_y}{k_y} \geq p\text{.}
\end{align*}
Thus, an additional candidate from $P_x$ can receive the total payment of more than $p$. This contradicts the assumption that $\textsf{ps}$ supports $W$, and completes the proof.
\end{proof}

Equal Shares does not implement the D'Hondt method on party-list. Instead, it gives each party a number of seats that is equal to its \emph{lower quota}. This does not contradict \Cref{thm:price-dhondt} because Equal Shares may return committees of size $|W| < k$. However, all ways of completing Equal Shares to get committee size exactly $k$ will implement the D'Hondt method, as long as these completions preserve priceability.

\subsection{Priceability does not imply efficiency}

Priceability requires only that voters have a similar influence on the outcome. In contrast to other proportionality axioms, it does not imply any guarantees on utility levels of groups. For example, laminar proportionality contains an efficiency component in requiring that unanimous candidates are elected first; this component is missing from priceability. The following example makes this clear, and also shows that priceability and laminar proportionality are logically incomparable.

\begin{example}\label{ex:payments_bad_alloc_1}
Fix $k$, and take $k$ voters and $2k$ candidates. Voter $i$ approves $c_i \cup \{c_{k+1}, \ldots, c_k\}$. The instance is shown below. The committee $\{c_1, \ldots, c_k\}$ (marked blue) is priceable, but clearly, $\{c_{k+1}, \ldots, c_{2k}\}$ (marked green) is a much better choice (and also priceable).

\[
\begin{tikzpicture}
\filldraw[fill=green!10!white, draw=black] (0,0.5) rectangle (4.0,1.0);
\node at (2.0, 0.75) {$c_{k+1}$};
\filldraw[fill=green!10!white, draw=black] (0,1.0) rectangle (4.0,1.5);
\node at (2.0, 1.25) {$c_{k+2}$};
\filldraw[fill=green!10!white, draw=black] (0,1.5) rectangle (4.0,2.0);
\node at (2.0, 1.75) {$\cdots$};
\filldraw[fill=green!10!white, draw=black] (0,2.0) rectangle (4.0,2.5);
\node at (2.0, 2.25) {$c_{2k}$};

\filldraw[fill=blue!10!white, draw=black] (0,0) rectangle (0.8,0.5);
\node at (0.4, 0.25) {$c_1$};
\node at (0.4, -0.35) {$v_1$};
\filldraw[fill=blue!10!white, draw=black] (0.8, 0) rectangle (1.6,0.5);
\node at (1.2, 0.25) {$c_2$};
\node at (1.2, -0.35) {$v_2$};
\filldraw[fill=blue!10!white, draw=black] (1.6, 0) rectangle (2.4,0.5);
\node at (2.0, 0.25) {$c_3$};
\node at (2.0, -0.35) {$v_3$};
\filldraw[fill=blue!10!white, draw=black] (2.4, 0) rectangle (3.2,0.5);
\node at (2.8, 0.25) {$\cdots$};
\node at (2.8, -0.35) {$\cdots$};
\filldraw[fill=blue!10!white, draw=black] (3.2, 0) rectangle (4.0,0.5);
\node at (3.6, 0.25) {$c_{k}$};
\node at (3.6, -0.35) {$v_{k}$};
\end{tikzpicture}
\tag*{\qed}
\]
\end{example} 

The blue committee in the example is not Pareto-efficient, and is arbitrarily bad in terms of utilitarian welfare as $k \to \infty$. Given \Cref{prop:price-implies-pjr}, this example also shows that proportional justified representation is a weak axiom, and does not rule out bad committees. In contrast, extended justified representation (which we will further discuss in \Cref{sec:deviations}) forces the efficient green committee. This is a strong reason to prefer the latter axiom.

\subsection{No welfarist rule is priceable}
Using a similar technique to \Cref{thm:no_welfarist_lam_prop}, we can show a conflict between welfarism and priceability. We show that no Pareto-optimal welfarist rule can satisfy priceability. The additional assumption of Pareto-optimality seems mild in the context of welfarist rules; we conjecture that the incompatibility holds even without assuming Pareto-optimality.

\begin{figure}[t!b]
\begin{center}
Profile 1:
\vspace{0.3cm}

\begin{tikzpicture}
\filldraw[fill=blue!10!white, draw=black] (0,0.0) rectangle (4.8,0.5);
\node at (2.4, 0.25) {$c_{1}$};
\filldraw[fill=blue!10!white, draw=black] (0,0.5) rectangle (4.8,1.0);
\node at (2.4, 0.75) {$c_{2}$};
\filldraw[fill=blue!10!white, draw=black] (0,1.0) rectangle (4.8,1.5);
\node at (2.4, 1.25) {$c_3$};

\filldraw[fill=blue!10!white, draw=black] (0, 1.5) rectangle (0.8,2.0);
\node at (0.4, -0.35) {$v_1$};
\filldraw[fill=blue!10!white, draw=black] (0.8, 1.5) rectangle (1.6,2.0);
\node at (1.2, -0.35) {$v_2$};
\filldraw[fill=blue!10!white, draw=black] (1.6, 1.5) rectangle (2.4,2.0);
\node at (2.0, -0.35) {$v_3$};
\filldraw[fill=blue!10!white, draw=black] (2.4, 1.5) rectangle (3.2,2.0);
\node at (2.8, -0.35) {$v_4$};
\filldraw[fill=blue!10!white, draw=black] (3.2, 1.5) rectangle (4.0,2.0);
\node at (3.6, -0.35) {$v_{5}$};
\filldraw[fill=blue!10!white, draw=black] (4.0, 1.5) rectangle (4.8,2.0);
\node at (4.4, -0.35) {$v_{6}$};

\filldraw[fill=blue!10!white, draw=black] (0, 2.0) rectangle (0.8,2.5);
\filldraw[fill=blue!10!white, draw=black] (0.8, 2.0) rectangle (1.6,2.5);
\filldraw[fill=blue!10!white, draw=black] (1.6, 2.0) rectangle (2.4,2.5);
\filldraw[fill=blue!10!white, draw=black] (2.4, 2.0) rectangle (3.2,2.5);
\filldraw[fill=blue!10!white, draw=black] (3.2, 2.0) rectangle (4.0,2.5);
\filldraw[fill=blue!10!white, draw=black] (4.0, 2.0) rectangle (4.8,2.5);

\filldraw[fill=blue!10!white, draw=black] (0, 2.5) rectangle (0.8,3.0);
\filldraw[fill=blue!10!white, draw=black] (0.8, 2.5) rectangle (1.6,3.0);
\filldraw[fill=blue!10!white, draw=black] (1.6, 2.5) rectangle (2.4,3.0);
\filldraw[fill=blue!10!white, draw=black] (2.4, 2.5) rectangle (3.2,3.0);
\filldraw[fill=blue!10!white, draw=black] (3.2, 2.5) rectangle (4.0,3.0);
\filldraw[fill=blue!10!white, draw=black] (4.0, 2.5) rectangle (4.8,3.0);

\filldraw[fill=blue!10!white, draw=black] (0, 3.0) rectangle (0.8,3.5);
\filldraw[fill=blue!10!white, draw=black] (0.8, 3.0) rectangle (1.6,3.5);
\filldraw[fill=blue!10!white, draw=black] (1.6, 3.0) rectangle (2.4,3.5);
\filldraw[fill=blue!10!white, draw=black] (2.4, 3.0) rectangle (3.2,3.5);
\filldraw[fill=blue!10!white, draw=black] (3.2, 3.0) rectangle (4.0,3.5);
\filldraw[fill=blue!10!white, draw=black] (4.0, 3.0) rectangle (4.8,3.5);

\filldraw[fill=white, draw=black] (0, 3.5) rectangle (0.8,4.0);
\filldraw[fill=white, draw=black] (0.8, 3.5) rectangle (1.6,4.0);
\filldraw[fill=white, draw=black] (1.6, 3.5) rectangle (2.4,4.0);
\filldraw[fill=white, draw=black] (2.4, 3.5) rectangle (3.2,4.0);
\filldraw[fill=white, draw=black] (3.2, 3.5) rectangle (4.0,4.0);
\filldraw[fill=white, draw=black] (4.0, 3.5) rectangle (4.8,4.0);

\node at (2.4, 4.25) {$\cdots$};

\filldraw[fill=white, draw=black] (0, 4.5) rectangle (0.8,5.0);
\filldraw[fill=white, draw=black] (0.8, 4.5) rectangle (1.6,5.0);
\filldraw[fill=white, draw=black] (1.6, 4.5) rectangle (2.4,5.0);
\filldraw[fill=white, draw=black] (2.4, 4.5) rectangle (3.2,5.0);
\filldraw[fill=white, draw=black] (3.2, 4.5) rectangle (4.0,5.0);
\filldraw[fill=white, draw=black] (4.0, 4.5) rectangle (4.8,5.0);

\filldraw[fill=blue!10!white, draw=black] (4.8, 0.0) rectangle (5.6,0.5);
\node at (5.2, -0.35) {$v_7$};
\filldraw[fill=blue!10!white, draw=black] (5.6, 0.0) rectangle (6.4,0.5);
\node at (6.0, -0.35) {$v_8$};
\filldraw[fill=blue!10!white, draw=black] (6.4, 0.0) rectangle (7.2,0.5);
\node at (6.8, -0.35) {$v_9$};
\filldraw[fill=blue!10!white, draw=black] (7.2, 0.0) rectangle (8.0,0.5);
\node at (7.6, -0.35) {$v_{10}$};
\filldraw[fill=blue!10!white, draw=black] (8.0, 0.0) rectangle (8.8,0.5);
\node at (8.4, -0.35) {$v_{11}$};
\filldraw[fill=blue!10!white, draw=black] (8.8, 0.0) rectangle (9.6,0.5);
\node at (9.2, -0.35) {$v_{12}$};

\filldraw[fill=blue!10!white, draw=black] (4.8, 0.5) rectangle (5.6,1.0);
\filldraw[fill=blue!10!white, draw=black] (5.6, 0.5) rectangle (6.4,1.0);
\filldraw[fill=blue!10!white, draw=black] (6.4, 0.5) rectangle (7.2,1.0);
\filldraw[fill=blue!10!white, draw=black] (7.2, 0.5) rectangle (8.0,1.0);
\filldraw[fill=blue!10!white, draw=black] (8.0, 0.5) rectangle (8.8,1.0);
\filldraw[fill=blue!10!white, draw=black] (8.8, 0.5) rectangle (9.6,1.0);

\filldraw[fill=blue!10!white, draw=black] (4.8, 1.0) rectangle (5.6,1.5);
\filldraw[fill=blue!10!white, draw=black] (5.6, 1.0) rectangle (6.4,1.5);
\filldraw[fill=blue!10!white, draw=black] (6.4, 1.0) rectangle (7.2,1.5);
\filldraw[fill=blue!10!white, draw=black] (7.2, 1.0) rectangle (8.0,1.5);
\filldraw[fill=blue!10!white, draw=black] (8.0, 1.0) rectangle (8.8,1.5);
\filldraw[fill=blue!10!white, draw=black] (8.8, 1.0) rectangle (9.6,1.5);

\filldraw[fill=blue!10!white, draw=black] (4.8, 1.5) rectangle (5.6,2.0);
\filldraw[fill=blue!10!white, draw=black] (5.6, 1.5) rectangle (6.4,2.0);
\filldraw[fill=blue!10!white, draw=black] (6.4, 1.5) rectangle (7.2,2.0);
\filldraw[fill=blue!10!white, draw=black] (7.2, 1.5) rectangle (8.0,2.0);
\filldraw[fill=blue!10!white, draw=black] (8.0, 1.5) rectangle (8.8,2.0);
\filldraw[fill=blue!10!white, draw=black] (8.8, 1.5) rectangle (9.6,2.0);

\filldraw[fill=blue!10!white, draw=black] (4.8, 2.0) rectangle (5.6,2.5);
\filldraw[fill=blue!10!white, draw=black] (5.6, 2.0) rectangle (6.4,2.5);
\filldraw[fill=blue!10!white, draw=black] (6.4, 2.0) rectangle (7.2,2.5);
\filldraw[fill=blue!10!white, draw=black] (7.2, 2.0) rectangle (8.0,2.5);
\filldraw[fill=blue!10!white, draw=black] (8.0, 2.0) rectangle (8.8,2.5);
\filldraw[fill=blue!10!white, draw=black] (8.8, 2.0) rectangle (9.6,2.5);

\filldraw[fill=white, draw=black] (4.8, 2.5) rectangle (5.6,3.0);
\filldraw[fill=white, draw=black] (5.6, 2.5) rectangle (6.4,3.0);
\filldraw[fill=white, draw=black] (6.4, 2.5) rectangle (7.2,3.0);
\filldraw[fill=white, draw=black] (7.2, 2.5) rectangle (8.0,3.0);
\filldraw[fill=white, draw=black] (8.0, 2.5) rectangle (8.8,3.0);
\filldraw[fill=white, draw=black] (8.8, 2.5) rectangle (9.6,3.0);

\node at (7.2, 3.25) {$\cdots$};

\filldraw[fill=white, draw=black] (4.8, 3.5) rectangle (5.6,4.0);
\filldraw[fill=white, draw=black] (5.6, 3.5) rectangle (6.4,4.0);
\filldraw[fill=white, draw=black] (6.4, 3.5) rectangle (7.2,4.0);
\filldraw[fill=white, draw=black] (7.2, 3.5) rectangle (8.0,4.0);
\filldraw[fill=white, draw=black] (8.0, 3.5) rectangle (8.8,4.0);
\filldraw[fill=white, draw=black] (8.8, 3.5) rectangle (9.6,4.0);

\end{tikzpicture}
\end{center}

\begin{center}
Profile 2:
\vspace{0.3cm}

\begin{tikzpicture}
\filldraw[fill=blue!10!white, draw=black] (4.8,0.0) rectangle (9.6,0.5);
\node at (7.2, 0.25) {$c_{1}$};
\filldraw[fill=blue!10!white, draw=black] (4.8,0.5) rectangle (9.6,1.0);
\node at (7.2, 0.75) {$c_{2}$};

\filldraw[fill=blue!10!white, draw=black] (4.8,1.0) rectangle (9.6,1.5);
\node at (7.2, 1.25) {$c_3$};

\filldraw[fill=blue!10!white, draw=black] (4.8, 1.5) rectangle (5.6,2.0);
\filldraw[fill=blue!10!white, draw=black] (5.6, 1.5) rectangle (6.4,2.0);
\filldraw[fill=blue!10!white, draw=black] (6.4, 1.5) rectangle (7.2,2.0);
\filldraw[fill=blue!10!white, draw=black] (7.2, 1.5) rectangle (8.0,2.0);
\filldraw[fill=blue!10!white, draw=black] (8.0, 1.5) rectangle (8.8,2.0);
\filldraw[fill=blue!10!white, draw=black] (8.8, 1.5) rectangle (9.6,2.0);

\filldraw[fill=blue!10!white, draw=black] (4.8, 2.0) rectangle (5.6,3.0);
\filldraw[fill=blue!10!white, draw=black] (5.6, 2.0) rectangle (6.4,3.0);
\filldraw[fill=blue!10!white, draw=black] (6.4, 2.0) rectangle (7.2,3.0);
\filldraw[fill=blue!10!white, draw=black] (7.2, 2.0) rectangle (8.0,3.0);
\filldraw[fill=blue!10!white, draw=black] (8.0, 2.0) rectangle (8.8,3.0);
\filldraw[fill=blue!10!white, draw=black] (8.8, 2.0) rectangle (9.6,3.0);

\filldraw[fill=blue!10!white, draw=black] (4.8, 2.5) rectangle (5.6,3.0);
\filldraw[fill=blue!10!white, draw=black] (5.6, 2.5) rectangle (6.4,3.0);
\filldraw[fill=blue!10!white, draw=black] (6.4, 2.5) rectangle (7.2,3.0);
\filldraw[fill=blue!10!white, draw=black] (7.2, 2.5) rectangle (8.0,3.0);
\filldraw[fill=blue!10!white, draw=black] (8.0, 2.5) rectangle (8.8,3.0);
\filldraw[fill=blue!10!white, draw=black] (8.8, 2.5) rectangle (9.6,3.0);

\filldraw[fill=blue!10!white, draw=black] (4.8, 3.0) rectangle (5.6,3.5);
\filldraw[fill=blue!10!white, draw=black] (5.6, 3.0) rectangle (6.4,3.5);
\filldraw[fill=blue!10!white, draw=black] (6.4, 3.0) rectangle (7.2,3.5);
\filldraw[fill=blue!10!white, draw=black] (7.2, 3.0) rectangle (8.0,3.5);
\filldraw[fill=blue!10!white, draw=black] (8.0, 3.0) rectangle (8.8,3.5);
\filldraw[fill=blue!10!white, draw=black] (8.8, 3.0) rectangle (9.6,3.5);

\filldraw[fill=white, draw=black] (4.8, 3.5) rectangle (5.6,4.0);
\filldraw[fill=white, draw=black] (5.6, 3.5) rectangle (6.4,4.0);
\filldraw[fill=white, draw=black] (6.4, 3.5) rectangle (7.2,4.0);
\filldraw[fill=white, draw=black] (7.2, 3.5) rectangle (8.0,4.0);
\filldraw[fill=white, draw=black] (8.0, 3.5) rectangle (8.8,4.0);
\filldraw[fill=white, draw=black] (8.8, 3.5) rectangle (9.6,4.0);

\node at (7.2, 4.25) {$\cdots$};

\filldraw[fill=white, draw=black] (4.8, 4.5) rectangle (5.6,5.0);
\filldraw[fill=white, draw=black] (5.6, 4.5) rectangle (6.4,5.0);
\filldraw[fill=white, draw=black] (6.4, 4.5) rectangle (7.2,5.0);
\filldraw[fill=white, draw=black] (7.2, 4.5) rectangle (8.0,5.0);
\filldraw[fill=white, draw=black] (8.0, 4.5) rectangle (8.8,5.0);
\filldraw[fill=white, draw=black] (8.8, 4.5) rectangle (9.6,5.0);

\filldraw[fill=blue!10!white, draw=black] (0, 2.0) rectangle (0.8,2.5);
\filldraw[fill=blue!10!white, draw=black] (0.8, 2.0) rectangle (1.6,2.5);
\filldraw[fill=blue!10!white, draw=black] (1.6, 2.0) rectangle (2.4,2.5);
\filldraw[fill=blue!10!white, draw=black] (2.4, 2.0) rectangle (3.2,2.5);
\filldraw[fill=blue!10!white, draw=black] (3.2, 2.0) rectangle (4.0,2.5);
\filldraw[fill=blue!10!white, draw=black] (4.0, 2.0) rectangle (4.8,2.5);

\filldraw[fill=blue!10!white, draw=black] (0, 1.5) rectangle (0.8,2.0);
\filldraw[fill=blue!10!white, draw=black] (0.8, 1.5) rectangle (1.6,2.0);
\filldraw[fill=blue!10!white, draw=black] (1.6, 1.5) rectangle (2.4,2.0);
\filldraw[fill=blue!10!white, draw=black] (2.4, 1.5) rectangle (3.2,2.0);
\filldraw[fill=blue!10!white, draw=black] (3.2, 1.5) rectangle (4.0,2.0);
\filldraw[fill=blue!10!white, draw=black] (4.0, 1.5) rectangle (4.8,2.0);

\filldraw[fill=blue!10!white, draw=black] (0, 0.0) rectangle (0.8,0.5);
\filldraw[fill=blue!10!white, draw=black] (0.8, 0.0) rectangle (1.6,0.5);
\filldraw[fill=blue!10!white, draw=black] (1.6, 0.0) rectangle (2.4,0.5);
\filldraw[fill=blue!10!white, draw=black] (2.4, 0.0) rectangle (3.2,0.5);
\filldraw[fill=blue!10!white, draw=black] (3.2, 0.0) rectangle (4.0,0.5);
\filldraw[fill=blue!10!white, draw=black] (4.0, 0.0) rectangle (4.8,0.5);

\filldraw[fill=blue!10!white, draw=black] (0, 0.5) rectangle (0.8,1.0);
\filldraw[fill=blue!10!white, draw=black] (0.8, 0.5) rectangle (1.6,1.0);
\filldraw[fill=blue!10!white, draw=black] (1.6, 0.5) rectangle (2.4,1.0);
\filldraw[fill=blue!10!white, draw=black] (2.4, 0.5) rectangle (3.2,1.0);
\filldraw[fill=blue!10!white, draw=black] (3.2, 0.5) rectangle (4.0,1.0);
\filldraw[fill=blue!10!white, draw=black] (4.0, 0.5) rectangle (4.8,1.0);

\filldraw[fill=blue!10!white, draw=black] (0, 1.0) rectangle (0.8,1.5);
\filldraw[fill=blue!10!white, draw=black] (0.8, 1.0) rectangle (1.6,1.5);
\filldraw[fill=blue!10!white, draw=black] (1.6, 1.0) rectangle (2.4,1.5);
\filldraw[fill=blue!10!white, draw=black] (2.4, 1.0) rectangle (3.2,1.5);
\filldraw[fill=blue!10!white, draw=black] (3.2, 1.0) rectangle (4.0,1.5);
\filldraw[fill=blue!10!white, draw=black] (4.0, 1.0) rectangle (4.8,1.5);

\filldraw[fill=white, draw=black] (0, 2.5) rectangle (0.8,3.0);
\filldraw[fill=white, draw=black] (0.8, 2.5) rectangle (1.6,3.0);
\filldraw[fill=white, draw=black] (1.6, 2.5) rectangle (2.4,3.0);
\filldraw[fill=white, draw=black] (2.4, 2.5) rectangle (3.2,3.0);
\filldraw[fill=white, draw=black] (3.2, 2.5) rectangle (4.0,3.0);
\filldraw[fill=white, draw=black] (4.0, 2.5) rectangle (4.8,3.0);

\node at (2.4, 3.25) {$\cdots$};

\filldraw[fill=white, draw=black] (0, 3.5) rectangle (0.8,4.0);
\filldraw[fill=white, draw=black] (0.8, 3.5) rectangle (1.6,4.0);
\filldraw[fill=white, draw=black] (1.6, 3.5) rectangle (2.4,4.0);
\filldraw[fill=white, draw=black] (2.4, 3.5) rectangle (3.2,4.0);
\filldraw[fill=white, draw=black] (3.2, 3.5) rectangle (4.0,4.0);
\filldraw[fill=white, draw=black] (4.0, 3.5) rectangle (4.8,4.0);

\node at (0.4, -0.35) {$v_1$};
\node at (1.2, -0.35) {$v_2$};
\node at (2.0, -0.35) {$v_3$};
\node at (2.8, -0.35) {$v_4$};
\node at (3.6, -0.35) {$v_{5}$};
\node at (4.4, -0.35) {$v_{6}$};
\node at (5.2, -0.35) {$v_7$};
\node at (6.0, -0.35) {$v_8$};
\node at (6.8, -0.35) {$v_9$};
\node at (7.6, -0.35) {$v_{10}$};
\node at (8.4, -0.35) {$v_{11}$};
\node at (9.2, -0.35) {$v_{12}$};

\end{tikzpicture}
\end{center}
\caption{A diagram illustrating profiles used in the proof that the requirement of being supported by a price system is incompatible with welfarism (subject to Pareto-optimality). Each of the first six voters in each of the two profiles approves at least 57 candidates. }
\label{fig:welfarism_and_pricing}
\end{figure}

\begin{theorem}
There exists no Pareto-optimal welfarist rule that would always return committees supported by price systems.
\end{theorem}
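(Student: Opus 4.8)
The plan is to follow the template of \Cref{thm:no_welfarist_lam_prop}: assume for contradiction that $f$ is a Pareto-optimal welfarist rule, with monotonic functions $g_k$ as in \Cref{def:welfarist}, that always returns priceable committees. I would exhibit the two instances of \Cref{fig:welfarism_and_pricing} (Profiles~1 and~2), both with $n=12$ and $k=57$, together with the two welfare vectors $\mathbf{w}_1 = (7,7,7,7,7,7,5,5,5,5,5,5)$ and $\mathbf{w}_2 = (5,5,5,5,5,5,7,7,7,7,7,7)$, and argue that Profile~1 forces $g_{57}(\mathbf{w}_1) > g_{57}(\mathbf{w}_2)$ while Profile~2 forces the reverse. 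The profiles are mirror images: in Profile~1 the cohesive group $\{v_1,\dots,v_6\}$ jointly approves the three shared candidates $c_1,c_2,c_3$ (and each such voter additionally approves at least $57$ purely private candidates), whereas $v_7,\dots,v_{12}$ approve only private candidates; in Profile~2 the roles of the two groups are exchanged.

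I would analyse Profile~1 first. Since $g_{57}$ is monotonic, $f$ returns only Pareto-optimal committees, and I would argue that every such committee elects all of $c_1,c_2,c_3$: the only way to omit a shared candidate while keeping $|W|=57$ is to give the cohesive voters no private seat at all (otherwise swapping a cohesive private candidate for a missing shared one strictly helps the other five cohesive voters, a Pareto improvement), but then there are too few candidates to fill $57$ seats. I would then characterise the priceable committees among these. Writing $p$ for the common price, each non-cohesive voter pays $p$ per private candidate while the cohesive voters split each shared candidate and pay only $p/2$ each. Because every cohesive voter still approves an unelected candidate (this is where the ``at least $57$ approvals'' guarantee is used), the support condition of a price system forces each of them to spend all but at most $p$ of their budget, and the same holds for a non-cohesive voter that still has an unelected approved candidate. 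Turning these budget-exhaustion and affordability inequalities into constraints on the number of private candidates each voter buys, I expect the feasible price to lie in a narrow interval around $p=\tfrac15$, with the only integer solution giving each cohesive voter utility $3+4=7$ and each non-cohesive voter utility $5$; that is, every priceable Pareto-optimal committee induces $\mathbf{w}_1$. Finally, $\mathbf{w}_2$ is induced by an achievable committee giving the non-cohesive voters $7$ private seats and the cohesive voters $3+2$ seats, yet this committee is not priceable: the non-cohesive budgets cap the price at $p\le\tfrac17$, while exhaustion for the cohesive voters needs $p\ge\tfrac27$. Since $f$ returns a priceable ($\mathbf{w}_1$) committee and cannot return the non-priceable $\mathbf{w}_2$ committee, welfarism yields $g_{57}(\mathbf{w}_1) > g_{57}(\mathbf{w}_2)$. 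Profile~2 is handled by the identical argument with the two groups swapped, giving $g_{57}(\mathbf{w}_2) > g_{57}(\mathbf{w}_1)$ and hence the contradiction.

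The main obstacle is the middle step: showing that in Profile~1 \emph{every} priceable Pareto-optimal committee induces exactly $\mathbf{w}_1$, rather than some unbalanced welfare vector in which voters of the same group receive different numbers of private seats. Because priceability insists on a single global price, the two groups are coupled, and the delicate part is checking that the exhaustion inequalities squeeze each voter's allocation down to a single integer for every admissible $p$. The two levers are (i) Pareto-optimality, which forces all shared candidates into the committee, and (ii) the guarantee that every relevant voter still approves an unelected candidate, which activates the budget-exhaustion condition. The role played by laminar proportionality in \Cref{thm:no_welfarist_lam_prop} (uniquely fixing the selected welfare vector) is here played jointly by priceability and Pareto-optimality, which is precisely why the extra Pareto-optimality hypothesis appears in the statement.
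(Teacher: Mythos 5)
Your overall architecture is the same as the paper's: two mirror-image profiles, structural constraints extracted from priceability plus Pareto-optimality, two welfare vectors each achievable in both profiles, and contradictory welfarist orderings. But the step you yourself flag as the main obstacle --- that in Profile~1 \emph{every} priceable Pareto-optimal committee induces exactly $\mathbf w_1 = (7,\dots,7,5,\dots,5)$ --- is false, and the argument as written would break there. The exhaustion conditions only pin each non-cohesive voter $v_j$ ($j \ge 7$) to $u_j \in [\nicefrac1p - 1, \nicefrac1p]$ private seats, and the global budget bound confines $p$ to roughly $[\nicefrac{4}{23}, \nicefrac{4}{19}]$; this forces $u_j \in \{4,5\}$ but nothing sharper, and on the cohesive side the payments $s_i$ toward the shared candidates $c_1,c_2,c_3$ are free parameters (six voters share each of them, so your ``$p/2$ each'' is also off), which lets the private seats be distributed unevenly. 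Concretely, at $p = \nicefrac15$ the committee giving $v_1,v_2,v_3$ five private seats each (with $s_i = 0$), $v_4,v_5,v_6$ four private seats each (with $s_i = \nicefrac15$), $v_7,v_8,v_9$ four private seats and $v_{10},v_{11},v_{12}$ five private seats is supported by a price system, is Pareto-optimal, contains $c_1,c_2,c_3$, and induces $(8,8,8,7,7,7,4,4,4,5,5,5) \neq \mathbf w_1$. So the welfare vector of the winning committee is genuinely underdetermined.

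The repair is exactly what the paper does: do not aim for uniqueness. It suffices to prove that in Profile~1 each of $v_7,\dots,v_{12}$ gets $4$ or $5$ representatives (your two exhaustion bounds, or the paper's case analysis forcing $p \le \nicefrac16$ resp.\ $p \ge \nicefrac14$ and deriving a contradiction each way) and that, after Pareto-optimality forces $c_1,c_2,c_3$ into the committee, $v_1,\dots,v_6$ average at least $7$ representatives. Whatever welfare vector $\mathbf w_1$ the rule actually selects in Profile~1 then has some coordinate among the first six equal to at least $7$, while every winning vector $\mathbf w_2$ of Profile~2 has all of its first six coordinates in $\{4,5\}$; hence $\mathbf w_1 \neq \mathbf w_2$, both are achievable in both profiles (each voter approves at least $57$ candidates, so there is room to realize either vector), and welfarism yields $g_{57}(\mathbf w_1) > g_{57}(\mathbf w_2)$ from Profile~1 and the reverse from Profile~2. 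Your non-priceability computation for the $(5,\dots,5,7,\dots,7)$ committee is correct but not needed once the argument is phrased this way.
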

\begin{proof}
For the sake of contradiction assume that there exists a Pareto-optimal welfarist rule $f$ that always returns committees supported by price systems. 

Consider Profile 1 depicted in \Cref{fig:welfarism_and_pricing}. There, we have 12 voters and the committee size is $k = 57$; each voter approves at least $57$ candidates. Let $W\subseteq C$ with $|W| = k$ be a committee supported by some price system $\textsf{ps} = (p, \{p_i\}_{i \in [n]})$.

We first prove that each of the last 6 voters has at most 5 representatives in $W$. Assume towards a contradiction that one of the last 6 voters, say $v_7$, has at least 6 representatives. This means that $p \le \nicefrac{1}{6}$, since no other voter approves candidates approved by $v_7$, and thus $v_7$ must fully pay for these. Since $p \le \nicefrac{1}{6}$, each of the last 6 voters must have at least 5 representatives, since they otherwise have a remaining budget of strictly more than $p$. Thus, at least 31 seats are filled by candidates from $A_7 \cup \cdots \cup A_{12}$, leaving at most $57 - 31 = 26$ seats for other candidates. Hence, the total spending of voters $v_1,\dots,v_6$ is at most $\frac{26}{6}$. Thus, the average remaining budget among $v_1,\dots,v_6$ is at least
\begin{align*}
\frac{6 - \frac{26}{6}}{6} = 1 -  \frac{13}{18} = \frac{5}{18} > \frac{1}{6}.
\end{align*}
Hence, one of these voters has a remaining budget of strictly more than $p$, and could afford an additional representative, a contradiction.

Second, we prove that each of the last 6 voters has at least 4 representatives in the winning committee. For a contradiction, assume that one voter, say $v_7$, has at most 3 representatives. This means that $p \ge \nicefrac{1}{4}$. Thus, each of the last 6 voters can have at most 4 representatives in the committee. So there are at most $5 \cdot 4 + 3 = 23$ candidates from $A_7 \cup \cdots \cup A_{12}$ in $W$. This leaves $57 - 23 = 34$ seats for the remaining candidates, and thus $W$ contains at least $31$ candidates from $C_\dagger = (A_1 \cup \cdots \cup A_6) \setminus \{c_1, \dots, c_3\}$. Thus, one of $v_1, \dots, v_6$, say $v_1$ has at least 5 approved candidates in $W \cap C_\dagger$.  Since no other voter approves those candidates, we have $p \le \nicefrac{1}{5}$, a contradiction. 

Summarizing, each of the last 6 voters has 4 or 5 representatives in $W$, and so $W$ contains at least $57-30 = 27$ candidates from $A_1 \cup \cdots \cup A_6$. Now let $W$ be a committee selected by $f$. Since $f$ is Pareto-optimal, $W$ contains candidates $c_1$, $c_2$, and $c_3$. Then $W$ contains at least 24 candidates from $C_\dagger$, and so $v_1, \dots, v_6$ have on average at least $7$ representatives in $W$. Let $\mathbf w_1$ be the welfare vector corresponding to the winning committee.

Consider Profile 2, which is obtained from Profile 1 by exchanging the roles of the first and the last 6 voters. By the same reasoning as before, we infer that in the winning committee each of the first 6 voters has 4 or 5 representatives, that $c_1$, $c_2$, and $c_3$ are selected, and that the last 6 voters have on average at least 7 representatives. Let $\mathbf w_2$ denote the welfare vector corresponding to the winning committee in Profile 2.

Observe that both $\mathbf w_1$ and $\mathbf w_2$ are achievable in Profile 1 and in Profile 2. Thus, based on the analysis of Profile 1 we get that $f$ prefers $\mathbf w_1$ over $\mathbf w_2$, and based on the analysis of Profile 2 we get that $f$ prefers $\mathbf w_2$ over $\mathbf w_1$, a contradiction.
\end{proof}

\section{The Core}

The main stability notion of cooperative game theory, the core, can be applied to the committee setting. Consider an election instance $(P,k)$. Based on a definition of \citet{justifiedRepresenattion}, we say that a committee $W$ with $|W| \le k$ is in the \myemph{core} if there does not exist a group of voters $S \subseteq N$ and a set of candidates $T$ such that $\frac{|T|}{k} \le \frac{|S|}{n}$ and for each voter $i \in S$ we have $|A_i \cap T| > |A_i \cap W|$. For example, a committee would be unstable if there existed a coalition of $40\%$ of the voters who can propose a committee that has $40\%$ of the allowed size such that each coalition member strictly prefers the smaller committee they proposed. The core implies other axioms such as extended justified representation (EJR). All known committee rules fail the core, and it is not known whether it is always non-empty.\footnote{If one generalizes from approval preferences to arbitrary additive utilities, there are small instances where the core is empty \citep{FMS18}.}

The open question of core existence has proved tantalizing, and many researchers have tried to settle the problem, leading to positive results when allowing randomization or approximation. We survey this work at the end of Section~\ref{sec:approx}. Simultaneously, due to its intuitive appeal and simple definition, some researchers have identified the core as the ``ultimate'' notion of proportionality and fairness in the committee setting. While we agree that the core idea is attractive, it is worth revisiting our earlier example which shows that if we accept the core, we must reject some egalitarian intuitions.

An outcome admits a \myemph{Pigou--Dalton transfer} if we can obtain an alternative outcome where the utility of all but two individuals remains unchanged, and we have reduced the inequality between those two individuals: the individual with higher utility was lowered, and the individual with lower utility was raised (without changing their relative order nor the sum of their utilities). The \myemph{Pigou--Dalton principle} prescribes that we should only select outcomes that do not admit a Pigou--Dalton transfer. This is a minimal condition for preferring more equal outcomes.

\begin{figure}[t!b]
\centering
\begin{tikzpicture}
[yscale=0.43,xscale=0.78,voter/.style={anchor=south, yshift=-7pt}, select/.style={fill=blue!10}, c/.style={anchor=south, yshift=1.5pt, inner sep=0}]
	\draw[select] (0,0) rectangle (3,1);
	\draw[select] (0,1) rectangle (3,2);
	\draw[select] (0,2) rectangle (3,3);
	\draw[select] (0,3) rectangle (1,4);
	\draw[select] (1,3) rectangle (2,4);
	\draw[select] (2,3) rectangle (3,4);
	\node at (1.5,0.42) {$c_1$};
	\node at (1.5,1.42) {$c_2$};
	\node at (1.5,2.42) {$c_3$};
	\node at (0.5,3.42) {$c_4$};
	\node at (1.5,3.42) {$c_5$};
	\node at (2.5,3.42) {$c_6$};
	\foreach \x in {3,4,5}
		{
		\foreach \y in {0,1}
			{
			\draw[select] (\x,\y) rectangle (\x+1,\y+1);
			}
		\foreach \y in {2}
			{
			\draw (\x,\y) rectangle (\x+1,\y+1);
			}
		}
	\node at (3.5,0.42) {$c_{7}$};
	\node at (3.5,1.42) {$c_{8}$};
	\node at (3.5,2.42) {$c_{9}$};
	\node at (4.5,0.42) {$c_{10}$};
	\node at (4.5,1.42) {$c_{11}$};
	\node at (4.5,2.42) {$c_{12}$};
	\node at (5.5,0.42) {$c_{13}$};
	\node at (5.5,1.42) {$c_{14}$};
	\node at (5.5,2.42) {$c_{15}$};
	\foreach \i in {1,...,6}
		\node[voter] at (\i-0.5,-1) {$v_\i$};
\end{tikzpicture}
\caption{An election instance with $k=12$ in which every core outcome violates the Pigou--Dalton principle.}
\label{fig:core_pigou_dalton}
\end{figure}
Consider the election instance shown in \Cref{fig:core_pigou_dalton} for $k = 12$, which we previously considered in the introduction. Any committee $W$, to be in the core, must contain $c_1$, $c_2$, $c_3$, and at least one of $c_4$, $c_5$, $c_6$. Otherwise, the coalition $\{v_1, v_2, v_3\}$ can block $W$ by proposing $\{c_1, \dots, c_6\}$. This leaves at most $k - 4 = 8$ committee seats to be filled with candidates approved by voters $v_4, v_5, v_6$. Hence, in $W$, there is a voter with utility 4 and a voter with utility 2. Thus, the committee $W$ admits a Pigou--Dalton transfer: take out $c_4$, $c_5$, or $c_6$, and add a candidate approved by a voter with utility 2. This example shows that the core is incompatible with the Pigou--Dalton principle. Any core outcome has avoidable inequality.

The intuitive reason why the core forces inequality is that the core guarantees higher utility to groups that are easily pleased: in our example, the coalition $\{v_1, v_2, v_3\}$ approves three candidates in common, which allows them to demand higher utility than the other voters who are less cohesive.

It is easy to see that PAV satisfies the Pigou--Dalton principle, and thus it is immediate that it fails the core. In the next section, we will quantify the incompatibility between Pigou--Dalton and the core, and we will see that PAV comes as close as possible to satisfying the core, subject to honoring the Pigou--Dalton principle.

\subsection{Approximations to the core}
\label{sec:approx}
For $\lambda \ge 1$, we say that a committee $W$ is in the \myemph{$\lambda$-core} if there does not exist a coalition of voters $S \subseteq N$ and a set of candidates $T$ such that $\frac{|T|}{k} \le \frac{|S|}{n}$ and for each voter $i \in S$ we have $|A_i \cap T| > \max(\lambda \cdot |A_i \cap W|, 1)$. Lower values of $\lambda$ correspond to stronger stability guarantees. For example, for $\lambda = 2$, the condition requires that no coalition can deviate in such a way that each coalition member doubles their utility. In the definition we use $\max(\lambda \cdot |A_i \cap W|, 1)$ instead of simply $\lambda \cdot |A_i \cap W|$ for a technical reason. If $|A_i \cap W| = 0$ then all, even very high, values of $\lambda$ impose the same (weak) constraint on the voter's $i$ satisfaction. Our proof for the Method of Equal Shares (\Cref{thm:rule_x_core}) depends on the requirement that $|A_i \cap T| > 1$ whenever $|A_i \cap W| = 0$.\footnote{This is mainly the artifact of the fact that Equal Shares can select less than $k$ candidates. We anticipate that even for the proof of  \Cref{thm:rule_x_core} we could omit the ``max'' operator from the definition of the $\lambda$-core, if we assumed that the committee returned by Equal Shares is enlarged in some specific way.} All the remaining results would hold even if we omitted the ``max'' operator from the definition. 

In the previous section, we saw that no rule that satisfies the Pigou--Dalton principle can be in the core. By generalising the example, we can show that any such rule must even fail the $(2-\epsilon)$-core for every $\epsilon > 0$.

\begin{theorem}
\label{thm:2-core-pigou-dalton}
For each $\epsilon > 0$ there exists no rule that satisfies the Pigou--Dalton principle and the $(2-\epsilon)$-core property.
\end{theorem}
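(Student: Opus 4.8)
The plan is to generalize the instance of \Cref{fig:core_pigou_dalton}, amplifying it so that the factor-$2$ relaxation of the core still leaves an unavoidable inequality. Fix $\epsilon > 0$ and write $\lambda = 2-\epsilon$. I would use two groups of voters: a \emph{cohesive} group $A$ of $M$ voters who all approve a common pool $D$ of $s$ candidates, where in addition voter $j \in A$ approves a private pool $F_j$ of $s$ further candidates (the sets $D, F_1, \dots, F_M$ pairwise disjoint); and a \emph{diffuse} group $B$ of $L$ voters with pairwise disjoint private pools (also disjoint from $A$'s candidates), each large. Set $n = M + L$ and choose the committee size $k$ just large enough that the coalition $A$ can exactly afford its entire approved set, i.e. $\tfrac{|A|}{n}\,k = |D \cup \bigcup_j F_j| = s(M+1)$. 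The parameters are then tuned by taking $s$ of order $1/\epsilon$, $M$ large, and $L$ much larger than $M$; the precise inequalities are collected at the end.

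The key forcing step uses the coalition $A$ together with the deviation $T = D \cup \bigcup_j F_j$, which is feasible since $\tfrac{|T|}{k} = \tfrac{|A|}{n}$. In any committee $W$ in the $\lambda$-core, this deviation cannot block, so some voter $v^\ast \in A$ must fail the blocking inequality; since $|A_{v^\ast} \cap T| = |A_{v^\ast}| = 2s > 1$, this yields $|A_{v^\ast} \cap W| \ge \tfrac{2s}{\lambda} = \tfrac{2s}{2-\epsilon}$. This is exactly where $\epsilon > 0$ is essential: because $\lambda < 2$ we have $\tfrac{2s}{2-\epsilon} > s = |D|$, so the representatives of $v^\ast$ cannot all come from the common pool, and $v^\ast$ must own at least one private representative $f \in F_{v^\ast} \cap W$. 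Thus every $\lambda$-core committee contains a voter of utility at least $\tfrac{2s}{2-\epsilon}$ who has a \emph{removable} private candidate in $W$.

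For the low end I would use only a crude budget count. The representatives of $v^\ast$ already occupy at least $\tfrac{2s}{2-\epsilon}$ seats of $W$ that lie among $A$'s candidates, so the candidates approved by $B$ occupy at most $k - \tfrac{2s}{2-\epsilon}$ seats; as $B$'s pools are pairwise disjoint, the utilities of the $L$ voters in $B$ sum to at most this amount, whence some $v_\circ \in B$ has utility at most $\tfrac{k - 2s/(2-\epsilon)}{L}$. The parameter choice makes this at most $\tfrac{2s}{2-\epsilon} - 2$, so $v_\circ$ trails $v^\ast$ by at least $2$, and since $v_\circ$'s pool is large it has an unelected approved candidate. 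Removing $f$ from $W$ and adding that candidate changes only the utility of $v^\ast$ (down by one) and of $v_\circ$ (up by one), preserves their relative order and their sum, and strictly reduces their inequality: a Pigou--Dalton transfer. Hence every $(2-\epsilon)$-core committee admits such a transfer, so no rule can return outcomes that are simultaneously in the $(2-\epsilon)$-core and free of Pigou--Dalton transfers. (No separate nonemptiness proof is needed: if the $(2-\epsilon)$-core is empty the claim is immediate, and otherwise every committee in it admits the transfer just exhibited.)

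The main obstacle is the parameter tuning, which must reconcile two opposing demands: affordability of $T$ forces $k/n$ to be at least about $s$, so that $A$ is numerous enough to claim its fair share cheaply through the shared pool $D$, while the counting bound needs $k/L$ to be close to $s$ so that the diffuse voters cannot all climb to within $1$ of $v^\ast$. Taking $M$ large removes the $s/M$ overhead in the affordability constraint, and taking $L \gg M$ drives $\tfrac{M+L}{L} \to 1$; combined with $s \ge 3(2-\epsilon)/\epsilon$ these force the gap to be at least $2$, and one checks that the integrality of $k$ and of the individual utilities only perturbs the estimates harmlessly. I expect verifying that this balance can be struck uniformly as $\epsilon \to 0$, rather than any single inequality, to be the delicate part of the argument.
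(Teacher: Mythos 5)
Your proposal is correct and is essentially the paper's own argument: the paper uses the same construction (a cohesive group sharing $y$ common candidates plus $y$ private ones each, against a large diffuse group with disjoint ballots, with $k$ tuned so the cohesive group can afford a deviation roughly doubling everyone's utility), the same forcing step (the $(2-\epsilon)$-core pushes one cohesive voter strictly above the common-pool size, so she owns a removable private representative), and the same averaging step producing a diffuse voter who trails by at least $2$, yielding a Pigou--Dalton transfer. The only differences are cosmetic parameter choices (the paper takes $y > 1/\epsilon$, $x \ge y^2$, $k = y^2x + y$, giving each cohesive deviator $2y-1$ rather than $2s$ representatives); your constant $s \ge 3(2-\epsilon)/\epsilon$ may need a slight bump to absorb the $s/M$ overhead, but as you note this is routine tuning.
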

\begin{proof}
Let $f$ be a rule that satisfies the $(2-\epsilon)$-core property for some fixed value of $\epsilon > 0$. 

Consider a profile constructed as follows. Let us fix two positive integers, $x$ and $y$, with $x \geq y^2$ and $y > \nicefrac{1}{\epsilon}$. First, we have $x$ voters who approve some $y$ common candidates; additionally, each such a voter approves some $y$ different candidates. Next, we have $yx$ voters, each approving $y$ candidates, different for each voter. This gives in total $m = y + yx + y^2x$ candidates. The committee size is $k = y^2x + y$. First, we show that, according to the core property, the first group of $x$ voters should be able to decide about at least $y + x(y-1)$ candidates. Indeed:
\begin{align*}
\frac{x}{n} \cdot k &= (y^2x + y) \cdot \frac{x}{x + yx} =  \frac{y^2x + y}{1 + y} =  \frac{y^2x +yx - yx - y^2 + y^2 + y}{1 + y}  \\
                            &\geq \frac{y^2x +yx - yx - x + y^2 + y}{1 + y} = yx -x + y = y + (y-1)x \text{.}
\end{align*}
Thus, this group of voters can suggest a subset of candidates where each voter has $2y-1$ representatives. 
Since $f$ satisfies the $(2-\epsilon)$-core property, we infer that at least one of the first $x$ voters must have at least $y+1$ representatives; indeed this is because $\frac{2y-1}{y} > 2 - \epsilon$. Thus, among the last $yx$ voters there must exists one that has at most $y-1$ representatives:
\begin{align*}
 \frac{k - y -1}{yx}  < \frac{k - y}{yx} = \frac{y^2x}{yx} = y \text{.}
\end{align*} 
This violates the Pigou--Dalton principle.
\end{proof}

Since PAV satisfies the Pigou--Dalton principle, \Cref{thm:2-core-pigou-dalton} shows that PAV cannot satisfy the $(2-\epsilon)$-core. However, as we now show, it does satisfy the $2$-core. Thus, on this measure, PAV optimally approximates the core subject to Pigou--Dalton.

\begin{theorem}
PAV satisfies the $2$-core property.
\end{theorem}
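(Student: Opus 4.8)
The plan is to contradict the optimality of a PAV-winning committee $W$ by exhibiting a single profitable candidate swap. Suppose, for contradiction, that some coalition $(S,T)$ with $n\,|T| \le k\,|S|$ blocks $W$ in the $2$-core sense, so every $i \in S$ has $|A_i \cap T| > \max(2|A_i \cap W|,1)$. Writing $u_i = |A_i \cap W|$, integrality turns this into $|A_i \cap T| \ge 2u_i+1$ when $u_i \ge 1$, and $|A_i \cap T| \ge 2$ when $u_i = 0$. I would first dispose of the degenerate case $|W| < k$: a size-deficient PAV optimum forces every unelected candidate to be approved by nobody, whence $A_i \subseteq W$ and $|A_i \cap T| \le |A_i \cap W|$ for all $i$, so no blocking pair can exist. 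Thus I may assume $|W| = k$.

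The two workhorses are the \emph{marginal gain} $\Delta(c) = \sum_{i \in N(c)} \tfrac{1}{u_i+1}$ of adding $c \notin W$ and the \emph{marginal loss} $L(w) = \sum_{i \in N(w)} \tfrac{1}{u_i}$ of removing $w \in W$, which are exactly the changes in $\score_{\mathrm{PAV}}$ caused by each operation. I would bound the total gain over $T \setminus W$ from below by restricting each sum to $S$:
\[
\sum_{t \in T \setminus W} \Delta(t) \;\ge\; \sum_{i \in S} \frac{|A_i \cap (T \setminus W)|}{u_i + 1} \;\ge\; |S|,
\]
where the last inequality holds term-by-term because $|A_i \cap (T\setminus W)| \ge |A_i \cap T| - u_i \ge u_i + 1$. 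Averaging and using $|T\setminus W| \le |T| \le \tfrac{k}{n}|S|$ yields some $t^\star \in T\setminus W$ with $\Delta(t^\star) \ge n/k$. Dually, the losses telescope: $\sum_{w \in W} L(w) = |\{i : u_i \ge 1\}| \le n$, so some $w^\star \in W$ has $L(w^\star) \le n/k$. Since deleting $w^\star$ can only raise the marginal value of $t^\star$ (it weakly decreases every $u_i$), the swap satisfies $\score_{\mathrm{PAV}}(W - w^\star + t^\star) - \score_{\mathrm{PAV}}(W) \ge \Delta(t^\star) - L(w^\star) \ge 0$.

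The main obstacle, and the only delicate point, is upgrading this to a \emph{strict} improvement, since PAV permits ties and a weak inequality alone contradicts nothing. Here I would split into two cases exploiting that the two averaging bounds cannot both be tight. If $T \cap W \ne \emptyset$, then $|T\setminus W| < |T| \le \tfrac{k}{n}|S|$, which makes the averaging step give $\Delta(t^\star) > n/k$ strictly. If instead $T \cap W = \emptyset$, then each per-voter term is strictly above $1$ (it equals $|A_i \cap T| \ge 2$ when $u_i = 0$, and is at least $\tfrac{2u_i+1}{u_i+1} > 1$ when $u_i \ge 1$), so $\sum_{t \in T\setminus W}\Delta(t) > |S|$ and again $\Delta(t^\star) > n/k$. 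In either case $\Delta(t^\star) > n/k \ge L(w^\star)$, so $W - w^\star + t^\star$ has strictly larger PAV score than $W$, contradicting optimality and establishing the $2$-core property.
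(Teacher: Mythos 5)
Your proof is correct and follows essentially the same route as the paper's: bound the total marginal PAV gain over $T \setminus W$ from below by $|S|$, average to find a candidate worth at least $\nicefrac{n}{k}$, telescope the removal losses to find a committee member worth at most $\nicefrac{n}{k}$, and swap. Your strictness argument is in fact slightly cleaner (you exploit $|A_i \cap T| \ge 2$ when $|A_i \cap W| = 0$, which follows from the $\max$ in the paper's definition of the $\lambda$-core, whereas the paper routes the tie-breaking through an extra sub-case on the removal side), and you explicitly handle $|W| < k$, which the paper leaves implicit.
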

\begin{proof}
Let $W$ be a committee returned by PAV for some approval profile $P$.
Assume for a contradiction that there is a group of voters $S \subseteq N$ and a set of candidates $T$ such that:
\begin{enumerate}
\item $|T| \leq k \cdot \frac{|S|}{n}$, and
\item For each voter $i \in S$ we have $|A_i \cap T| \geq 2 |A_i \cap W| + 1$.
\end{enumerate} 
For each candidate $c \in T \setminus W$, we write $\Delta_c$ for the increase of the PAV score due to adding $c$ to~$W$. Let us compute the following sum: $\sum_{c \in T \setminus W} \Delta_c$. Consider a voter $i \in S$. Let $r_i = |A_i \cap W|$ denote the number of representatives that $i$ has in committee $W$. Observe that for a candidate $c \in T \setminus W$ who is approved by $i$, by adding $c$ to $W$ we would increase the PAV score that $i$ assigns to the committee by $\frac{1}{r_i + 1}$. Further, observe that the number of candidates from $T \setminus W$ that $i$ approves is at least equal to:
\begin{align}\label{eq:voter_approvals_num}
|A_i \cap T| - |A_i \cap W| \geq 2|A_i \cap W| + 1 - |A_i \cap W| = |A_i \cap W| + 1 = r_i + 1 \text{.}
\end{align}
Thus, we get that:
\begin{align}\label{eq:pav_core_sum_of_mariginals}
\begin{split}
\sum_{c \in T \setminus S} \Delta_c &\geq \sum_{i \in S} \left(|A_i \cap T| - |A_i \cap W|\right) \cdot \frac{1}{r_i + 1} \\
                                    &\geq \sum_{i \in S} (r_i + 1) \cdot \frac{1}{r_i + 1} = \sum_{i \in S} 1 = |S| \text{.}
\end{split}
\end{align}
Consequently, there exists $c \in T \setminus W$, such that
\begin{align}\label{eq:pav_core_pigeonhole}
\Delta_c \geq \frac{|S|}{|T \setminus W|} \geq \frac{|S|}{k \cdot \frac{|S|}{n}} = \frac{n}{k} \text{.}
\end{align}
In fact, the above inequality is strict or at least one voter has no representatives in the committee~$W$. To see that, consider the following two cases. If $T \cap W \neq \emptyset$, then $|T \setminus W| < k \cdot \frac{|S|}{n}$, and \eqref{eq:pav_core_pigeonhole} becomes a strict inequality. Otherwise, if $T \cap W = \emptyset$, then we can observe that the number of members from $T \setminus W$ who are approved by a voter $i$ is at least equal to the number of members from $T$, approved by a voter $i$:
\begin{align*}
|A_i \cap T| \geq 2|A_i \cap W| + 1 = 2r_i + 1\text{.}
\end{align*}
If $r_i \neq 0$ for at least one voter from $S$, then $|A_i \cap T| >  r_i + 1$, and so, when deriving \eqref{eq:pav_core_sum_of_mariginals} we could use \eqref{eq:voter_approvals_num} as a strict inequality, getting \eqref{eq:pav_core_sum_of_mariginals} as a strict inequality. If $r_i = 0$, then, by definition, $i$ has no representatives in the elected committee.

In any case, we have that $\Delta_c > \nicefrac{n}{k}$, or that $\Delta_c \geq \nicefrac{n}{k}$ and $A_i \cap W = \emptyset$ for some $i \in S$.

On the other hand, for $c\in W$ let $\delta_c$ denote the decrease of the PAV score due to removing $c$ from $W$. For each voter $i \in N$ the score that $i$ assigns to $W$ will decrease for exactly $r_i$ members of $W$. Each time, the decrease will be equal to $\frac{1}{r_i}$. Thus, we have that:
\begin{align*}
\sum_{c \in W} \delta_c = \sum_{i\colon r_i > 0} r_i \cdot \frac{1}{r_i} = \sum_{i\colon r_i > 0}1 \leq n \text{.}
\end{align*}
Further, if $A_i \cap S = \emptyset$ for at least one voter, then the above inequality is strict. Consequently, there exists a candidate $c'$ whose removal would decrease the PAV score of $W$ by at most $\nicefrac{n}{k}$ (or by strictly less than $\nicefrac{n}{k}$ if $A_i \cap W = \emptyset$ for some $i \in S$). 

Thus, by replacing $c'$ with $c$ in $W$ we would obtain a committee with a higher PAV score than~$W$, a contradiction. 
\end{proof}

We can also show that Equal Shares approximates the core, although only to a logarithmic factor. The proof is in the appendix.

\newcommand{\ruleXandCoreApproximation}{\mbox{}Equal Shares satisfies the $O\left(\log k\right)$-core property, but fails the  $O\left(\log^c k\right)$-core property for each $c < 1$.}

\begin{theorem}\label{thm:rule_x_core}
\ruleXandCoreApproximation
\end{theorem}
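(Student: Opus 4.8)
The plan is to prove the two directions separately: an upper bound showing Rule~X lies in the $\lambda$-core for some $\lambda = O(\log k)$, and a matching family of instances on which Rule~X is blocked at level $\Omega(\log k)$, which rules out the $O(\log^c k)$-core for every $c<1$ since $\log k / \log^c k \to \infty$.

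For the upper bound I would first record two structural facts about the run of Rule~X. First, the sequence of minimal affordability parameters $q$ chosen across iterations is non-decreasing: the budgets $b_i(t)$ only shrink, so for every fixed candidate the minimal $q$ satisfying \eqref{eq:afford} can only grow, and since each step selects the global minimum over the still-available candidates, the selected values form a non-decreasing sequence. Second, the stopping rule gives a \emph{termination budget condition}: for every unbought candidate $c \notin W$ we have $\sum_{i \in N(c)} b_i < \nicefrac{n}{k}$, for otherwise $c$ would be $q$-affordable with $q = \max_i b_i$ and the rule would not have stopped.

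Then, assuming for contradiction a coalition $(S,T)$ blocks $W$ at level $\lambda$ with $|T| \le \tfrac{|S|}{n}k$, I would pass to $T' = T \setminus W$ and note that each $i \in S$ approves at least $(\lambda-1)\,|A_i \cap W|$ candidates of $T'$ (and at least two when $A_i \cap W = \emptyset$, using the $\max(\cdot,1)$ clause). Summing the termination condition over $c \in T'$ yields $\sum_{i \in S} b_i\,|A_i \cap T'| < |T'|\cdot \nicefrac nk \le |S|$. This alone is not enough, because a heavily represented voter may have exhausted their budget; the crux is therefore a quantitative lower bound on the budget the $S$-voters retain relative to the representation they obtain. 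Here I would exploit the non-decreasing thresholds to decompose the run into $O(\log k)$ dyadic budget scales: a voter's $j$-th representative is bought at a threshold that, together with the feasibility $|T|\cdot\nicefrac nk \le |S|$, forces its purchase to have been shared among enough co-payers that a PJR/EJR-style counting (as in \Cref{prop:price-implies-pjr}) caps the representation deficit contributed by each scale; summing over the $O(\log k)$ scales bounds $\lambda$. I expect this scale bookkeeping---relating per-representative payments to how widely each purchase was shared, and hence to the deviation's feasibility---to be the main obstacle, since it is exactly the step where the logarithm (rather than a constant, as for PAV) must appear.

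For the lower bound I would construct, for each $c<1$, an instance forcing a block at level $\Omega(\log k)$. The mechanism is the gap between EJR and the core: EJR only guarantees that \emph{some} member of a cohesive group is well represented, whereas a core deviation must improve \emph{every} member. I would build a cohesive group $S$ recursively in $\Theta(\log k)$ layers so that Rule~X, spending budget greedily on the cheapest (most widely shared) candidates first, represents the group very unequally---its members ending with representations spread over a logarithmic range, the poorest obtaining only $O(1)$ seats---while the candidates approved by all of $S$ let the coalition afford a deviation $T$, with $|T| \le \tfrac{|S|}{n}k$, giving every member $\Omega(\log k)$ representatives. Verifying that Rule~X really produces this unequal profile, by tracking the thresholds and budget depletion layer by layer, and that the equalizing deviation remains feasible, is the delicate part of this direction.
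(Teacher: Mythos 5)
Your proposal is a plan rather than a proof: in both directions you correctly locate the difficulty and then explicitly defer it, and in both cases the deferred step is precisely where the content of the theorem lives. For the upper bound, your structural facts (monotonicity of the selected thresholds $q$, and the termination condition $\sum_{i \in N(c)} b_i < \nicefrac{n}{k}$ for unelected $c$) are correct, and your summed inequality $\sum_{i\in S} b_i\,|A_i\cap T'| < |S|$ is valid but, as you note, insufficient. The ``dyadic budget scale'' decomposition you propose to close the gap is not carried out, and it is not the mechanism by which the logarithm actually arises. The paper's argument instead sets $s_{S'} = \sum_{i\in S'}(|A_i\cap T| - |A_i\cap W|)$ and peels voters one at a time: by pigeonhole some unelected candidate of $T$ is approved by at least $s_{S'}/|T|$ members of $S'$, so as long as every member of $S'$ retains budget $p|T|/s_{S'}$ the rule cannot stop; hence some voter $i$ drops below that level, the blocking condition converts $i$'s representative count into a lower bound on $|A_i\cap T|-|A_i\cap W|$ of order $s_{S'}/(p|T|\alpha)$, and removing $i$ shrinks $s_{S'}$ by the factor $\bigl(1-\tfrac{1}{p|T|\alpha}\bigr)$. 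Iterating $|S|/2$ times gives geometric decay $e^{-1/(2\alpha)}$ against the trivial bounds $|S|/2 \le s_{S_e}$ and $s_S \le |S||T|$, which forces $\alpha \gtrsim 1/\log k$. Nothing in your sketch supplies a substitute for this quantitative step.

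The lower bound sketch has a more substantive problem than incompleteness. A $\lambda$-core violation requires \emph{every} member of $S$ to improve by a factor exceeding $\lambda$, so a construction in which ``the poorest obtain $O(1)$ seats'' while the deviation gives ``every member $\Omega(\log k)$ representatives'' only certifies the required multiplicative gap for the poorest members; members who already hold many representatives under Rule~X would need proportionally more in $T$, which an absolute guarantee of $\Omega(\log k)$ does not provide. Moreover the deviating set in the paper's construction is \emph{not} approved by all of $S$ (so the EJR-versus-core framing via cohesive groups does not apply): the construction uses $x$ layers $S_1,\dots,S_x$ where a voter in layer $\ell$ approves $x^{\ell}$ of the $x^x$ candidates in the deviating set $R$, plus widely shared distractor candidates that Rule~X buys first (being cheapest per voter) and that drain each layer's budget in turn, leaving layer $\ell$ with only $\frac{x^{\ell}-1}{x-1}$ representatives. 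Every layer then improves by a factor of about $x-1$, and $k < 3x^x$ gives $x > \log^c k$ for any fixed $c<1$. The geometric growth of approval counts across layers is exactly what makes the improvement multiplicative for the rich layers as well as the poor ones, and this is the ingredient your sketch is missing.
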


\paragraph{Other approximate core notions.}
Other authors have considered alternative notions of approximate core outcomes.
\citet{FMS18} consider an approximate core notion that uses both multiplicative and additive factors. They show that a rule based on dependent rounding of a fractional committee can approximate the core in their sense. For each $\lambda \in (1,2]$ they show that there exists a committee such that for no coalition $S$ and set $T$ with $|T| \le k \cdot (2 - \lambda) \cdot \frac{|S|}{n}$ we have $|A_i \cap T| \ge \lambda \cdot |A_i \cap W| + \eta$ for all $i \in S$, where $\eta$ is a constant depending only on $\lambda$ and~$k$. Note the extra factor $(2 - \lambda)$ in the upper bound for $|T|$. This notion of approximation is incomparable to ours (for $\lambda = 2$ it is weaker). \citet{cheng2019group} show that under PAV, it can occur that coalitions can deviate using a set $T$ that is much smaller than the limit $k \cdot \frac{|S|}{n}$ imposed by the core; they show that there can be successful deviations with $|T| = O(\beta \cdot \frac{|S|}{n})$ where $\beta = \sqrt{k}$. \citet{jiang2019approx} show that for $\beta = 32$ there always exists an outcome where the deviation is not possible.

\subsection{The core with constrained deviations}
\label{sec:deviations}

In \Cref{thm:rule_x_core}, we saw that Equal Shares only provides a logarithmic approximation to the core. In comparison to PAV, this may sound like bad news. However, the lower bound example for \Cref{thm:rule_x_core} (in the appendix) has an interesting structure: the blocking coalitions in the example profile are very unequal. While some coalition members are extremely satisfied by the blocking proposal, others are barely better off. In this section, we show that whenever Equal Shares fails the core, the output committee can only be blocked by proposals that are ``unfair'' to members of the blocking coalition. In practice, it will be difficult to form such coalitions since they face internal instability. This is a similar principle as the bargaining set from cooperative game theory.

We begin by defining a general notion of the core with constraints on the allowed deviations. This yields a way of studying relaxations to the core different from the quantitative approximation notions discussed in the previous section.

A \myemph{property of committees} $\mathcal P$ is a set of pairs $((P,k), W)$ where $(P,k)$ is an election instance and $W$ is a committee. For example, priceability is a property of committees.

\begin{definition}\label{def:core_relax}
Let $\mathcal{P}$ be a property of committees. Given an instance $(P,k)$ and a committee $W$, we say that a pair $(S, T)$, with $S \subseteq N(P)$ and $T \subseteq C(P)$  is an \myemph{allowed deviation} from $W$ if
\begin{enumerate}[(i)]
\item $|T| \leq k \cdot \nicefrac{|S|}{n}$,
\item for each $i \in S$ we have that $|A_i \cap T| > |A_i \cap W|$, and
\item $T$ has property $\mathcal P$ in the instance $(P|_S, |T|)$ obtained by restricting $P$ to the voters in $S$.
\end{enumerate}
We say that a committee rule $f$ satisfies the \myemph{core subject to $\mathcal{P}$} if for each profile $P$ and each winning committee $W \in f(P)$ there exists no allowed deviation.
\end{definition}

Consider the property $\mathcal{P}_{\mathrm{coh}}$ called \myemph{cohesiveness}, where $((P,k), T) \in \mathcal{P}_{\mathrm{coh}}$ if every voter in $P$ approves all candidates in $T$. It is easy to see that a rule $f$ satisfies the core subject to $\mathcal{P}_{\mathrm{coh}}$ if and only if $f$ satisfies \myemph{extended justified representation} (EJR) \citep{justifiedRepresenattion}. A committee $W$ satisfies EJR if for every group of voters $S\subseteq N$ with $|S| \ge \ell \cdot \nicefrac n k$ and $|\bigcap_{i\in S}A_i| \ge \ell$, there exists some $i\in S$ such that $|W \cap A_i| \ge \ell$.

\begin{theorem}\label{thm:rulex_and_ejr}
The Method of Equal Shares satisfies EJR.
\end{theorem}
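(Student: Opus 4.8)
The plan is to argue by contradiction, exploiting that a cohesive group's total starting budget of at least $\ell \cdot n/k = \ell p$ is exactly enough to buy $\ell$ commonly-approved candidates. Suppose EJR fails: there is a group $S$ with $|S| \ge \ell \cdot n/k$ and $|\bigcap_{i\in S}A_i| \ge \ell$, yet $|W \cap A_i| \le \ell-1$ for every $i \in S$. Since any commonly-approved candidate that lies in $W$ would be a representative of \emph{every} voter in $S$, at most $\ell-1$ of the (at least $\ell$) common candidates can belong to $W$; hence there is a common candidate $c^\ast \in \bigcap_{i\in S} A_i \setminus W$, which is therefore available to be bought (but never is) throughout the run of Rule X.

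The heart of the argument is an invariant stating that every voter in $S$ stays ``rich'': at the start of every round $t$, $b_i(t) \ge p/|S|$ for all $i \in S$. I would prove this by induction on the rounds, carried out simultaneously with the auxiliary claim that whenever a voter $i\in S$ is charged, the charge is at most $p/|S|$. The base case holds since $b_i(1) = 1 \ge p/|S|$ (as $|S| \ge \ell p \ge p$). For the inductive step, assume the invariant holds at round $t$. Because $S \subseteq N(c^\ast)$ and each $b_i(t) \ge p/|S|$, we get $\sum_{i\in N(c^\ast)} \min(p/|S|, b_i(t)) \ge \sum_{i\in S} p/|S| = p$, so $c^\ast$ is $(p/|S|)$-affordable at round $t$. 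Since Rule X selects a candidate that is $q$-affordable for the minimum possible $q$, the candidate bought in round $t$ has $q \le p/|S|$, and so every voter in $S$ charged in that round pays at most $q \le p/|S|$, proving the auxiliary claim at round $t$.

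To close the induction I would bound the cumulative spending of each $i \in S$. A voter is charged only for candidates in $W \cap A_i$, of which there are at most $\ell-1$, and by the claim each such charge is at most $p/|S|$; hence by any round $i$ has spent at most $(\ell-1)p/|S|$, so $b_i(t) \ge 1 - (\ell-1)p/|S|$. The assumption $|S| \ge \ell p$ gives $p/|S| \le 1/\ell$, whence $1 - (\ell-1)p/|S| \ge 1 - (\ell-1)/\ell = 1/\ell \ge p/|S|$, re-establishing the invariant at round $t+1$.

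Applying the invariant at termination yields $\sum_{i\in S} b_i \ge |S|\cdot p/|S| = p$, so $\sum_{i \in N(c^\ast)} b_i \ge p$ and $c^\ast$ is affordable (take $q$ exceeding every remaining budget). This contradicts the fact that Rule X halts only when no candidate is affordable; it simultaneously rules out the rule having exhausted the full budget $n = kp$ on $k$ candidates, since that would force $\sum_{i\in S} b_i = 0$. The main obstacle is the apparent circularity of the invariant — bounding the per-round charges requires the voters to be rich, while their staying rich requires the charges to be small — and this is precisely what the simultaneous induction on rounds resolves. The only quantitative inputs are the budget inequality $|S| \ge \ell p$ and the counting bound $|W \cap A_i| \le \ell-1$ on how many times a voter in $S$ is charged.
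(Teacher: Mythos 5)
Your proof is correct and follows essentially the same argument as the paper's: both rest on the observation that a commonly-approved, unelected candidate keeps the per-voter price for members of $S$ capped at roughly $1/\ell$ (you use the slightly sharper cap $p/|S|$), so that with at most $\ell-1$ representatives each, these voters retain enough budget to make that candidate affordable, contradicting termination. The only difference is organizational — you run a forward induction on rounds, while the paper isolates the first round at which some voter in $S$ pays more than $1/\ell$ — and your version has the minor merit of explicitly establishing up front that a common candidate outside $W$ exists.
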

\begin{proof}
Let $W$ be a committee picked by Equal Shares for an approval profile $A$. Recall that $p = \nicefrac{n}{k}$ is the price used in the definition of Equal Shares.
Consider an $\ell$-cohesive group of voters $S$ and, for the sake of contradiction, assume that for each $i \in S$ we have $|A_i \cap W| \leq \ell-1$. When the rule stopped, some voter $i \in S$ must have been left with a budget of less than $\nicefrac{p}{|S|}$. Indeed, otherwise there would exist a candidate $c \notin W$ approved by all the voters from $S$ who would be $\nicefrac{p}{|S|}$-affordable, and thus Equal Shares would not have stopped and instead picked $c$.
Since $i$ has at most $\ell-1$ representatives in the committee, for some committee member $i$ must have paid more than:
\begin{align*}
\frac{1 - \frac{p}{|S|}}{\ell-1} \geq \frac{1 - \frac{1}{\ell}}{\ell-1} = \frac{1}{\ell}.
\end{align*}
Let us consider the first committee member $c'$ that has been picked by the rule such that some voter from $S$ paid more than $\nicefrac{1}{\ell}$ for $c'$ (by the above reasoning such a candidate exists). Thus, $c'$ is not $\nicefrac{1}{\ell}$-affordable.
At that moment each voter $i \in S$ had at most $\ell-1$ representatives, and for each of them paid at most $\nicefrac{1}{\ell}$. Thus, each such a voter would have at least the following budget left:
\begin{align*}
1 - (\ell -1) \cdot \frac{1}{\ell} = 1 - 1 + \frac{1}{\ell} = \frac{1}{\ell}.
\end{align*}
Further, $|S| \cdot \nicefrac{1}{\ell} \geq p$. Thus, there is a candidate $c''$ who is approved by all the voters from $S$ and which is $\nicefrac{1}{\ell}$-affordable. This contradicts that Equal Shares chose candidate $c'$ and completes the proof.
\end{proof}

Note that Phragm\'en's rule fails EJR \citep{aaai/BrillFJL17-phragmen}, while PAV satisfies it \citep{justifiedRepresenattion}. Because PAV is NP-hard to compute, \citet{AEHLSS18} looked for polynomial-time computable rules satisfying EJR, and found a class of rather artificial rules doing so. The Method of Equal Shares is arguably the most natural polynomial-time computable committee rule satisfying EJR known to date.

We have argued above that \Cref{def:core_relax} is particularly well motivated when $\mathcal{P}$ is itself a fairness property. \Cref{thm:rulex_relaxed_core} below shows that Equal Shares satisfies the core subject to $\mathcal{P}_{\mathrm{price\text{-}eq}}$, where $\mathcal{P}_{\mathrm{price\text{-}eq}}$ is is a strengthening of the priceability requirement: We say that $((P, k), T) \in \mathcal{P}_{\mathrm{price\text{-}eq}}$ if there exists a family of payment functions $\{p_i\}_{i \in N(P)}$ with
\begin{enumerate}[(i)]
\item $\sum_{c \in T}p_i(c) \leq 1$ for each $i \in S$,
\item $\sum_{i \in N}p_i(c) = \nicefrac{n}{k}$ for each $c \in T$, and
\item for each $i, j \in N(c)$ we have $p_i(c) = p_{j}(c)$.
\end{enumerate}
We call the property $\mathcal{P}_{\mathrm{price\text{-}eq}}$ \myemph{priceability with equal payments}. Note that this property is weaker than cohesiveness (thus, in particular \Cref{thm:rulex_relaxed_core} implies \Cref{thm:rulex_and_ejr}). In fact, it is even weaker than laminar proportionality. It implements an intuitive idea of fairness, which says that different voters should all pay the same amount to elect an approved candidate to the committee. Notably, one can check that in the example discussed in the introduction, PAV admits a core deviation that is priceable with equal payments, so Equal Shares outperforms PAV on this measure.

\begin{theorem}\label{thm:rulex_relaxed_core}
Equal Shares satisfies the core subject to $\mathcal{P}_{\mathrm{price\text{-}eq}}$.
\end{theorem}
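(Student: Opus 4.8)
The plan is to argue by contradiction, generalizing the proof of \Cref{thm:rulex_and_ejr}. Suppose $W$ is the committee returned by Rule X but that there is an allowed deviation $(S,T)$, certified by a family of payment functions witnessing $((P|_S,|T|),T)\in\mathcal{P}_{\mathrm{price\text{-}eq}}$. Write $S_c=\{i\in S:c\in A_i\}$ for the supporters of $c$ inside $S$, and let $p_S=|S|/|T|$ be the price in the restricted instance; note $p_S\ge p=n/k$ since $|T|\le k\,|S|/n$. The equal-payments condition forces every supporter of $c$ to pay exactly $q_c:=p_S/|S_c|$. The first thing I would record is that each voter of $S$ exhausts its whole budget in the deviation: summing per-candidate totals gives $\sum_{i\in S}\sum_{c\in A_i\cap T}q_c=\sum_{c\in T}|S_c|\,q_c=|T|\,p_S=|S|$, and as each inner sum is at most $1$, each must equal exactly $1$, i.e.\ $\sum_{c\in A_i\cap T}q_c=1$ for every $i\in S$. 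Together with condition~(ii) of the deviation ($|A_i\cap T|>|A_i\cap W|$), this shows every voter of $S$ approves at least one candidate of $T\setminus W$.

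The engine of the proof is the following affordability observation, which plays the role that the uniform threshold $1/\ell$ plays in the EJR argument. If at some round a candidate $c\in T\setminus W$ has not yet been bought and every supporter $i\in S_c$ still has budget at least $q_c$, then $\sum_{i\in N(c)}\min(q_c,b_i)\ge\sum_{i\in S_c}q_c=|S_c|\,q_c=p_S\ge p$, so $c$ is $q_c$-affordable; by minimality of the price Rule X pays, the candidate elected in that round has effective price at most $q_c$. Dually, the stopping rule supplies a ``poverty'' half: writing $b_i$ for the final budget, for every $c\in T\setminus W$ we have $\sum_{i\in N(c)}b_i<p\le|S_c|\,q_c$, so at termination some supporter of $c$ is left with budget strictly below $q_c$.

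With these two facts I would imitate the final step of the EJR proof. Assuming no voter of $S$ is ``satisfied'', consider the first round at which some voter $i\in S$ pays, for the candidate then bought, strictly more than the price $q_c$ of the cheapest still-unelected candidate $c\in A_i\cap T$. The point of choosing the \emph{first} such round is that, up to it, each voter of $S$ has paid for each of its representatives at most the price of one of its own unobtained $T$-candidates; since its deviation prices sum to $1$, each voter of $S$ then still retains at least $q_c$, the price of its cheapest unelected approved $T$-candidate. The affordability observation makes that candidate $q_c$-affordable at a price strictly below the price Rule X actually pays in this round, contradicting minimality. The boundary case, in which no such overpayment round exists, is closed by the poverty half: some supporter of an unelected $c\in T\setminus W$ would then have retained enough budget to keep $c$ affordable, so Rule X would not have stopped, again a contradiction.

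The main obstacle is exactly the non-uniformity absent from the EJR setting: there all candidates of $T$ share the supporter set $S$ and the single price $1/\ell$, so ``rich enough'' is one global condition and a single common unelected candidate serves all voters at once. Here the supporter sets $S_c$ and prices $q_c$ differ across $c\in T$, so I must track, separately for each voter, the cheapest of its still-unelected approved $T$-candidates together with the order in which Rule X prices candidates. Making the ``first overpayment'' step rigorous amounts to injectively matching each of a voter's Rule~X payments to a distinct approved $T$-candidate whose deviation price dominates it, so that the retained budget provably exceeds the cheapest unelected price; this matching is where the argument is delicate. I expect the cleanest route to be an induction over the rounds of Rule X that maintains, for each voter of $S$, a lower bound on its current budget in terms of the $q_c$ of the $T$-candidates it has not yet obtained, and I anticipate that this bookkeeping, rather than any single inequality, is the crux of the proof.
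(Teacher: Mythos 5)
Your proposal follows essentially the same route as the paper's proof: argue by contradiction, establish that up to the first round in which some voter of $S$ pays more than the deviation price of her cheapest still-unelected candidate of $T$ every such candidate remains affordable at its deviation price (contradicting Rule X's choice of a minimal per-voter payment), and close the no-violation case via the stopping rule. Your additional observations (that each voter of $S$ exhausts her budget in the deviation, and that the per-round budget bookkeeping requires an injective matching of Rule X payments to dominating deviation prices) make explicit precisely the step the paper passes over with the phrase ``pays on average less for her representatives in $W_t$ than in $T$,'' and your sketched induction does go through.
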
  
\begin{proof}
Towards a contradiction assume that there exists a profile $P$, a committee $W$ returned by Equal Shares for $P$, and an allowed deviation by coalition $S \subseteq N$ to $T\subseteq C$.

Consider a price system $\textsf{ps}$ that witnesses that $T$ satisfies $\mathcal{P}_{\mathrm{price\text{-}eq}}$ for the profile $P|_S$. For each candidate $c \in T$ by $\phi_c$ we denote the payment that each voter pays for $c$ in $\textsf{ps}$.
Let $W_t$ be the set of candidates selected by Equal Shares up to the $t$-th iteration.
We will first prove the following invariant: For each $t$, in the $t$-th iteration Equal Shares selects a candidate for which each voter pays at most $\min_{c \in T \setminus W_t} \phi_c$. For the sake of contradiction assume this is not the case and let $t$ be the first iteration in which Equal Shares selects a candidate $c_t$ for which some voter pays more than $\min_{c \in T \setminus W_t} \phi_c$. Let $c_t' = \argmin_{c \in T \setminus W_t} \phi_c$. We will argue that Equal Shares would rather select $c_t'$ than $c_t$.

Indeed, up to the $t$-th iteration each voter has fewer representatives in $W_t$ than in $T$ (since $T$ is a core deviation). Further, up to the $t$-th iteration each voter pays on average less for her representatives in $W_t$ than in $T$ (this is because $t$ is the first iteration where the candidate selected by Equal Shares costs the voter more than each not-yet selected candidate from $T$). Thus, each voter who pays for $c_t'$ in $\textsf{ps}$ has still some money left to pay for $c_t'$ (since they had money to pay for $c_t'$ in $\textsf{ps}$). Since Equal Shares always selects a candidate who is affordable and who minimizes the per-voter payment, it would rather select $c_t'$ than $c_t$, a contradiction; this proves our invariant.

However, by a very similar reasoning we can reach a contradiction. Indeed, by our invariant each voter pays for her representatives in $W$ less than it would pay for her representatives in $T$ according to $\textsf{ps}$. Each voter form $S$ has fewer representatives, and thus they would have money to buy some not-yet selected candidate from $T$. This gives a contradiction and completes the proof.
\end{proof} 

In \Cref{prop:x-not-core-priceable} in the appendix, we show that Equal Shares does not satisfy the core when we further relax the constraint to allow any deviations that are priceable, without requiring equal payments. In particular, Equal Shares fails the core (with unconstrained deviations).

\subsection{No welfarist rule satisfies the core}

Just like for laminar proportionality and priceability, we can prove that no welfarist rule can satisfy the core (in its full strength). The construction is significantly more involved than the previous arguments of this type. We see this result as a kind of obstruction to proving that the core is non-empty for all profiles, since we rule out the large class of welfarist rules as possible candidates.

\begin{figure}
\begin{center}
Profile 1:
\vspace{0.3cm}

\begin{tikzpicture}

\foreach \i in {1,...,16}
	\node at (-0.325 + 0.65 * \i, -0.35) {$v_{\i}$};

\filldraw[fill=white, draw=black] (0,0) rectangle (2.6,0.5);
\filldraw[fill=white, draw=black] (2.6,0) rectangle (3.9,0.5);
\filldraw[fill=white, draw=black] (3.9,0) rectangle (5.2,0.5);
\filldraw[fill=white, draw=black] (5.2,0) rectangle (6.5,0.5);
\filldraw[fill=white, draw=black] (6.5,0) rectangle (7.8,0.5);
\filldraw[fill=white, draw=black] (7.8,0) rectangle (9.1,0.5);
\filldraw[fill=white, draw=black] (9.1,0) rectangle (10.4,0.5);

\foreach \y in {0.5, 1, 1.5, 2.0} {
\filldraw[fill=white, draw=black] (0,\y) rectangle (2.6,\y+0.5);
\filldraw[fill=white, draw=black] (2.6,\y) rectangle (3.9,\y+0.5);
\filldraw[fill=white, draw=black] (3.9,\y) rectangle (5.2,\y+0.5);
\filldraw[fill=white, draw=black] (5.2,\y) rectangle (6.5,\y+0.5);
\filldraw[fill=white, draw=black] (6.5,\y) rectangle (7.8,\y+0.5);
\filldraw[fill=white, draw=black] (7.8,\y) rectangle (9.1,\y+0.5);
\filldraw[fill=white, draw=black] (9.1,\y) rectangle (10.4,\y+0.5);
}

\begin{scope}[yshift=1cm]
\filldraw[fill=white, draw=black] (0,1.5) rectangle (2.6,2.0);
\filldraw[fill=white, draw=black] (2.6,1.5) rectangle (3.25,2.0);
\filldraw[fill=white, draw=black] (3.25,1.5) rectangle (3.9,2.0);
\filldraw[fill=white, draw=black] (3.9,1.5) rectangle (4.55,2.0);
\filldraw[fill=white, draw=black] (4.55,1.5) rectangle (5.2,2.0);
\filldraw[fill=white, draw=black] (5.2,1.5) rectangle (5.85,2.0);
\filldraw[fill=white, draw=black] (5.85,1.5) rectangle (6.5,2.0);
\filldraw[fill=white, draw=black] (6.5,1.5) rectangle (7.15,2.0);
\filldraw[fill=white, draw=black] (7.15,1.5) rectangle (7.8,2.0);
\filldraw[fill=white, draw=black] (7.8,1.5) rectangle (8.45,2.0);
\filldraw[fill=white, draw=black] (8.45,1.5) rectangle (9.1,2.0);
\filldraw[fill=white, draw=black] (9.1,1.5) rectangle (9.75,2.0);
\filldraw[fill=white, draw=black] (9.75,1.5) rectangle (10.4,2.0);

\filldraw[fill=white, draw=black] (0,2.0) rectangle (0.65,2.5);
\filldraw[fill=white, draw=black] (0.65,2.0) rectangle (1.3,2.5);
\filldraw[fill=white, draw=black] (1.3,2.0) rectangle (1.95,2.5);
\filldraw[fill=white, draw=black] (1.95,2.0) rectangle (2.6,2.5);

\node at (0.325, 2.25) {$c_1$};
\node at (0.975, 2.25) {$c_2$};
\node at (1.625, 2.25) {$c_3$};
\node at (2.275, 2.25) {$c_4$};
\end{scope}

\node at (3.25, 0.25) {$c_{11}$};
\foreach \i in {5,6,7,8,9,10}
	\node at (1.3, 5.25 - 0.5*\i) {$c_{\i}$};
\end{tikzpicture}

Profile 2:
\vspace{0.3cm}

\begin{tikzpicture}
\foreach \i in {1,...,16}
	\node at (-0.325 + 0.65 * \i, -0.35) {$v_{\i}$};

\foreach \y in {0,0.5,1.0,1.5} {
\filldraw[fill=blue!10!white, draw=black] (0,\y) rectangle (1.3,\y+0.5);
\filldraw[fill=blue!10!white, draw=black] (1.3,\y) rectangle (2.6,\y+0.5);
\filldraw[fill=blue!10!white, draw=black] (2.6,\y) rectangle (3.9,\y+0.5);
\filldraw[fill=blue!10!white, draw=black] (3.9,\y) rectangle (5.2,\y+0.5);
\filldraw[fill=blue!10!white, draw=black] (5.2,\y) rectangle (6.5,\y+0.5);
\filldraw[fill=blue!10!white, draw=black] (6.5,\y) rectangle (7.8,\y+0.5);
\filldraw[fill=blue!10!white, draw=black] (7.8,\y) rectangle (9.1,\y+0.5);
\filldraw[fill=blue!10!white, draw=black] (9.1,\y) rectangle (10.4,\y+0.5);
}

\begin{scope}[yshift=1cm]
\filldraw[fill=blue!10!white, draw=black] (0,1.0) rectangle (1.3,1.5);
\filldraw[fill=blue!10!white, draw=black] (1.3,1.0) rectangle (2.6,1.5);
\filldraw[fill=blue!10!white, draw=black] (2.6,1.0) rectangle (3.9,1.5);
\filldraw[fill=blue!10!white, draw=black] (0,1.5) rectangle (1.3,2.0);
\filldraw[fill=white, draw=black] (1.3,1.5) rectangle (2.6,2.0);
\filldraw[fill=blue!10!white, draw=black] (2.6,1.5) rectangle (3.9,2.0);
\filldraw[fill=blue!10!white, draw=black] (0,2.0) rectangle (1.3,2.5);

\filldraw[fill=blue!10!white, draw=black] (3.9,1.0) rectangle (5.2,1.5);
\filldraw[fill=blue!10!white, draw=black] (5.2,1.0) rectangle (6.5,1.5);
\filldraw[fill=blue!10!white, draw=black] (6.5,1.0) rectangle (7.8,1.5);
\filldraw[fill=blue!10!white, draw=black] (7.8,1.0) rectangle (9.1,1.5);
\filldraw[fill=blue!10!white, draw=black] (9.1,1.0) rectangle (10.4,1.5);

\filldraw[fill=white, draw=black] (3.9,1.5) rectangle (5.2,2.0);
\filldraw[fill=blue!10!white, draw=black] (5.2,1.5) rectangle (6.5,2.0);
\filldraw[fill=blue!10!white, draw=black] (6.5,1.5) rectangle (7.8,2.0);
\filldraw[fill=white, draw=black] (7.8,1.5) rectangle (9.1,2.0);
\filldraw[fill=white, draw=black] (9.1,1.5) rectangle (10.4,2.0);

\filldraw[fill=blue!10!white, draw=black] (9.1,2.0) rectangle (9.75,2.5);

\filldraw[fill=blue!10!white, draw=black] (3.25,2.0) rectangle (4.55,2.5);
\end{scope}

\foreach \i in {1,...,7}
	\node at (0.65, 3.75-0.5*\i) {$c_\i$};
\foreach \i in {8,...,13}
	\node at (1.95, 3.25-0.5*\i+3.5) {$c_{\i}$};
\end{tikzpicture}

Profile 3:
\vspace{0.3cm}

\begin{tikzpicture}
\filldraw[fill=blue!10!white, draw=black] (0,0) rectangle (1.3,0.5);
\filldraw[fill=blue!10!white, draw=black] (1.3,0) rectangle (2.6,0.5);
\filldraw[fill=blue!10!white, draw=black] (2.6,0) rectangle (3.9,0.5);
\filldraw[fill=blue!10!white, draw=black] (0,1.0) rectangle (1.3,1.5);
\filldraw[fill=blue!10!white, draw=black] (1.3,1.0) rectangle (2.6,1.5);
\filldraw[fill=blue!10!white, draw=black] (2.6,1.0) rectangle (3.9,1.5);
\filldraw[fill=blue!10!white, draw=black] (0,1.5) rectangle (1.3,2.0);
\filldraw[fill=white, draw=black] (1.3,1.5) rectangle (2.6,2.0);
\filldraw[fill=blue!10!white, draw=black] (2.6,1.5) rectangle (3.9,2.0);
\filldraw[fill=blue!10!white, draw=black] (0,2.0) rectangle (1.3,2.5);

\node at (0.325, -0.35) {$v_1$};
\node at (0.975, -0.35) {$v_2$};
\node at (1.625, -0.35) {$v_3$};
\node at (2.275, -0.35) {$v_4$};
\node at (2.925, -0.35) {$v_5$};
\node at (3.575, -0.35) {$v_6$};
\node at (4.225, -0.35) {$v_7$};
\node at (4.875, -0.35) {$v_8$};

\node at (10.075, -0.35) {$v_{16}$};
\node at (7.475, -0.35) {$\ldots$};

\filldraw[fill=blue!10!white, draw=black] (3.9,0) rectangle (5.2,0.5);
\filldraw[fill=blue!10!white, draw=black] (5.2,0) rectangle (6.5,0.5);
\filldraw[fill=blue!10!white, draw=black] (6.5,0) rectangle (7.8,0.5);
\filldraw[fill=blue!10!white, draw=black] (7.8,0) rectangle (9.1,0.5);
\filldraw[fill=blue!10!white, draw=black] (9.1,0) rectangle (10.4,0.5);

\node at (0.65, 0.75) {$\cdots$};
\node at (1.95, 0.75) {$\cdots$};
\node at (3.25, 0.75) {$\cdots$};
\node at (4.55, 0.75) {$\cdots$};
\node at (5.85, 0.75) {$\cdots$};
\node at (7.15, 0.75) {$\cdots$};
\node at (8.45, 0.75) {$\cdots$};
\node at (9.75, 0.75) {$\cdots$};
\filldraw[fill=blue!10!white, draw=black] (3.9,1.0) rectangle (5.2,1.5);
\filldraw[fill=blue!10!white, draw=black] (5.2,1.0) rectangle (6.5,1.5);
\filldraw[fill=blue!10!white, draw=black] (6.5,1.0) rectangle (7.8,1.5);
\filldraw[fill=blue!10!white, draw=black] (7.8,1.0) rectangle (9.1,1.5);
\filldraw[fill=blue!10!white, draw=black] (9.1,1.0) rectangle (10.4,1.5);

\filldraw[fill=blue!10!white, draw=black] (3.9,1.5) rectangle (5.2,2.0);
\filldraw[fill=blue!10!white, draw=black] (5.2,1.5) rectangle (6.5,2.0);
\filldraw[fill=blue!10!white, draw=black] (6.5,1.5) rectangle (7.8,2.0);
\filldraw[fill=white, draw=black] (7.8,1.5) rectangle (9.1,2.0);
\filldraw[fill=white, draw=black] (9.1,1.5) rectangle (10.4,2.0);

\filldraw[fill=blue!10!white, draw=black] (9.1,2.0) rectangle (9.75,2.5);

\node at (0.65, 2.25) {$c_1$};
\node at (0.65, 1.75) {$c_2$};
\node at (0.65, 1.25) {$c_3$};
\node at (0.65, 0.25) {$c_7$};
\node at (1.95, 1.75) {$c_8$};
\node at (1.95, 1.25) {$c_9$};
\node at (1.95, 0.25) {$c_{13}$};
\end{tikzpicture}

\end{center}
\caption{A diagram illustrating the reasoning in \Cref{thm:no_welfarist_core}, showing that there exists no welfarist rule that satisfies the core property. Here each rectangle denotes a single candidate who is approved by the respective voters. For example, in Profile 1 candidates $c_5$ and $c_6$ are approved by $\{v_1, v_2, v_3, v_4\}$ and $c_{11}$ is approved by $\{v_5, v_6\}$. Profile 2 depicts the case for $x = 3$ and $y = 6$. Profile 3 depicts the case for $x = 2$ and $y = 5$.}
\label{fig:core_and_welfarism}
\end{figure}

\begin{theorem}\label{thm:no_welfarist_core}
There exists no welfarist committee rule that satisfies core.
\end{theorem}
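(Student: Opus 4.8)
The plan is to reuse the contradiction engine behind \Cref{thm:no_welfarist_lam_prop} and the companion impossibility for priceability, but now with the core as the forcing axiom. I would assume for contradiction that $f$ is welfarist, with functions $g_k$ as in \Cref{def:welfarist}, and that $f$ always returns committees in the core. The target is to build (at least) two election instances on a common voter set such that (a) exactly the same set of welfare vectors is achievable in each, while (b) the welfare vectors compatible with the core are essentially ``opposite'' across the instances. Since a welfarist rule outputs the achievable welfare vector maximizing $g_k$, and since that maximizer must be core-compatible in every instance, two suitably chosen instances would force $g_k(\mathbf w) > g_k(\mathbf w')$ and simultaneously $g_k(\mathbf w') > g_k(\mathbf w)$ (or, across three instances, a cyclic chain of strict inequalities), which is the desired contradiction.

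The structural input is the anti-egalitarian nature of the core developed in \Cref{sec:core-pigou-dalton}: a \emph{cohesive} group of voters, i.e.\ one sharing many commonly approved candidates, can block any committee that fails to award it disproportionately high utility. I would encode this in a parametrized family of instances on a fixed voter set (the instances of \Cref{fig:core_and_welfarism}, with Profiles~2 and~3 realizing the parameter choices $x=3,y=6$ and $x=2,y=5$), in which a subgroup of $x$ voters shares $y$ common candidates and the remaining voters are arranged as flatly as possible. For each $(x,y)$ I would compute the exact lower bound that the core places on the cohesive subgroup's utility, using its entitlement $k\cdot|S|/n$ together with its cohesiveness level; this bound rules out the balanced welfare vector and forces $f$ toward a specific unequal one in which the cohesive voters are strictly ahead of the rest. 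The integrality of these entitlements is what lets the forced welfare vectors be pinned down exactly.

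The delicate step is to tune the parameters so that one and the same pair of welfare vectors $\mathbf w,\mathbf w'$ is achievable in two of the instances while the core drives $f$ toward $\mathbf w$ in one and toward $\mathbf w'$ in the other (chaining through the auxiliary Profile~1 if a direct two-instance swap is insufficient, producing a short cycle of $g_k$-inequalities). This is exactly why the candidate supply must be made rich enough that every relevant welfare vector is realizable in all the instances; otherwise the welfarist comparison could not be invoked. The main obstacle, and the reason the construction is more involved than the earlier two, is that the core is an \emph{existential} condition (\emph{some} member of the blocking coalition must be satisfied) and hence pins down a region of welfare space rather than a single point. I would therefore have to verify two things with care: first, that every committee inducing the ``wrong'' welfare vector in a given instance genuinely admits a blocking pair $(S,T)$ built from the $x$ shared candidates, so that it is excluded by the core; and second, that both $\mathbf w$ and $\mathbf w'$ stay achievable in each instance, so that $g_k$ is forced to compare them in opposite directions. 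Once both are checked, the contradiction closes in a single line, as in the earlier proofs.
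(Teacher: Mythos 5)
Your high-level plan is exactly the paper's: assume a welfarist $f$ with functions $g_k$ that satisfies the core, exhibit two instances in which two welfare vectors $\mathbf w_1,\mathbf w_2$ are both achievable, and show that the core forces $g_k(\mathbf w_1)>g_k(\mathbf w_2)$ in one instance and the reverse in the other. You have also correctly located the source of difficulty: the core is an existential condition on blocking coalitions and therefore pins down a \emph{set} of admissible welfare vectors rather than a single one. However, everything that makes this theorem harder than \Cref{thm:no_welfarist_lam_prop} is left as ``I would verify,'' and the proposal as written has a genuine gap precisely there. Concretely: (1) In the first instance the core does not force one welfare vector but a family of them, parametrized by how many voters end up with $7$ representatives ($x\in\{2,3,4\}$) and how many with $5$ ($y$ in a range); you cannot fix a single companion instance in advance, but must choose it as a function of which vector $f$ actually returns --- this is why the paper needs both Profile~2 (odd $x$) and Profile~3 (even $x$), and why your reading of $x,y$ as structural parameters of the profile (``a subgroup of $x$ voters shares $y$ common candidates'') does not match what is required: in the paper they index the \emph{forced welfare vector}, not the cohesive structure. (2) Since \Cref{def:welfarist} does not assume $g_k$ is symmetric, matching welfare vectors across instances requires an explicit voter permutation $\sigma$, and constructing it forces a nontrivial matching argument (which voters with $5$, $6$, or $7$ representatives in one profile can be aligned with pairs in the other). (3) One must check both directions of cross-achievability with exact candidate counts (including dummy candidates to pad committees), and check that \emph{every} committee realizing the ``wrong'' vector admits a blocking pair, not just some representative one.

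None of these steps is routine, and together they constitute essentially the entire body of the paper's proof; your requirement that ``exactly the same set of welfare vectors is achievable in each'' instance is also stronger than what can be arranged (and stronger than needed --- only the two specific vectors must be cross-achievable). So the proposal identifies the right engine but does not yet contain a proof: the claim that ``the contradiction closes in a single line'' is only true after the construction and the case analysis on the forced welfare vectors have actually been carried out.
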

\begin{proof}
Assume that there exists a welfarist committee rule $f$ that satisfies core.

Consider Profile 1 from \Cref{fig:core_and_welfarism} with 16 voters and 52 candidates, and let $k = 48$. To this profile we add two candidates who are not approved by any voter, getting 54 candidates in total.
The group of three voters $\{v_1, v_2, v_3\}$ deserves to decide about $48 \cdot \nicefrac{3}{16} = 9$ committee members. Since this group approves candidates $\{c_1, c_2, c_3, c_5, \ldots, c_{10}\}$, by the core property we get that $c_5, c_6, \ldots, c_{10}$, and at least one of the candidates $c_1, c_2, c_3$ must be members of the winning committee. Without loss of generality, assume that $c_1$ is picked by $f$. By the same reasoning, applied to $\{v_2, v_3, v_4\}$ we infer that at least one of the candidates $c_2, c_3, c_4$ must be included in the winning committee. Assume $c_2$ is picked. Thus, in the winning committee: some $x$ voters ($x \in \{2, 3, 4\}$) have 7 representatives each. In the remaining part of the proof we will assume that $x=3$, and that the voters who have 7 representatives are $v_1, v_2$ and $v_3$. The reasoning when $x$ is even is simpler, and we will shortly discuss it at the end of the proof.

By looking at $\{v_5, v_6\}$ (and again applying the core property) we get that either:
\begin{inparaenum}[(i)]
\item both voters have 6 representatives (this happens when the rule picks all 7 candidates approved by $v_5$ and $v_6$),
\item one of the voters has 6 representatives and the other one has 5 (this is obtained by picking 6 candidates out of the seven approved by $v_5$ and $v_6$),
\item both voters have 5 representatives (picking 5 or 6 candidates out of the seven approved by $v_5$ and $v_6$).
\end{inparaenum}
Summing up, we get that the winning committee must consist of candidates $c_5, c_6, \ldots, c_{10}$, candidates $c_1, c_2, c_3$, and for each pair of voters $\{v_{2i-1}, v_{2i}\}$ with $3 \leq i \leq 8$, the committee must have at least 5 candidates approved by these two voters.  This gives in total 39 candidates, and leaves $9$ free seats in the committee.
Thus, at least $12 - 9 = 3$ candidates who are approved by voters $v_5, v_8, \ldots, v_{16}$ are not in the winning committee. Finally, only $54 - 48 = 6$ candidates do not get in to the committee. Since we assumed $x=3$, one of the candidates that do not get in to the committee is $c_4$. As a result, at least two voters from $v_5, v_6, \ldots, v_{16}$ will have 6 representatives.

Consequently, in the winning committee: $3$ voters have 7 representatives each, some $y$ voters ($3 \leq y \leq 10$) have 5 representatives each; the remaining voters (at least 2 from $v_5, v_6, \ldots, v_{16}$ and $v_4$) have 6 representatives. Let $\mathbf{w_1}$ be a welfare vector corresponding to the winning committee.  Further, without loss of generality assume that in the winning committee $v_{15}$ and $v_{16}$ have 6 representatives, each.

Now, consider Profile 2 from \Cref{fig:core_and_welfarism}. In this profile the voters are grouped in pairs. Voters $v_1$ and $v_2$ approve some 7 candidates; voters $v_3$ and $v_4$ approve some 6 candidates. The pairs $(v_5, v_6)$ and $(v_7, v_8)$ approve some 6 candidates (different for each pair); we add one additional candidate, who is approved by $v_6$ and $v_7$.  The next four pairs approve some 6 candidates (different for each pair). Additionally, the first voter in the last pair (voter $v_{15}$) approves one more different candidate. Finally, there are three candidates who are not approved by any voter---we call them dummy candidates. 

Consider the committee (for Profile 2) assembled in the following way: we take the 7 candidates approved by the first pair, 5 candidates approved by the second pair, 6 candidates approved by $(v_5, v_6)$, 5 candidates approved by $(v_7, v_8)$, and the candidate approved by $(v_6, v_7)$. This already gives us: $3$ voters with 7 representatives, $3$ voters with 5 representatives and 2 voters with 6 representatives. 
Further, for each of the next three pairs we take either:
\begin{inparaenum}[(i)]
\item 5 candidates approved by both the voters in the pair and one dummy candidate (getting two voters with 5 representatives, each), or
\item 6 candidates approved by both the voters in the pair (getting two voters with 6 representatives, each).
\end{inparaenum}
Finally, for the last pair, we take either:
\begin{inparaenum}[(i)]
\item 5 candidates approved by both the voters in the pair and the one approved by $v_{15}$ (getting one voter with 5 representatives and one with 6), or
\item 6 candidates approved by both the voters in the pair (getting two voters with 6 representatives, each).
\end{inparaenum}
By this procedure, it is possible to assemble a committee with $3$ sevens, $y$ fives, and sixes in the remaining (at least three) positions (one can see that the last pair is constructed differently so that we can assemble such a committee for every parity of $y$). This committee (for $x = 3$, $y = 6$) is marked with a color in the picture.

By taking the appropriate permutation $\sigma$ of the voters in Profile 2 we can make this welfare vector match $\mathbf{w_1}$.  This shows that the welfare vector $\mathbf{w_1}$ is achievable in (permuted) Profile 2.
Specifically, we will use the permutation which satisfies the following properties. It matches voters $v_5, v_8$, and $v_{16}$ from Profile 2 to some voters from $v_5, v_6, \ldots, v_{16}$ in Profile 1---this is possible since in Profile 1 there is a sufficient number of voters who have 6 representatives among those from $v_5, v_6, \ldots, v_{16}$ (to match $v_5$ and $v_{16}$, if $v_{16}$ has 6 representatives), and there is a sufficient number of voters who have 5 representatives among those from $v_5, v_6, \ldots, v_{16}$ (to match $v_8$ and $v_{16}$, if $v_{16}$ has 5 representatives). Further either $v_5$ and $v_{16}$ or $v_8$ and $v_{16}$ should be matched to two voters from the same pair in Profile 1 (this is possible for the following reason: If $v_{16}$ has 5 representatives, then these two pairs, $(v_5, v_{16})$, and $(v_8, v_{16})$ have, respectively $(6, 5)$ and $(5, 5)$ representatives. At least one pair in Profile 1 has $(5, 6)$, $(6, 5)$, or $(5, 5)$ representatives---this is because at least one voter among the last 12 in Profile 1 has 5 representatives and at least two have 6----so the matching is possible. If $v_{16}$ has 6 representatives, the reasoning is similar: the two pairs, $(v_5, v_{16})$, and $(v_8, v_{16})$ have, respectively $(6, 6)$ and $(5, 6)$ representatives, and at least one pair in Profile 1 has $(5, 6)$, $(6, 5)$, or $(6, 6)$ representatives). 
Further, the permutation matches $v_6$ with $v_3$, and $v_{15}$ with $v_4$ (note that $v_{15}$ in our assembled committee has always 6 representatives). Finally, the permutation matches the first 2 voters in Profile 1 with the same voters in Profile 2.

Yet, from the core property it follows that the winning committee in Profile 2 must give a different welfare vector. Such a committee needs to have one of the following forms. The first four voters get at least 6 representatives. The numbers of the representatives that the next four voters get (voters $v_5, v_6, v_7$ and $v_8$) can be either, $(5, 6, 6, 5)$, $(6, 6, 6, 6)$, $(6, 7, 6, 5)$, or $(5, 6, 7, 6)$. Each voter from the next three pairs gets 6 representatives. The two voters in the last pair get either 6 representatives each, or $v_{15}$ gets 6 representatives and $v_{16}$ gets 5. Finally, if $(v_5, v_6, v_7, v_8)$ gets $(5, 6, 6, 5)$ representatives, then there is one free slot in the committee which can be used for a dummy candidate, for a candidate approved by the first pair, or by someone approved by a voter in the last pair.

One can check that (with permutation $\sigma$) this welfare vector, call it $\mathbf{w_2}$ is achievable in Profile 1. For this, we need to observe that this welfare vector has at most 3 fives (and if there are exactly 3 fives, then two of them correspond to the voters matched to a single pair in Profile 1). In Profile 1 we can first take candidates $c_5, \ldots, c_{10}$, and for each pair of voters $\{v_{2i-1}, v_{2i}\}$ with $3 \leq i \leq 8$ we can take the seven candidates approved by these voters. This gives a vector with all sixes. Now, we can remove at most two candidates to match the ``fives'' in the welfare vector (this is possible by removing only two candidates since two voters among those corresponding to fives are matched to a single pair in Profile 1), and add dummy candidates or candidates from $c_1, \ldots, c_4$ to match the ``sevens'' (this is also possible since the voters who can have seven representatives---the voters in the first pair, $v_6$, and $v_{15}$ are matched to the voters from $v_1, \ldots, v_4$ in Profile 1). 

This vector is different from $\mathbf{w_1}$ since it has 7 in at most two positions.

In Profile 1 $f$ prefers $\mathbf{w_1}$ over $\mathbf{w_2}$, while in the permuted Profile 2 the other way around. This gives a contradiction and completes the proof for the case of $x = 3$.

If $x$ is even, then instead of Profile 2 we use Profile 3 from \Cref{fig:core_and_welfarism}. The proof follows from a similar reasoning, which now is much simpler since we do not need to deal with the candidate commonly accepted by $v_6$ and $v_7$.
\end{proof}

\section{Conclusion}

Through a detailed axiomatic study, we have obtained a clearer picture of the way that Phragm\'en's voting rule provides proportional representation. We have seen that these proportionality properties cannot be achieved by welfarist rules such as PAV. Conversely, proportionality in the sense of Phragm\'en's rule is incompatible with the basic fairness condition of the Pigou--Dalton principle. We showed that PAV gives an optimal approximation to the core, subject to that principle. Finally, we have introduced an elegant new voting rule that combines the proportionality properties of Phragm\'en's rule while also satisfying EJR. 

Our study distinguishes two conceptually different types of fairness/proportionality. We see axioms that describe fairness with respect to the distribution of welfare (take as examples the Pigou--Dalton principle, proportionality degree~\citep{skowron:prop-degree}, or extended justified representation) or with respect to the distribution of power (e.g., priceability, and laminar proportionality). This dichotomy of axioms is reflected in a dichotomy of voting rules: PAV is a welfarist rule that ensures a fair distribution of voter satisfaction, whereas Phragm\'en's rule and Equal Shares primarily aim at achieving a fair distribution of voting power. One can observe a similar dichotomy in other areas of voting theory: Condorcet's criterion implies an equal distribution of voting power, while Borda's rule is welfarist and may overrule a majority.\footnote{We thank Herv\'e Moulin for this observation and discussion on this resemblance.} When viewed from this perspective, the dispute between Thiele and Phragm\'en looks more similar to the one between Borda and Condorcet than it might at first appear.

Parts of our discussion has been relegated to the appendix. In particular, in \Cref{sec:overlapping_parties} we highlight a few differences between our new method and Phragm\'en's voting rule, and in \Cref{sec:proportionality_and_disagreements} we provide an additional informal discussion on the two types of proportionality offered by PAV and priceable rules.

Several important and technically interesting questions remain. Known priceable rules (such as Phragm\'en's rule) fail Pareto-optimality, and the natural way to achieve Pareto-optimality (some form of welfare maximization) cannot guarantee priceability. Does there always exist a Pareto-optimal and priceable committee? Is there a natural voting rule satisfying both properties? We believe that a detailed analysis of the class of priceable rules is an interesting direction for further research. Further, we have seen that no welfarist rule satisfies the core. Can one show that no welfarist rule approximates the core to a better factor than PAV? We have shown that Equal Shares satisfies the core if we only allow deviations that treat members of the deviating coalition fairly. Is it possible to prove the existence of the core with weaker constraints than the ones we identified (e.g., priceability without the requirement of equal payments)?

\subsection{Acknowledgments}
We thank Herv\'e Moulin for useful comments. Piotr Skowron was supported by the Foundation for Polish Science within the Homing programme (Project title: "Normative Comparison of Multiwinner Election Rules).

\bibliographystyle{plainnat}

\begin{thebibliography}{18}
	\providecommand{\natexlab}[1]{#1}
	\providecommand{\url}[1]{\texttt{#1}}
	\expandafter\ifx\csname urlstyle\endcsname\relax
	\providecommand{\doi}[1]{doi: #1}\else
	\providecommand{\doi}{doi: \begingroup \urlstyle{rm}\Url}\fi
	
	\bibitem[Aziz et~al.(2015)Aziz, Gaspers, Gudmundsson, Mackenzie, Mattei, and
	Walsh]{azi-gas-gud-mac-mat-wal:c:multiwinner-approval}
	H.~Aziz, S.~Gaspers, J.~Gudmundsson, S.~Mackenzie, N.~Mattei, and T.~Walsh.
	\newblock Computational aspects of multi-winner approval voting.
	\newblock In \emph{Proceedings of the 14th International Conference on
		Autonomous Agents and Multiagent Systems (AAMAS-2015)}, pages 107--115, May
	2015.
	
	\bibitem[Aziz et~al.(2017)Aziz, Brill, Conitzer, Elkind, Freeman, and
	Walsh]{justifiedRepresenattion}
	H.~Aziz, M.~Brill, V.~Conitzer, E.~Elkind, R.~Freeman, and T.~Walsh.
	\newblock Justified representation in approval-based committee voting.
	\newblock \emph{Social Choice and Welfare}, 48\penalty0 (2):\penalty0 461--485,
	2017.
	
	\bibitem[Aziz et~al.(2018)Aziz, Elkind, Huang, Lackner,
	S{\'a}nchez-Fern{\'a}ndez, and Skowron]{AEHLSS18}
	H.~Aziz, E.~Elkind, S.~Huang, M.~Lackner, L.~S{\'a}nchez-Fern{\'a}ndez, and
	P.~Skowron.
	\newblock On the complexity of extended and proportional justified
	representation.
	\newblock In \emph{Proceedings of the 32nd Conference on Artificial
		Intelligence (AAAI-2018)}, pages 902--909, 2018.
	
	\bibitem[Brill et~al.(2017)Brill, Freeman, Janson, and
	Lackner]{aaai/BrillFJL17-phragmen}
	M.~Brill, R.~Freeman, S.~Janson, and M.~Lackner.
	\newblock Phragm{\'e}n's voting methods and justified representation.
	\newblock In \emph{Proceedings of the 31st Conference on Artificial
		Intelligence (AAAI-2017)}, pages 406--413, 2017.
	
	\bibitem[Brill et~al.(2018)Brill, Laslier, and
	Skowron]{bri-las-sko:c:apportionment}
	M.~Brill, J.-F. Laslier, and P.~Skowron.
	\newblock Multiwinner approval rules as apportionment methods.
	\newblock \emph{Journal of Theoretical Politics}, 30\penalty0 (3):\penalty0
	358--382, 2018.
	
	\bibitem[Chamberlin and C{ourant}(1983)]{ccElection}
	B.~Chamberlin and P.~C{ourant}.
	\newblock Representative deliberations and representative decisions:
	{P}roportional representation and the {B}orda rule.
	\newblock \emph{American Political Science Review}, 77\penalty0 (3):\penalty0
	718--733, 1983.
	
	\bibitem[Cheng et~al.(2019)Cheng, Jiang, Munagala, and Wang]{cheng2019group}
	Y.~Cheng, Z.~Jiang, K.~Munagala, and K.~Wang.
	\newblock Group fairness in committee selection.
	\newblock In \emph{Proceedings of the 2019 ACM Conference on Economics and
		Computation (EC-2019)}, pages 263--279, 2019.
	
	\bibitem[Fain et~al.(2018)Fain, Munagala, and Shah]{FMS18}
	B.~Fain, K.~Munagala, and N.~Shah.
	\newblock Fair allocation of indivisible public goods.
	\newblock In \emph{Proceedings of the 19th ACM Conference on Economics and
		Computation (EC-2018)}, pages 575--592, 2018.
	
	\bibitem[Janson(2016)]{Janson16arxiv}
	S.~Janson.
	\newblock Phragm{\'{e}}n's and {T}hiele's election methods.
	\newblock \emph{arXiv:1611.08826}, 2016.
	
	\bibitem[Jiang et~al.(2020)Jiang, Munagala, and Wang]{jiang2019approx}
	Z.~Jiang, K.~Munagala, and K.~Wang.
	\newblock Approximately stable committee selection.
	\newblock In \emph{Proceedings of the 52nd Annual ACM SIGACT Symposium on
		Theory of Computing (STOC-2020)}, pages 463--472, 2020.
	
	\bibitem[Kelly(1997)]{kelly1997charging}
	F.~Kelly.
	\newblock Charging and rate control for elastic traffic.
	\newblock \emph{European Transactions on Telecommunications}, 8\penalty0
	(1):\penalty0 33--37, 1997.
	
	\bibitem[Lackner and Skowron(2018)]{lac-sko:t:approval-thiele}
	M.~Lackner and P.~Skowron.
	\newblock Consistent approval-based multi-winner rules.
	\newblock In \emph{Proceedings of the 19th ACM Conference on Economics and
		Computation (EC-2018)}, pages 47--48, 2018.
	
	\bibitem[Lackner and Skowron(2020)]{lac-sko:abc-survey}
	M.~Lackner and P.~Skowron.
	\newblock Multi-winner voting with approval preferences.
	\newblock \emph{arXiv:2007.01795}, 2020.
	
	\bibitem[Los et~al.(2022)Los, Christoff, and Grossi]{los2022systematization}
	M.~Los, Z.~Christoff, and D.~Grossi.
	\newblock Proportional budget allocations: Towards a systematization.
	\newblock In \emph{Proceedings of the 31st International Joint Conference on
		Artificial Intelligence (IJCAI-2022)}, pages 398--404, 2022.
	
	\bibitem[S{\'a}nchez-Fern{\'a}ndez et~al.(2017)S{\'a}nchez-Fern{\'a}ndez,
	Elkind, Lackner, Fern{\'a}ndez, Fisteus, {Basanta Val}, and Skowron]{pjr17}
	L.~S{\'a}nchez-Fern{\'a}ndez, E.~Elkind, M.~Lackner, N.~Fern{\'a}ndez, J.~A.
	Fisteus, P.~{Basanta Val}, and P.~Skowron.
	\newblock Proportional justified representation.
	\newblock In \emph{Proceedings of the 31st Conference on Artificial
		Intelligence (AAAI-2017)}, pages 670--676, 2017.
	
	\bibitem[Sen(1979)]{sen1979utilitarianism}
	A.~Sen.
	\newblock Utilitarianism and welfarism.
	\newblock \emph{The Journal of Philosophy}, 76\penalty0 (9):\penalty0 463--489,
	1979.
	
	\bibitem[Skowron(2021)]{skowron:prop-degree}
	P.~Skowron.
	\newblock Proportionality degree of multiwinner rules.
	\newblock In \emph{Proceedings of the 2021 ACM Conference on Economics and
		Computation (EC-2021)}, pages 820--840, 2021.
	\newblock Extended version arXiv:1810.08799.
	
	\bibitem[Thiele(1895)]{Thie95a}
	T.~N. Thiele.
	\newblock Om flerfoldsvalg.
	\newblock In \emph{Oversigt over det Kongelige Danske Videnskabernes Selskabs
		Forhandlinger}, pages 415--441. 1895.
	
\end{thebibliography}

\appendix

\section{Additional Discussion}\label{sec:additional_discussion}

This section provides some additional explanations that have been omitted from the main text. 

The following remark justifies certain decisions that we made when formalizing the concept of laminar proportionality---in particular it explains our focus  on \emph{integral} profiles.

	\begin{remark}\label{rem:integrality_in_laminarity_def}
		One could strengthen a part of the definition of laminar proportionality as follows: If there are two instances $(P_1, k_1)$ and $(P_2, k_2)$, \emph{not necessarily laminar}, with $C(P_1) \cap C(P_2) = \emptyset$ and $|P_1|/k_1 = |P_2|/k_2$, then for the instance $(P_1 + P_2, k_1 + k_2)$, we must have $f(P, k) = \{ W_1 \cup W_2 : W_1 \in f(P_1, k_1), W_2 \in f(P_2, k_2) \}$. This condition is failed by many laminar proportional rules. Consider the instance $(P_1, k=2)$ where 2 voters approve $\{c_1\}$, 2 voters approve $\{c_2\}$ and 2 voters approve $\{c_3\}$; and the instance $(P_2, k=4)$ where 12 voters approve $\{c_4, c_5, c_6, c_7, c_8\}$. In both instances, Phragm\'en's rule (or any reasonable rule) declares a tie between all feasible committees. The combined instance $(P_1 + P_2, k=6)$ is a party-list profile, and Phragm\'en's rule selects a committee of form $\{c_i, c_4, c_5, c_6, c_7, c_8\}$ where $c_i \in \{c_1,c_2,c_3\}$, failing the strengthened condition. The same is true for any other rule which behaves like D'Hondt's apportionment method on party-list profiles. From Theorem~\ref{thm:price-dhondt} it follows that this strengthened condition is incompatible with priceability.
	\end{remark}

\Cref{prop:laminarity_of_set_system} below, explains the name ``laminar proportionality''.

	\begin{proposition}\label{prop:laminarity_of_set_system}
		Suppose $(P,k)$ is laminar. For each $c \in C(P)$, let $N_{c} = \{ i \in N : c \in A_i\}$ be the set of voters approving $c$. Then the family $(N_c)_{c \in C(P)}$ is laminar.
	\end{proposition}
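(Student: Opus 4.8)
The plan is to prove the statement by structural induction on the recursive definition of a laminar instance, handling each of the three construction rules (i)--(iii) in turn. Throughout, I write $N = N(P)$ for the voter set and recall that $N_c$ denotes the set of voters approving candidate $c$; the goal is to show that any two members of $(N_c)_{c \in C(P)}$ are either disjoint or nested.

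For the base case (i), $P$ is unanimous, so every voter approves the same ballot and $C(P)$ is exactly that common ballot. Hence $N_c = N$ for every $c \in C(P)$, and the family consists of copies of the universal set $N$; any two such sets are equal and therefore trivially nested. For the inductive step under rule (ii), there is a candidate $c^\ast$ with $N_{c^\ast} = N$, and $(P_{-c^\ast}, k-1)$ is laminar. Deleting $c^\ast$ from every ballot does not change who approves any other candidate, so for each $c \in C(P_{-c^\ast}) = C(P)\setminus\{c^\ast\}$ the set $N_c$ is unchanged between $P$ and $P_{-c^\ast}$. By the induction hypothesis, $(N_c)_{c \in C(P_{-c^\ast})}$ is laminar, and the family for $P$ is obtained from it by adjoining the single set $N_{c^\ast} = N$. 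Since $N$ is a superset of every $N_c$, it is nested with each of them, so adjoining the universal set preserves laminarity.

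The remaining case (iii) is where I would concentrate the care. Here $P = P_1 + P_2$ with $(P_1,k_1)$ and $(P_2,k_2)$ laminar and $C(P_1) \cap C(P_2) = \emptyset$. Because $P$ is a concatenation, the voter index sets $N(P_1)$ and $N(P_2)$ are disjoint. Moreover, since the candidate sets are disjoint, no voter of $P_2$ approves a candidate in $C(P_1)$ and vice versa; thus for $c \in C(P_1)$ we have $N_c \subseteq N(P_1)$ and this set coincides with the one computed inside $P_1$ alone, and symmetrically for $C(P_2)$. Now take any two sets in the combined family: if both come from $C(P_1)$ (resp.\ $C(P_2)$) they are laminar by the induction hypothesis; if one comes from each, then one is contained in $N(P_1)$ and the other in $N(P_2)$, which are disjoint, so the two sets are disjoint. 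In all cases the laminar condition holds.

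I expect no deep obstacle, as the induction mirrors the construction exactly; the only points requiring attention are bookkeeping observations rather than genuine difficulties. Specifically, in case (ii) one must note that removing the unanimous candidate leaves all other $N_c$ intact and that adjoining the universal set cannot break laminarity, and in case (iii) one must use both disjointnesses at once: disjointness of voter indices (from concatenation) together with disjointness of candidate sets (which localizes each $N_c$ to one side). Verifying that these three rules are the only ways to build a laminar instance then completes the induction.
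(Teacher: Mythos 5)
Your proof is correct and follows essentially the same route as the paper's: induction over the three construction rules, observing that all $N_c$ coincide in the unanimous case, that adjoining the universal set $N_{c^\ast}=N$ preserves laminarity in case (ii), and that candidate-disjointness plus voter-disjointness forces cross-sets to be disjoint in case (iii). The extra bookkeeping you flag (that the $N_c$ are unchanged when the unanimous candidate is deleted, and that each $N_c$ localizes to one side of the sum) is implicit in the paper's argument, so there is nothing further to add.
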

	\begin{proof}
		We prove this by induction on $k$.
		(i) If $P$ is unanimous, then the sets $N_c$ are all equal.
		(ii) Suppose $c' \in C(P)$ is such that $c' \in A_i$ for all $A_i \in P$, and $(P_{-c'}, k-1)$ is laminar. Then by the inductive hypothesis, the set system $(N_c)_{c \in C(P) \setminus \{c'\}}$ is laminar. Further, $N_{c'} = N(P)$, and hence $N_c \subseteq N_{c'}$ for all $c \in C(P) \setminus \{c'\}$. Thus, the set system $(N_c)_{c\in C(P)}$ is laminar, as required.
		(iii) Suppose $P = P_1 + P_2$ for some laminar $(P_1, k_1)$ and $(P_2, k_2)$ on disjoint candidate sets. Let $c_1 \in C(P_1)$ and $c_2 \in C(P_2)$. Then $N_{c_1}$ and $N_{c_2}$ are disjoint. Since $(N_c)_{c \in C(P_1)}$ and $(N_c)_{c \in C(P_2)}$ are laminar by the inductive hypothesis, the overall set system $(N_c)_{c \in C(P)}$ is also laminar.
	\end{proof}
	
\subsection{Overlapping Parties: Comparing Equal Shares and Phragm\'en's Sequential Rule}\label{sec:overlapping_parties}
	
This section highlights some differences between Equal Shares and Phragm\'en's Sequential Rule. As we have already noticed, Equal Shares satisfies EJR and core subject to priceability with equal payments. On the other hand, Phragm\'en's rule is committee-monotonic.\footnote{If we increase the size of the committee from $k$ to $k+1$, then the new committee selected by Phragm\'en's rule will have the same members as the old committee of size $k$, plus one additional candidate.} Below, we give an example of a simple class of profiles (overlapping parties) where the behavior of the two rules differs significantly. 

Consider the following profile:

\begin{center}
\begin{tikzpicture}
[yscale=0.43,xscale=0.78,voter/.style={anchor=south, yshift=-7pt}, party1/.style={fill=blue!10}, party2/.style={fill=green!10}, c/.style={anchor=south, yshift=1.5pt, inner sep=0}]
	\draw[party1] (0,0) rectangle (15,1);
	\draw[party1] (0,1) rectangle (15,2);
	\node at (7.5, 2.5) {$\cdots$};
	\draw[party1] (0,3) rectangle (15,4);
	\draw[party2] (10.0, 4) rectangle (20,5);
	\draw[party2] (10.0, 5) rectangle (20,6);
	\node at (15, 6.5) {$\cdots$};
	\draw[party2] (10.0, 7) rectangle (20,8);
	
	\draw[thick,<->] (0.1, -0.75) -- (9.9, -0.75);
	\node at (5.0, -2) {50\% of voters};
	
	\draw[thick,<->] (10.1, -0.75) -- (14.9, -0.75);
	\node at (12.5, -2) {25\% of voters};
	\node at (12.5, -3) {(consensus voters)};
			
	\draw[thick,<->] (15.1, -0.75) -- (19.9, -0.75);
	\node at (17.5, -2) {25\% of voters};
\end{tikzpicture}
\end{center}

We have two overlapping parties, marked with two different colors. The first 50\% of voters approve only the first party, the next 25\% of voters approve both parties, and the last 25\% of voters approve only the second party. Assume $k = 100$, and that each party has at least 100 candidates. In this example the Method of Equal Shares would pick 75\% of candidates from the first party, and 25\% from the second one.\footnote{Interestingly, this example can also be interpreted as ``laminar''. Here the approval sets of the voters form a laminar family. This is different from the case of laminar profiles, where laminarity is with respect to the family of sets $\{N_{c}\}_{c \in C}$ (see \Cref{prop:laminarity_of_set_system}).} Phragm\'en's rule, on the other hand, would assign roughly $\nicefrac{2}{3}$ seats to the first party and $\nicefrac{1}{3}$ seats to the second party (this would be also the outcome of PAV run for this profile). 

Intuitively, Phragm\'en's rule (and PAV) notices that the consensus voters are satisfied by any outcome, and implements proportionality with respect to the remaining voters who approve only one of the two parties. The proportion of these two groups of voters is $50/25$ and so this is also the proportion of the seats assigned to the respective parties by Phragm\'en's rule (and PAV). Equal Shares treats the consensus voters in a different way. It observes that they can be treated either as supporters of the first party, or as the supporters of the second one. Equal Shares makes the choice how to treat these voters based on the utilitarian criterion: indeed, in the above example the consensus voters are treated as supporters of the larger party, and with this assumption the solution returned by Equal Shares is also proportional (and has a higher total support from the voters).
	
\subsection{Proportionality with Respect to Disagreements}\label{sec:proportionality_and_disagreements}

In this subsection we give some informal insights on the two types of proportionality, implemented by priceable rules (here, for the sake of concreteness, we use the example of Equal Shares) and PAV.

Very informally speaking, in Equal Shares each voter has a voting power that is equal to the amount of money she holds. For example, consider two groups of voters, $S_1$ and $S_2$, and assume $|S_2| = \gamma |S_1|$: then, since group $S_2$ is $\gamma$ times larger than $S_1$, it initially has $\gamma$ times more money, and so it has $\gamma$ times more voting power. This intuitively explains why Equal Shares is proportional. Now, assume that there is a group $T$ of $\alpha$ committee members that the voters from $S_1$ and $S_2$ agree on (either all members of $S_1 \cup S_2$ approve them, or all members disapprove them). Since including a candidate in a committee costs each voter the same amount of money (apart form the corner cases, when the voter pays all her left money), the voters from $S_2$ must have paid for the candidates from~$T$ roughly $\gamma$ times more money than the voters from~$S_1$. Thus, the money that the groups~$S_1$ and~$S_2$ have excluding the money that they paid for candidates from $T$ is still in proportion $1:\lambda$. Thus, the relative voting power of the two groups did not change as a result of including in the committee candidates that the two groups agree on. Intuitively Equal Shares implements a kind of proportionality that focuses on how to fairly resolve \myemph{the conflicts of opinions} in a group of voters, and thus when dividing the seats between two groups of voters (relatively, leaving the remaining voters aside) it does not consider the candidates that the two groups agree on. 

This is different from PAV. For PAV the concept of a voting power of a voter is much harder to grasp, thus we will rather focus on its sequential variant.\footnote{In sequential PAV we start with and empty committee $W = \emptyset$ and in each of $k$ consecutive steps we add one candidate to $W$. The candidate to be added is the one that increases the PAV score of the committee $W$ most.} In this case when a voter already has $r$ representatives in a committee, her voting power (again, intuitively, and very informally) equals $\nicefrac{1}{r_i + 1}$. Initially, the voting power of the groups $S_1$ and $S_2$ from the previous example is in proportions $1 : \lambda$. That is, looking only at these groups alone, if a group $S_1$ deserves $x$ representatives, none of which is liked by $S_2$, then group $S_2$ deserves $\lambda x$ representatives---this is because $1/x$ = $\lambda/(\lambda x)$. However, including in the committee $\alpha$ candidates that the voters from $S_1 \cup S_2$ approve changes this ratio of the voting power dramatically. Now, if group $S_1$ deserves $x$ candidates who are not liked by $S_2$, then $S_2$ deserves roughly $y$ representatives such that $\frac{1}{x + \alpha} = \frac{\lambda}{y + \alpha}$. Thus, the relative voting power of groups $S_1$ and $S_2$ changes as a result of including some commonly agreed candidates in the committee. This is because PAV (and its sequential counterpart) implements a type of proportionality that is concerned with the absolute welfare of the voters rather than the one implemented by Equal Shares which rather looks at the disagreements among the voters.

\section{Proofs Omitted From the Main Text}

\begin{samepage}
\begin{reptheorem}{thm:phragmen_satisfies_laminar_proportionality}
\phragmensatisfieslaminarproportionality
\end{reptheorem}
\end{samepage}
	\begin{proof}
		(a) Let us start by providing the proof for Phragm\'en's sequential rule. We prove, by induction on $k$, the stronger statement that on every laminar instance $(P,k)$, Phragm\'en's sequential rule is laminar proportional, and that it terminates at time $t = k/n$.
		
		(i) If $P$ is unanimous, then Phragm\'en's sequential rule elects any $k$ candidates in an arbitrary order, one candidate after each time increment of $1/n$, so it terminates at $t = k/n$.
		
		(ii) Let $C^* = \{c \in C(P) : c \in \bigcap_{i \in N} A_i\}$ be the set of candidates who are approved by every voter, and suppose that $r := |C^*| \ge 1$, and that $(P|_{C \setminus C^*}, k-r)$ is laminar. Note that in the profile $P|_{C \setminus C^*}$, there does not exist a candidate who is approved by every voter. Start running Phragm\'en's sequential rule until time $t = r/n$. At this point, all candidates in $C^*$ will have been elected (in some order), and all $n$ voters have spent all their money. Thus, from this point onward, Phragm\'en's sequential rule behaves exactly as if it was run on $P|_{C \setminus C^*}$. Hence, it outputs $C^* \cup W'$, where $W'$ is a committee output by Phragm\'en's sequential rule for $P|_{C \setminus C^*}$. By the inductive hypothesis, $W'$ is a laminar proportional output for $(P|_{C \setminus C^*}, k-r)$, and Phragm\'en's sequential rule terminates after time $(k-r)/n$. Thus,  $C^* \cup W'$ is a laminar proportional output for $(P,k)$, and is obtained after time $r/n + (k-r)/n = k/n$.
		
		(iii) Suppose $(P_1, k_1)$ and $(P_2, k_2)$ are disjoint laminar instances with $P = P_1 + P_2$, $k = k_1 + k_2$ and such that $n/k = n_1/k_1 = n_2/k_2$. Run Phragm\'en's sequential rule on $P$ until time $k/n$. Since the subprofiles $P_1$ and $P_2$ are disjoint (different voters and different candidates), the set of candidates $W$ elected until this time point is exactly the union of candidates elected by Phragm\'en's rule run on the two profiles separately. Let $W_1$ be the candidates elected for $P_1$, and $W_2$ those for $P_2$. By the inductive hypothesis, for $j=1,2$, we have $|W_j| = k_j$ because the rule terminates in time $k_j/n_j = k/n$, and $W_j$ is laminar proportional. Hence, $W = W_1 \cup W_2$ is laminar proportional, and Phragm\'en's sequential rule at time $k/n$.
		
		The proof establishes that on any laminar instance, Phragm\'en's sequential rule terminates in time $k/n$.

		(b) We prove that Equal Shares is laminar proportional by induction on $k$. Let $(P,k)$ be a laminar instance. (i) If $P$ is unanimous, then the rule is obviously laminar proportional on $(P,k)$. (ii) After electing a unanimously approved candidate, each voter's budget is reduced to $1-\frac1k$. According to the definition of Equal Shares, if a coalition $V$ of voters wishes to buy a candidate, this costs each voter $n/(k|V|)$. Now change currency by multiplying all remaining budgets and purchase costs by the factor $k/(k-1)$. Then each voter's budget is 1, and buying a candidate costs $n/(k-1)$. Thus, we can see that from now on Equal Shares proceeds as if it was run on the instance $(P_{-c}, k-1)$. (iii) Suppose $(P_1, k_1)$ and $(P_2, k_2)$ are disjoint laminar instances with $P = P_1 + P_2$, $k = k_1 + k_2$ and such that $n/k = n_1/k_1 = n_2/k_2$. Then it is clear that Equal Shares, run on $(P,k)$, proceeds exactly as if run in parallel on the instances $(P_1, k_1)$ and $(P_2, k_2)$, since by $n/k = n_1/k_1 = n_2/k_2$ the prices in all three instances are the same.
		
		(c) In \Cref{ex:laminar-two-parties}, the committee $\{c_1, c_2, c_3, c_4\}$ is one of the optimal committees under PAV, but it is not a laminar proportional committee. Refer to the introduction for another instance with a unique PAV committee that fails laminar proportionality.
	\end{proof}

\begin{reptheorem}{thm:rule_x_core}
\ruleXandCoreApproximation
\end{reptheorem}
	
	\begin{proof}
	\textbf{Upper bound.}
Let us fix $\alpha = \frac{1}{2\log(2k)}$. Let $A$ be an approval based profile and let $W$ be a committee returned by Equal Shares for $A$.
Recall that $p = \nicefrac{n}{k}$ is the price that Equal Shares uses to construct~$W$.
Further, assume that there exists a group of voters $S \subseteq N$ and a set of candidates $T$ such that:
\begin{enumerate}
\item $|T| \leq k \cdot \frac{|S|}{n}$, and 
\item $\frac{1}{4\log(2k) + 1} \cdot |A_i \cap T| > \max(|A_i \cap W|, 1)$ for each $i \in S$.
\end{enumerate}
Thus, the groups $S$ and $T$ witness that Equal Shares does not satisfy $\frac{1}{4\log(2k) + 1} $-core property. We will derive a contradiction from this assumption, which is sufficient to prove the theorem.
First, observe that since $\frac{1}{4\log(2k) + 1} = \frac{\alpha}{\alpha+2}$ the second condition implies that for each $i \in S$:
\begin{align}\label{eq:core_cond_rule_x}
\alpha \cdot \big(|A_i \cap T| - |A_i \cap W|\big) >  2|A_v \cap W| \text{.}
\end{align}
Further, if $|A_i \cap W| = 0$, then $\frac{1}{4\log(2k) + 1} \cdot |A_i \cap T| > 1$ implies that $\alpha \cdot \big(|A_i \cap T| - |A_i \cap W|\big) > 1$. In any case, we can reformulate \eqref{eq:core_cond_rule_x} as:
\begin{align}\label{eq:core_cond_rule_x2}
\alpha \cdot \big(|A_i \cap T| - |A_i \cap W|\big) > |A_i \cap W| + 1 \text{.}
\end{align}

Now, consider a fixed subset $S' \subseteq S$, and let 
\begin{align*}
s_{S'} = \sum_{i \in S'} \big(|A_i \cap T| - |A_i \cap W|\big).
\end{align*}

Consider all time moments during the execution of the Method of Equal Shares for profile $A$ which satisfy the following condition: each voter from $S'$ has at least $p \cdot \frac{|T|}{s_{S'}}$ dollars left (the choice of this value will become clear later on). From the pigeonhole principle, during the whole execution of Equal Shares---in particular, in the time moments that we consider---there exists a not-yet selected candidate who is approved by at least $\nicefrac{s_{S'}}{|T|}$ voters from $S'$. Thus, during the considered time moments the amount that a single voter needs to pay for a representative is no greater than:
\begin{align*}
 p \cdot \frac{|T|}{s_{S'}}
\end{align*}
Further, since in all the considered time moments the voters from $S'$ have enough money to buy an additional candidate, the procedure of assembling the committee cannot stop. Thus, there must exists a moment when some voter from $S'$, call it $i$, is left with less than $p \cdot \frac{|T|}{s_{S'}}$ dollars. Let $t$ be the first such a moment. 
As we observed, before $t$ each voter from $S'$ pays at most $p \cdot \frac{|T|}{s_{S'}}$ for each representative.
Thus, at time $t$ voter $i$ has more than the following number of representatives:
\begin{align*}
r_i = \frac{1 - p \cdot \frac{|T|}{s_{S'}}}{p \cdot \frac{|T|}{s_{S'}}} = \frac{s_{S'}}{p|T|} - 1 \text{.}
\end{align*}
By \eqref{eq:core_cond_rule_x2} we have that:
\begin{align*}
|A_i \cap T| - |A_i \cap W| \geq  \frac{|A_i \cap W| + 1}{\alpha} > \frac{r_i + 1}{\alpha} = \frac{s_{S'}}{p|T|\alpha}
\end{align*}
Let $S'' = S' \setminus \{i\}$. Clearly, we have that:
\begin{align*} 
s_{S''} = s_{S'} - (|A_i \cap T| - |A_i \cap W|) \leq s_{S'}\left(1 - \frac{1}{p|T|\alpha}\right) \text{.}
\end{align*}
The above reasoning holds for each $S' \subseteq S$. Thus, we start with $S' = S$ and apply it recursively, in each iteration decreasing the size of $S'$ by $1$. After $\nicefrac{|S|}{2}$ iterations we are left with a subset~$S_{e}$ such that:
\begin{align*} 
s_{S_e} \leq s_{S}\left(1 - \frac{1}{p|T|\alpha}\right)^{\frac{|S|}{2}} \leq s_{S}\left(1 - \frac{1}{p|T|\alpha}\right)^{\frac{n|T|}{2k}} = s_{S}\left(1 - \frac{1}{p|T|\alpha}\right)^{\frac{p|T|}{2}}  < s_{S}\left(\frac{1}{e}\right)^{\frac{1}{2\alpha}} \text{.}
\end{align*}  
Since $s_{S_e} \geq \nicefrac{|S|}{2}$ (for each $i \in S$ it must hold that $A_i \cap T \neq \emptyset$) and $s_{S} \leq |S|\cdot |T|$ we get that:
\begin{align*}
\frac{|S|}{2} \cdot e^{\frac{1}{2\alpha}} < |S|\cdot|T| \text{,}
\end{align*}
which is equivalent to $e^{\frac{1}{2\alpha}} < 2|T|$ and, further, to $\alpha \geq \frac{1}{2\log(2|T|)} > \frac{1}{2\log(2k)}$. This gives a contradiction and completes the proof for the upper bound.

\paragraph{Lower bound.}
Fix $c < 1$. We will construct a preference profile for which Equal Shares returns a committee that violates the core property up to a multiplicative ratio of the order of $O\left(\frac{1}{\log^c(k)}\right)$. In the construction we will use an intuitively large constant value $L$, which we do not specify yet, but which will become clear later on. Further, we do not specify the committee size $k$ nor the number of voters $n$ yet, but we fix their ratio to $\frac{n}{k} = L$. Thus, intuitively, a group of $L$ voters will have exactly enough money to pay for a single representative in the winning committee. Finally, the construction is parameterized with a constant $x$; the committee size will be selected so that $x = O(\log(k))$.  

First, we introduce $x$ groups of voters, $S_1, \ldots, S_x$, and the set $R$ of $x^x$ candidates. The groups $S_1, \ldots, S_x$ are equal-sized: $|S_i| = L\cdot x^{x-1}$ for each $i \in [x]$. Each voter from group $S_i$ approves exactly $x^i$ candidates from $R$; further, the approval sets are constructed so that within each group $S_i$ each candidate from $R$ is approved by the same number of voters. Such a construction is possible if $L$ is sufficiently large. Observe that in the so far construction, each candidate from $R$ is approved by the following number of voters:
\begin{align*}
s_{x} = \frac{\sum_{i = 1}^x |S_i| \cdot x^i}{x^x} = \frac{L}{x} \cdot \sum_{i = 1}^x \cdot x^i = \frac{L(x^{x} - 1)}{x - 1} \text{.} 
\end{align*}
In particular, observe that $s_{x} > |S_i|$. Now, we introduce $\frac{s_{x}}{L}$ additional candidates who are all approved by the same set of $s_{x}$ voters---call the set of these candidates $R_1$; the voters who approve $R_1$ are those from $S_x$ and $s_{x} - |S_i|$ new voters, not belonging to $S = S_1 \cup \ldots \cup S_x$. The idea is that Equal Shares run with price $p = \nicefrac{n}{k}$ first selects the candidates from $R_1$. Since $\frac{n}{k} = L$, after the candidates from $R_1$ are selected all the voters from $S_x$ will have no money left. 

We will construct the remaining part of the profile inductively. Assume that after some first steps of the execution of Equal Shares the voters from groups $S_{\ell+1}, S_{\ell+2}, \ldots, S_x$ are left without money, for some $\ell \in \{1, \ldots, x-1\}$. The number of approvals that each candidate from $R$ gets from the voters from $S_1 \cup \ldots \cup S_{\ell}$ is equal to:
\begin{align*}
s_{\ell} = \frac{\sum_{i = 1}^\ell |S_i| \cdot x^i}{x^x} = \frac{L}{x} \cdot \sum_{i = 1}^{\ell} x^i = \frac{L(x^{\ell} - 1)}{x - 1} \text{.} 
\end{align*}
Observe that for $\ell < x$ we have that $s_{\ell} < |S_i|$. Now we introduce additional candidates to our construction as follows. First, we take some $s_{\ell}$ voters from $S_{\ell}$ and introduce $\frac{s_{\ell}}{L}$ candidates who are approved by these voters. Equal Shares will select these candidates in the next step and leave these~$s_{\ell}$ voters without money. We remove these $s_{\ell}$ voters from~$S_{\ell}$ and we repeat the procedure until in $S_{\ell}$ there are less than $s_{\ell}$ voters. When in $S_{\ell}$ there are $r_{\ell} < s_{\ell}$ voters with money, then we proceed as follows. We add $s_{\ell} - r_{\ell}$ new voters and make these new voters and those which were left from $S_{\ell}$ approve some $\frac{s_{\ell}}{L}$ new candidates. Equal Shares will select these candidates and will eventually leave all voters from $S_{\ell}$ without money. Thus, the inductive step can be repeated until no voter from $S = S_1 \cup \ldots \cup S_x$ has money left. 

Now, observe that each voter from $S_\ell$ got exactly $\frac{s_{\ell}}{L}$ representatives. At the same time, the voters from $S$ could afford to buy candidates from $R$, where each voter from $S_\ell$ would get exactly $x^\ell$ candidates. Thus, the committee returned by Equal Shares approximates the core property by no better than:
\begin{align*}
\frac{x^\ell}{\frac{s_{\ell}}{L}} = \frac{x^\ell}{\frac{x^{\ell} - 1}{x - 1}} = \frac{x^{\ell+1} - x^\ell}{x^{\ell} - 1} \geq x-1 \text{.}
\end{align*}

Now it remains to show that in our construction $x$ is in the order of $\log(k)$. For that we first assess the number of voters $n$. Observe that during our interactive procedure, for each $\ell < x$ we have added a set of voters that is smaller than $S_\ell$. To $S_x$ we added $s_{x} - |S_i|$ voters. Thus, in total $n$ can be upper bounded by:
\begin{align*}
n < 2 |S| + s_{x} = 2L x^x + \frac{L(x^{x} - 1)}{x - 1}
\end{align*}
Consequently, since $L = \frac{n}{k}$ we have that:
\begin{align*}
k < 2 x^x + \frac{(x^{x} - 1)}{x - 1} < 3 x^x = 3\cdot 2^{x\log(x)} \text{.}
\end{align*}
For sufficiently large $x$ we have that $k < 2^{x^{1/c}}$, which implies that $x^{1/c} > \log(k)$, and so that $x > \log^c(k)$. This completes the proof.
\end{proof}

\begin{proposition}
\label{prop:x-not-core-priceable}
The Method of Equal Shares violates the core subject to $\mathcal{P}_{\mathrm{price}}$, where  $((P,k), W') \in \mathcal{P}_{\mathrm{price}}$ if $W'$ is priceable.
\end{proposition}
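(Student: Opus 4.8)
The plan is to exhibit a single instance on which Rule X returns a committee $W$ that admits an allowed deviation $(S,T)$ in the sense of \Cref{def:core_relax} with $\mathcal{P}=\mathcal{P}_{\mathrm{price}}$. The guiding principle is that the positive result \Cref{thm:rulex_relaxed_core} for $\mathcal{P}_{\mathrm{price\text{-}eq}}$ rests entirely on the equal-payment condition: there each candidate $c$ of the deviation carries a single per-voter price $\phi_c$, and Rule X, which always buys a candidate minimizing the maximum per-voter payment $q$, can therefore never be forced to charge a voter more than the cheapest remaining $\phi_c$. To defeat the relaxed property it thus suffices to produce a deviation $T$ that is priceable but \emph{not} priceable with equal payments, so that its feasibility genuinely relies on some ``generous'' coalition members paying a disproportionate share for shared candidates, subsidizing ``overloaded'' members. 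Rule X's min-$q$ rule will refuse to reproduce such a subsidized payment pattern and will instead spread the coalition's budget more evenly, leaving every member strictly worse represented than $T$ does; verifying that $T$ is not price-eq simultaneously confirms that subsidies are essential and that there is no conflict with \Cref{thm:rulex_relaxed_core}.

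Concretely, I would arrange $S$ so that Rule X drains each member's budget on a collection of ``fair'' (low-$q$) candidates that become $W\cap A_i$ and give member $i$ some number $r_i$ of representatives, while a separate, overlapping pool of candidates $T$, approved by $S$ in an asymmetric pattern, can be afforded by $S$ only through unequal payments and delivers each member strictly more than $r_i$ representatives. Auxiliary candidates (and, if needed, voters outside $S$) are added so that during the run of Rule X the candidates of $T$ never attain the minimum value of $q$ and are consequently never purchased, and so that the price $p=\nicefrac{n}{k}$ and the order of purchases force exactly the counts $r_i$. One then checks the three requirements of an allowed deviation: (i) $|T|\le k\cdot\nicefrac{|S|}{n}$, arranged to hold with equality so that the coalition exactly affords $T$; (ii) $|A_i\cap T|>|A_i\cap W|$ for every $i\in S$; and (iii) an explicit, necessarily unequal, price system supporting $T$ in $(P|_S,|T|)$ in which every candidate receives exactly the price and every voter spends their whole budget, so that no unspent money remains and the no-blocking clause of priceability is vacuous.

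The main obstacle is requirement (ii): because Rule X's fair greedy tends to represent every voter who still approves an affordable candidate, small symmetric instances fail, since Rule X typically ``accidentally'' covers some coalition member to the same extent as $T$ (for example it buys two shared candidates and a voter approving exactly those two thereby matches its allocation under $T$). The way around this is an asymmetric, multi-scale design in the spirit of the lower-bound construction of \Cref{thm:rule_x_core}: there a voter of type $S_\ell$ receives only $\nicefrac{s_\ell}{L}=\nicefrac{(x^\ell-1)}{(x-1)}$ representatives from Rule X but approves $x^\ell$ candidates of the pool $R$, so the deviation $T=R$ strictly (indeed, by a factor close to $x$) improves every member at once, with no member coincidentally saturated. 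Reusing that construction, priceability of $T=R$ reduces, since the total demand $|R|\cdot p$ equals the coalition's total budget $|S|$, to feasibility of a transportation problem between coalition voters (each of capacity $1$) and the candidates of $R$ (each of demand $p=L$); this follows from a Hall-type expansion estimate that the balanced approval design can be made to guarantee, while the fact that it is not price-eq is witnessed by an $S_x$-voter being unable to afford an equal share on all $x^x$ candidates it approves. Establishing this expansion estimate, and pinning down the exact run of Rule X so that the claimed $r_i$ and the untouched status of $R$ are certain, is the most delicate and calculation-heavy part of the argument.
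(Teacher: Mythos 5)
Your guiding idea is exactly the one the paper uses: the only way to defeat the core subject to $\mathcal{P}_{\mathrm{price}}$ while respecting \Cref{thm:rulex_relaxed_core} is to exhibit a deviation that is priceable only via \emph{unequal} payments, i.e.\ one in which some coalition members subsidize shared candidates for others. Where you diverge is in the concrete instance. The paper constructs a small, fully explicit profile (160 voters, 36 candidates, $k=20$) in which Rule X elects $C_1$, a coalition of 128 voters deviates to a 16-candidate set $C_2$, every deviator strictly gains (7 reps become 8, or 1 becomes 2), and an explicit unequal price system is written down: for a candidate of $C_{2a}$, the 40 voters of $V_1'$ pay $\nicefrac18$ each while the 6 relevant $V_3$ voters pay $\nicefrac12$ each, totalling $8=\nicefrac nk$. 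Everything is verified by direct arithmetic. You instead propose to recycle the multi-scale lower-bound profile from \Cref{thm:rule_x_core} and take $T=R$. That route is viable: there $|R|=k|S|/n$ holds with equality, every $S_\ell$-voter has $x^\ell$ approved candidates in $R$ but only $(x^\ell-1)/(x-1)<x^\ell$ representatives under Rule X, and the regularity of the design makes $R$ priceable. It buys a more dramatic separation (a near-factor-$x$ improvement for every deviator) at the cost of a much heavier construction.

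The gap in your write-up is that the decisive verification is deferred rather than done. You invoke a ``Hall-type expansion estimate'' for the transportation problem and flag it, together with pinning down the run of Rule X, as the delicate unfinished part. In fact no expansion argument is needed: by the regularity of the design (each candidate of $R$ is approved by exactly $Lx^{\ell-1}$ voters of $S_\ell$), letting each $S_\ell$-voter split her dollar uniformly, paying $x^{-\ell}$ to each of her $x^\ell$ approved candidates, gives every candidate exactly $\sum_{\ell=1}^{x} Lx^{\ell-1}\cdot x^{-\ell}=L$ and exhausts every budget, so condition (3) of the price-system definition is vacuous; this is simultaneously the witness that $T$ is priceable and (being non-constant across voters) that it is not priceable with equal payments. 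You should also state explicitly that the run of Rule X on that profile, and hence the representative counts $r_i=(x^\ell-1)/(x-1)$, are exactly those already established in the proof of \Cref{thm:rule_x_core}, rather than re-deriving them. With those two points written out your argument closes; as submitted it is a correct plan with its key steps still open, whereas the paper's proof is complete because it trades generality for a small instance that can be checked line by line.
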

\begin{proof}
Consider the following profile. There are 160 voters divided into 3 groups, $V_1$, $V_2$ and $V_3$, with $|V_1| = |V_2| = 56$, and $|V_3| = 48$. There are two groups of candidate, $C_1$ and $C_2$, with $|C_1| = 20$, and $|C_2| = 16$. We set $k = 20$, thus $\nicefrac{n}{k} = 8$. 

First, let us describe how we construct voters' preferences over the first group of candidates $C_1 = \{c_1, \ldots, c_{20}\}$.  Voters from $V_1$ approve $c_1, \ldots, c_7$; voters from $V_2$ approve $c_8, \ldots, c_{14}$; finally we divide $V_3$ into 6 equal-size groups (the size of each such group is 8), and we let each such group approve one of the remaining 6 candidates from $C_1$. Intuitively, $C_1$ will be the outcome of Equal Shares for the constructed preference profile. 

Next, let us describe the preferences over the second group of candidates $C_2$. We start by dividing $C_2$ into two equal-size subgroups, $C_{2a}$ and $C_{2b}$, with $|C_{2a}| = |C_{2b}| = 8$. From $V_1$ we pick a subset $V_1'$ of 40 voters and let the voters from $V_1'$ approve all candidates from $C_{2a}$. Analogously, from $V_2$ we select a  subset $V_2'$ of 40 voters and let these approve all candidates from $C_{2b}$. Finally, we divide the voters from $V_3$ into 8 equal-size groups (each of size 6) and let each group approve one candidate from $C_{2a}$ and one candidate from $C_{2b}$ (different candidates for different groups). This completes the construction. 

Indeed, it is easy to verify that for this profile Equal Shares will pick $C_1$. On the other hand, $C_2$ is a priceable deviation for the voters from $V' = V1' \cup V_2' \cup V_3$. To see that, first note that $|V'| = 128$, and that $|V'|/|V| \cdot k = 16 = |C_2|$. Further, in $C_2$ each voter from $V_1'$ and from $V_2'$ would have 8 representatives (in $C_1$, they would have only 7). Each voter from $V_3$ will have 2 representatives in $C_2$ (while in $C_1$ they would have only one). Finally, it is easy to verify that $C_2$ is priceable with the price equal $8$. For example, for each candidate from $C_{2a}$  each voter from $V_1'$ pays $\nicefrac{1}{8}$ and the voters from the respective subgroup of $V_3$ pay $\nicefrac{1}{2}$. Thus, each candidate gets:
\begin{align*}
\frac{1}{8} \cdot 40 + 6 \cdot \frac{1}{2} = 5 + 3 = 8 \text{.}
\end{align*}
This completes the proof.
\end{proof}
\clearpage

\end{document}